\newtheorem{problem}{Problem}
\newtheorem{definition}{Definition}
\newtheorem{theorem}{Theorem}
\newtheorem{lemma}{Lemma}
\newtheorem{corollary}{Corollary}
\newtheorem{remark}{Remark}
\begin{document}

\title{Distributed Mean-Field Density Estimation for Large-Scale Systems}
\author{Tongjia Zheng$^{1}$, \IEEEmembership{Graduate Student Member, IEEE}, Qing Han$^{2}$ and Hai Lin$^{1}$, \IEEEmembership{Senior Member, IEEE}
\thanks{*This work was supported by the National Science Foundation under Grant No. IIS-1724070, CNS-1830335, IIS-2007949.}
\thanks{$^{1}$Tongia Zheng and Hai Lin are with the Department of Electrical Engineering, University of Notre Dame, Notre Dame, IN 46556, USA. {\tt\small  tzheng1@nd.edu, hlin1@nd.edu.}} 
\thanks{$^{2}$Qing Han is with the Department of Mathematics, University of Notre Dame, Notre Dame, IN 46556, USA. {\tt\small Qing.Han.7@nd.edu.}} 
}

\maketitle

\thispagestyle{empty}
\pagestyle{empty}

\begin{abstract}
This work studies how to estimate the mean-field density of large-scale systems in a distributed manner.
Such problems are motivated by the recent swarm control technique that uses mean-field approximations to represent the collective effect of the swarm, wherein the mean-field density (especially its gradient) is usually used in feedback control design.
In the first part, we formulate the density estimation problem as a filtering problem of the associated mean-field partial differential equation (PDE), for which we employ kernel density estimation (KDE) to construct noisy observations and use filtering theory of PDE systems to design an optimal (centralized) density filter.
It turns out that the covariance operator of observation noise depends on the unknown density.
Hence, we use approximations for the covariance operator to obtain a suboptimal density filter, and prove that both the density estimates and their gradient are convergent and remain close to the optimal one using the notion of input-to-state stability (ISS).
In the second part, we continue to study how to decentralize the density filter such that each agent can estimate the mean-field density based on only its own position and local information exchange with neighbors.
We prove that the local density filter is also convergent and remains close to the centralized one in the sense of ISS.
Simulation results suggest that the centralized suboptimal density filter is able to generate convergent density estimates, and the local density filter is able to converge and remain close to the centralized filter.
\end{abstract}

\begin{IEEEkeywords}
Distributed density estimation, PDE systems, large-scale systems, Kalman filtering
\end{IEEEkeywords}

\section{Introduction}
Recent years have seen increased research activities in the study of large-scale agent systems, which treat the swarm as a continuum and use its mean-field density to represent the collective effect of all the agents \cite{lasry2007mean, ferrari2016distributed, yang2018mean, elamvazhuthi2019mean}.
More recently, the technique of mean-field feedback control has been proved to be promising for designing control laws for these mean-field models.
In this approach, the global density and its gradient of the swarm is fed back into the algorithms to form a closed-loop framework, which provides extra stability and robustness properties in the macroscopic level \cite{eren2017velocity, bandyopadhyay2017probabilistic, krishnan2018distributed, elamvazhuthi2018pde, zheng2020transporting}.
Considering the scalability issue of the control algorithms and the privacy issue of data sharing, it is desirable to implement the control strategy in a fully distributed manner using only local observations and information exchange.
This motivates the distributed mean-field density (and gradient) estimation problem.

Density estimation is a fundamental problem in statistics and has been studied using various methods, including parametric and nonparametric methods.
Parametric algorithms assume that the samples are drawn from a known parametric family of distributions and try to determine the parameters to maximize the likelihood using the samples \cite{silverman1986density}. 
Several distributed parametric techniques exist in the literature for estimating a distribution from a data set. 
For example, in \cite{nowak2003distributed, gu2008distributed}, the unknown density is represented by a mixture of Gaussians, and the parameters are determined by a combination of expectation maximization (EM) algorithms and consensus protocols.
Performance of such estimators rely on the validity of the assumed models, and therefore they are unsuitable for estimating an evolving density. 
In non-parametric approaches, the data are allowed to speak for themselves in determining the density estimate. 
\textit{Kernel density estimation (KDE)} \cite{silverman1986density} is probably the most popular algorithm in this category.
A distributed KDE algorithm is presented in \cite{hu2007distributed}, which uses an information sharing protocol to incrementally exchange kernel information between sensors until a complete and accurate approximation of the global KDE is achieved by each sensor.
All these distributed algorithms are proposed for estimating a static/stationary density.
To the best of our knowledge, distributed estimation for dynamic/non-stationary densities remains largely unexplored.

In the density estimation problem of large-scale systems, the agents' dynamics are usually available because they are built by task designers and follow the designed control laws.
This motivates the problem of how to take advantage of the available dynamics to improve the density estimation.
A candidate solution is to consider a filtering problem to estimate the distribution of all the agents' states.
There is a large body of literature for the filtering problem, ranging from the celebrated Kalman filters and their variants \cite{julier2004unscented} to the more general Bayesian filters and their Monte Carlo approaches, also known as particle filters \cite{chen2003bayesian}.
Motivated by the development of sensor networks, distributed implementation for these filters has also been extensively studied \cite{olfati2007distributed, bandyopadhyay2014distributed, battistelli2016stability}.
The general procedure of these distributed filtering algorithms is that each agent implements a local filter using its local observation, and exchanges its observation or/and estimates with nearby agents to gradually improve its estimate of the global distribution.
However, stability analysis and numerical implementation are difficult especially when the agents' dynamics are nonlinear and time-varying.
Also, it is unclear whether the obtained distribution is indeed the mean-field density of the agents.

In summary, considering the requirements of decentralization, convergence, and efficiency, existing methods are unsuitable for estimating the time-varying/non-stationary density of large-scale agent systems in a distributed manner. 
This motivates us to propose a distributed, dynamic, and scalable algorithm that can perform online and use only local observation and information exchange to obtain convergent density estimates. 
We tackle this problem by formulating the density estimation problem as a state estimation problem of the PDE system that governs the spatial and temporal evolution of the mean-field density and using KDE to construct noisy observations for this PDE system.
We show that the observation noise is approximately Gaussian but its covariance operator depends on the unknown density.
Hence, we use infinite-dimensional Kalman filters for PDE systems to design a (centralized) density filter and prove its stability and optimality when approximations are used for the unknown covariance operator.
Then, we decentralize the density filter such that each agent estimates the mean-field density based on only its own position and local information exchange with neighbors, and prove that the local density filter has comparable performance with centralized one in the sense of ISS.
Our contribution is summarized as follows: (i) we present a centralized density filter and prove that the density estimates and their gradient are both convergent; (ii) we present a distributed density filter such that each agent estimates the global density using only its own position and local communication; (iii) all the results hold even if the agents' dynamics are nonlinear and time-varying.

Some of the results have been reported in our previous work \cite{zheng2020pde, zheng2021distributed}.
The difference is clarified as follows.
In \cite{zheng2020pde}, we presented a density filter and proved its stability assuming that the filter is well-posed in some Hilbert space.
In this work, we will rigorously study the solution properties (such as well-posedness and mass conservation) of the density filter and use the obtained properties to improve the stability results.
In particular, with the new results we will be able to prove that the gradient of the density estimates is also convergent.
This is a significant property considering that the gradient is almost always required in mean-field feedback control.
In \cite{zheng2021distributed}, we presented a distributed density filter, which can be seen as a special case of the distributed density filter in this work.

The rest of the paper is organized as follows.
Section \ref{section:preliminaries} introduces some preliminaries. 
Problem formulation is given in Section \ref{section:problem formulation}. 
Section \ref{section:centralized density estimation} and \ref{section:distributed density estimation} are our main results, in which we design centralized and distributed density filters and then study their stability and optimality.
Section \ref{section:numerical implementation} discusses their numerical implementation.
Section \ref{section:simulation} performs an agent-based simulation to verify the effectiveness of the density filters.
Section \ref{section:conclusion} summarizes the contribution and points out future research.

\section{Preliminaries} \label{section:preliminaries}

\subsection{Notations}\label{section:notation}
Consider $f:E\to\mathbb{R}$, where $E\subset\mathbb{R}^n$ is a measurable set.
Denote $L^2(E)=\{f|\|f\|_{L^2(E)}<\infty\}$, endowed with the norm $\|f\|_{L^2(E)}=(\int_E|f(x)|^2dx)^{1/2}$ and inner product $\langle f,g\rangle_{L^2(E)}=\int_Ef(x)g(x)dx$. 
Let $\partial_if=\frac{\partial f}{\partial x_i}$ be the weak derivatives of $f$.
Denote $H^1(E)=\{f|\|f\|_{H^1(E)}<\infty\}$, endowed with the norm $\|f\|_{H^1(E)}=\|f\|_{L^2(E)}+\sum_{i=1}^n\|\partial_if\|_{L^2(E)}$ and inner product $\langle f,g\rangle_{H^1(E)}=\langle f,g\rangle_{L^2(E)}+\sum_{i=1}^n\langle \partial_if,\partial_ig\rangle_{L^2(E)}$.
We will omit $E$ in the notations of norms and inner products when it is clear.

Let $\Omega$ be a bounded and connected $C^1$-domain in $\mathbb R^n$ and $T>0$ be a constant. 
Denote by $\partial\Omega$ the boundary of $\Omega$.
Set $\Omega_T=\Omega\times(0,T)$. 

\subsection{Infinite-Dimensional Kalman Filters} \label{section:Kalman filter}
The Kalman filter is an algorithm that uses the system's model and sequential measurements to gradually improve its state estimate \cite{kalman1961new}. 
We present its extension to PDE systems developed in \cite{bensoussan1971filtrage}. 

Let $V$, $H$ be Hilbert spaces with norms $\|\cdot\|_V,\|\cdot\|_H$ and inner products $\langle\cdot,\cdot\rangle_V,\langle\cdot,\cdot\rangle_H$, respectively.
$V$ is dense in $H$ and $\|v\|_H\leq c\|v\|_V$ for all $v\in V$.
$V^*$ is the dual of $V$ and hence $V\subset H\subset V^*$.
Let $E$, $F$ also be Hilbert spaces.
Consider
\begin{align}\label{eq:PDE system}
&\left\{
\begin{aligned}
&\frac{du}{dt}+\mathcal{A}(t)u(t)=f(t)+\mathcal{B}(t)\xi(t), \quad t\in(0,T) \\
&u(0)=u_{0}+\zeta,
\end{aligned}
\right.\\
&\quad y(t)=\mathcal{C}(t)u(t)+\eta(t),
\end{align}
where
$$
\mathcal{A}(\cdot)\in L^\infty(0,T;\mathscr{L}(V,V^*)) \text{ and }\exists\alpha>0,\lambda\geq0 \text{ such that } 
$$
$$
\langle\mathcal{A}(t)z,z\rangle_H+\lambda\|z\|_H^{2}\geq\alpha\|z\|_V^{2},\forall z\in V, t\in (0,T),
$$
$$
\mathcal{B}(\cdot)\in L^\infty(0,T;\mathscr{L}(E,H)),\quad \mathcal{C}(\cdot)\in L^\infty(0,T;\mathscr{L}(H,F)),
$$
$$
f(\cdot)\in L^2(0,T;H),\quad u_{0},\zeta\in H
$$
$$
\xi(\cdot)\in L^2(0,T;E),\quad \eta(\cdot)\in L^2(0,T;F).
$$
These standard conditions ensure existence and uniqueness of solutions of \eqref{eq:PDE system} in the Sobolev space 
$$
W(0,T)=\{f\mid f\in L^2(0,T;V),\frac{df}{dt}\in L^2\left(0,T;V^*\right)\}.
$$

Let $\Phi:=H\times L^2(0,T;E)\times L^2(0,T;F)$ be the space of triples $\{\zeta,\xi(\cdot),\eta(\cdot)\}$.
Equip $\Phi$ with a cylindrical probability $\mu$ such that
\begin{equation*}
\operatorname{E}[\mu]=\left(\begin{array}{l}
0 \\
0 \\
0
\end{array}\right),\quad
\operatorname{Cov}[\mu]=\left(\begin{array}{ccc}
\mathcal{P}_{0} & 0 & 0 \\
0 & \mathcal{Q}(t) & 0 \\
0 & 0 & \mathcal{R}(t)
\end{array}\right)
\end{equation*}
where $\mathcal{P}_0\in\mathscr{L}(H,H)$, $\mathcal{Q}(\cdot)\in L^{\infty}(0,T;\mathscr{L}(E;E))$, and $\mathcal{R}(\cdot)\in L^{\infty}(0,T;\mathscr{L}(F;F))$ are all self-adjoint and positive semi-definite covariance operators, and $\mathcal{R}(\cdot)$ is invertible.

\begin{theorem}\label{thm:infinite-dimensional Kalman filter}
\cite{bensoussan1971filtrage} The minimum mean squared error estimator of $u$ is given by the unique solution of
\begin{align*}
&\frac{d\hat{u}}{dt}+\mathcal{A}(t)\hat{u}(t)=f(t)+\mathcal{P}(t)\mathcal{C}^*(t)\mathcal{R}^{-1}(t)\big(y(t)-\mathcal{C}(t)\hat{u}(t)\big),\\
&\hat{u}(0) = u_0,
\end{align*}
which satisfies $\hat{u}\in W(0,T)$, where $\mathcal{P}(t)\in\mathscr{L}(H;H) $ is the family of operators uniquely defined by
\begin{equation}\label{eq:regularity of P}
\left\{
\begin{aligned}
&\text{If }\varphi\in W(0,T)\text{ and }-\frac{d\varphi}{dt}+\mathcal{A}^*(t)\varphi\in L^2(0,T;H),\\
&\text{then }\mathcal{P}\varphi\in W(0,T),
\end{aligned} \right.
\end{equation}
and
\begin{equation*}
\left\{
\begin{aligned}
&\frac{d\mathcal{P}}{dt}\varphi+\mathcal{A}\mathcal{P}\varphi+\mathcal{P}\mathcal{A}^*\varphi+\mathcal{P}\mathcal{C}^*\mathcal{R}^{-1}\mathcal{C}\mathcal{P}\varphi = \mathcal{B}\mathcal{Q}\mathcal{B}^*\varphi, \\
&\mathcal{P}(0) = \mathcal{P}_0,\text{ for $\varphi$ as in \eqref{eq:regularity of P}},
\end{aligned} \right.
\end{equation*}
\end{theorem}
where $\frac{d\mathcal{P}}{dt}\varphi:=\frac{d}{dt}(\mathcal{P}\varphi)-\mathcal{P}\frac{d\varphi}{dt}$.

\subsection{Input-to-state stability}
Input-to-state stability (ISS) is a stability notion to study nonlinear control systems with external inputs \cite{sontag1989smooth}. 
Its extension to infinite-dimensional systems is developed in \cite{dashkovskiy2013input}.
Let $\left(X,\|\cdot\|_X\right)$ and $\left(U,\|\cdot\|_{U}\right)$ be the state space and the input space, endowed with norms $\|\cdot\|_X$ and $\|\cdot\|_{U}$, respectively.
Denote $U_c=PC(\mathbb{R}_+;U)$, the space of piecewise right-continuous functions from $\mathbb{R}_+$ to $U$, equipped with the sup-norm.
Define the following classes of comparison functions:
\begin{align*}
    \mathscr{P}&:=\{\gamma: \mathbb{R}_+ \to \mathbb{R}_+ \mid \gamma\text{ is continuous}, \gamma(0)=0,\\
    &\qquad\text{and }\gamma(r)>0,\forall r>0\}\\
    \mathscr{K} &:=\{\gamma\in\mathscr{P}|\gamma\text{ is strictly increasing}\}\\
    \mathscr{K}_{\infty} &:=\{\gamma\in\mathscr{K}|\gamma\text{ is unbounded}\}  \\
    \mathscr{L} &:=\{\gamma:\mathbb{R}_+\to\mathbb{R}_+|\gamma\text{ is continuous and strictly } \\
    &\qquad\text{decreasing with }\lim_{t\to\infty}\gamma(t)=0\}  \\
    \mathscr{KL} &:=\{\beta:\mathbb{R}_+\times\mathbb{R}_+\to\mathbb{R}_+|\beta(\cdot,t)\in\mathscr{K},\forall t\geq0, \\
    &\qquad\beta(r,\cdot)\in\mathscr{L},\forall r>0\}.
\end{align*}

Consider a control system $\Sigma=(X,U_c,\phi)$ where $\phi:\mathbb{R}_+\times\mathbb{R}_+\times X\times U_c\to X$ is a transition map.
Here, $\phi(t, s, x, u)$ denotes the system state at time $t \in \mathbb{R}_+$, if its state at time $s \in \mathbb{R}_+$ is $x \in X$ and the input $u \in U_c$ is applied.


\begin{definition} \label{dfn:(L)ISS}
$\Sigma$ is called \textit{locally input-to-state stable (LISS)}, if $\exists\rho_x,\rho_u>0$, $\beta\in\mathscr{KL}$, and $\gamma\in\mathscr{K}$, such that
\begin{equation*}\label{eq:(L)ISS}
    \left\|\phi\left(t,t_0,\phi_0, u\right)\right\|_X \leq \beta\left(\left\|\phi_0\right\|_X, t-t_0\right)+\gamma\Big(\sup_{t_0\leq s\leq t}\|u(s)\|_{U}\Big)
\end{equation*}
holds $\forall\phi_0:\|\phi_0\|_X\leq\rho_x,\forall u\in U_c:\|u\|_{U_c}\leq\rho_u$ and $\forall t \geq t_0$.
It is called \textit{input-to-state stable (ISS)}, if $\rho_x=\infty$ and $\rho_u=\infty$.
\end{definition} 

To emphasize the state space and input space, we will sometimes say $\|\phi(t)\|_X$ is (L)ISS to $\|u(t)\|_U$.
To verify the ISS property, Lyapunov functions can be exploited. 

\begin{definition}\label{dfn:(L)ISS-Lyapunov function}
A continuous function $V:\mathbb{R}_+\times D \to \mathbb{R}_+, D\subset X$ is called an \textit{LISS-Lyapunov function} for $\Sigma$, if $\exists\rho_x, \rho_u>0$, $\psi_{1},\psi_{2}\in\mathscr{K}_{\infty},\chi\in\mathscr{K}$, and $W\in\mathscr{P}$, such that:
\begin{itemize}
    \item[(i)] $\psi_{1}\left(\|x\|_X\right) \leq V(t, x) \leq \psi_{2}\left(\|x\|_X\right), ~\forall t \in \mathbb{R}_+, \forall x \in D$
    \item[(ii)] $\forall x \in X:\|x\|_X \leq \rho_x, \forall \xi \in U:\|\xi\|_{U} \leq \rho_u$ it holds:
    \[
    \|x\|_X \geq \chi(\|\xi\|_{U}) \Rightarrow \dot{V}_u(t,x) \leq-W(\|x\|_X), ~ \forall t\in\mathbb{R}_+
    \]
    for all $u \in U_c:\|u\|_{U_c} \leq \rho_u$ with $u(0)=\xi$, where the derivative of $V$ corresponding to the input $u$ is given by
    \[
    \dot{V}_{u}(t,x)=\varlimsup_{\delta \to+0} \frac{V(t+\delta,\phi(t+\delta,t,x,u))-V(t,x)}{\delta}.
    \]
\end{itemize}
If $D=X, \rho_x=\infty$ and $\rho_u=\infty,$ then the function $V$ is called an \textit{ISS-Lyapunov function}.
\end{definition}

An (L)ISS-Lyapunov function theorem for time-invariant systems is given in \cite{dashkovskiy2013input}.
The following theorem is an extension to time-varying systems, whose proof can be found in \cite{zheng2020transporting}.

\begin{theorem}\label{thm:(L)ISS-Lyapunov function}
\cite{zheng2020transporting} Let $\Sigma=(X,U_c,\phi)$ be a control system, and $x\equiv0$ be its equilibrium point.
If $\Sigma$ possesses an (L)ISS-Lyapunov function, then it is (L)ISS.
\end{theorem}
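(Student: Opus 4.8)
The plan is to reduce the statement to a scalar comparison argument along trajectories, exactly as in the finite-dimensional ISS-Lyapunov theorem, and then to transfer the resulting scalar bound back to a bound on $\|x\|_X$ via the sandwich estimate (i). Fix an input $u\in U_c$ with $\|u\|_{U_c}\le\rho_u$ and an initial state $\phi_0$ with $\|\phi_0\|_X\le\rho_x$, write $x(t)=\phi(t,t_0,\phi_0,u)$ and $v(t)=V(t,x(t))$, and let $\chi,W,\psi_1,\psi_2$ be as in Definition~\ref{dfn:(L)ISS-Lyapunov function}. The quantity $v$ is real-valued and, along the trajectory, its upper-right Dini derivate is $\dot V_u(t,x(t))$, so no property of the (possibly infinite-dimensional) space $X$ beyond (i)--(ii) will be used.

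First I would manufacture a scalar decay rate. On any set of times where $\|x(t)\|_X\ge\chi(\|u\|_{U_c})$, condition (ii) applies with $u(t)$ in the role of $\xi$ and gives $\dot v(t)\le -W(\|x(t)\|_X)$. Since $W\in\mathscr{P}$ need not be monotone, here one invokes the standard fact that a continuous positive-definite function admits a class-$\mathscr K$ minorant on any bounded interval; composing such a minorant with $\psi_2^{-1}$ and using $\|x(t)\|_X\ge\psi_2^{-1}(v(t))$ yields $\dot v(t)\le-\alpha(v(t))$ with $\alpha\in\mathscr K$, valid whenever $\|x(t)\|_X\ge\chi(\|u\|_{U_c})$. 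In the LISS case the needed bounded range for $\|x(t)\|_X$ is simply $[0,\rho_x]$; in the ISS case one argues by a continuity/connectedness bootstrap that as long as $v$ has not exceeded $v(t_0)=V(t_0,\phi_0)$ the state stays in the ball of radius $\psi_1^{-1}(\psi_2(\|\phi_0\|_X))$, so the minorant applies there and hence $v$ indeed cannot increase on that set, closing the argument and ruling out finite-time blow-up.

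Next I would run the comparison lemma: the scalar ODE $\dot y=-\alpha(y)$, $y(t_0)=v(t_0)$, has solution $y(t)=\beta_V(v(t_0),t-t_0)$ with $\beta_V\in\mathscr{KL}$, and the comparison principle for Dini derivatives gives $v(t)\le\beta_V(v(t_0),t-t_0)$ on every interval on which $\|x(\cdot)\|_X\ge\chi(\|u\|_{U_c})$. A case analysis then upgrades this to all $t\ge t_0$: either $\|x(s)\|_X\ge\chi(\|u\|_{U_c})$ throughout $[t_0,t]$, and we are done; or there is a last crossing time $\tau\le t$ with $\|x(\tau)\|_X=\chi(\|u\|_{U_c})$, and on $[\tau,t]$ continuity of $v$ together with the sign of $\dot v$ at any subsequent crossings confines $v$ to $v(s)\le\psi_2(\chi(\|u\|_{U_c}))$. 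Hence $v(t)\le\max\{\beta_V(v(t_0),t-t_0),\,\psi_2(\chi(\|u\|_{U_c}))\}$ for all $t\ge t_0$. Using $v(t_0)\le\psi_2(\|\phi_0\|_X)$, $\|x(t)\|_X\le\psi_1^{-1}(v(t))$, and $\psi_1^{-1}(\max\{a,b\})=\max\{\psi_1^{-1}(a),\psi_1^{-1}(b)\}\le\psi_1^{-1}(a)+\psi_1^{-1}(b)$, I would read off
\[
\|x(t)\|_X\le\psi_1^{-1}\big(\beta_V(\psi_2(\|\phi_0\|_X),\,t-t_0)\big)+\psi_1^{-1}\big(\psi_2(\chi(\|u\|_{U_c}))\big),
\]
which is the (L)ISS estimate of Definition~\ref{dfn:(L)ISS} with $\beta(r,s):=\psi_1^{-1}(\beta_V(\psi_2(r),s))\in\mathscr{KL}$ and $\gamma(r):=\psi_1^{-1}(\psi_2(\chi(r)))\in\mathscr K$; replacing $\|u\|_{U_c}$ by the causal supremum $\sup_{t_0\le s\le t}\|u(s)\|_U$ in the crossing-time step gives the precise causal form.

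The main obstacle is not the infinite dimensionality — everything is carried by the scalar $v$, and no compactness of $X$ is used — but the nonsmooth, time-varying bookkeeping: making the Dini-derivative comparison principle rigorous when inputs are only piecewise right-continuous (so $v$ is continuous but only piecewise $C^1$), justifying the class-$\mathscr K$ minorant of $W\in\mathscr{P}$ on the correct a priori bounded range (and thereby excluding escape of $v$), and handling the last-crossing-time case distinction cleanly. For the local statement there is the additional, routine-but-necessary invariance step of shrinking $\rho_x,\rho_u$ if needed so that $\|x(t)\|_X\le\rho_x$ for all $t\ge t_0$, which is what makes condition (ii) applicable along the whole trajectory.
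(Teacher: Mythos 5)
Your argument is correct and is essentially the standard comparison-lemma proof (scalar reduction via $v(t)=V(t,x(t))$, class-$\mathscr K$ minorant of $W$, last-crossing-time case analysis) that the paper defers to \cite{zheng2020transporting}, which in turn follows the Dashkovskiy--Mironchenko route for infinite-dimensional systems. No substantive difference from the cited proof.
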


ISS is a convenient tool for studying the stability of cascade systems.
Consider two systems $\Sigma_i=(X_i,U_{ci},\phi_i),i=1,2$, where $U_{ci}=PC(\mathbb{R}_+;U_i)$ and $X_1\subset U_2$.
We say they form a cascade connection if $u_2(t)=\phi_1(t,t_0,\phi_{01},u_1)$.

\begin{theorem}\label{thm:(L)ISS cascade}
\cite{khalil2002nonlinear} The cascade connection of two ISS systems is ISS.
If one of them is LISS, then the cascade connection is LISS.
\end{theorem}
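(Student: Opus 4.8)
The plan is to give a trajectory-based argument, adapting the classical finite-dimensional proof of \cite{khalil2002nonlinear} to the infinite-dimensional setting of Definition~\ref{dfn:(L)ISS}. Denote the states of $\Sigma_1,\Sigma_2$ by $x_1\in X_1$, $x_2\in X_2$, regard the external signal $u_1\in U_{c1}$ as the input of the cascade, and view the cascade as one system with state $x=(x_1,x_2)$ and norm $\|x\|_X=\|x_1\|_{X_1}+\|x_2\|_{X_2}$. Let $\beta_i\in\mathscr{KL}$, $\gamma_i\in\mathscr{K}$ be the ISS gains of $\Sigma_i$. I will use the cocycle identity $\phi_i(t,s,x,u)=\phi_i\!\big(t,r,\phi_i(r,s,x,u),u\big)$ for $t_0\le s\le r\le t$ (part of the standard axioms of a control system), together with the elementary $\mathscr{K}/\mathscr{KL}$ calculus: $\gamma(a+b)\le\gamma(2a)+\gamma(2b)$ for $\gamma\in\mathscr{K}$; a class-$\mathscr{KL}$ function is dominated by a class-$\mathscr{K}$ function of its first argument (evaluate the time argument at $0$); a finite sum of $\mathscr{KL}$ functions is bounded by a $\mathscr{KL}$ function; and $(r,t)\mapsto\beta(\gamma(r),t)$ and $(r,t)\mapsto\gamma(\beta(r,t))$ are bounded by $\mathscr{KL}$ functions whenever $\beta\in\mathscr{KL}$, $\gamma\in\mathscr{K}$.

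The $x_1$-component is immediate from ISS of $\Sigma_1$: $\|x_1(t)\|_{X_1}\le\beta_1(\|x_1(t_0)\|_{X_1},t-t_0)+\gamma_1(\sup_{t_0\le s\le t}\|u_1(s)\|_{U_1})$. The work is in the $x_2$-component, whose driving input is $u_2(s)=\phi_1(s,t_0,x_1(t_0),u_1)$; naively, ISS of $\Sigma_2$ only bounds $\|x_2(t)\|_{X_2}$ by a term $\gamma_2(\sup_{t_0\le s\le t}\|u_2(s)\|_{U_2})$ that does not decay in $t$. I would cure this by the standard midpoint split: with $t_m=\tfrac{t_0+t}{2}$, the cocycle identity applied to $\Sigma_2$ on $[t_m,t]$ gives
\[
\|x_2(t)\|_{X_2}\le\beta_2\!\big(\|x_2(t_m)\|_{X_2},\tfrac{t-t_0}{2}\big)+\gamma_2\!\Big(\sup_{t_m\le s\le t}\|u_2(s)\|_{U_2}\Big).
\]
On $[t_0,t_m]$, ISS of $\Sigma_2$ bounds $\|x_2(t_m)\|_{X_2}$ by $\beta_2(\|x_2(t_0)\|_{X_2},0)+\gamma_2(\sup_{t_0\le s\le t}\|u_2(s)\|_{U_2})$, and ISS of $\Sigma_1$ bounds $\sup_{t_0\le s\le t}\|u_2(s)\|_{U_2}$ by $\beta_1(\|x_1(t_0)\|_{X_1},0)+\gamma_1(\sup_{t_0\le s\le t}\|u_1(s)\|_{U_1})$. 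In the second term the supremum runs only over $[t_m,t]$, so ISS of $\Sigma_1$ yields $\sup_{t_m\le s\le t}\|u_2(s)\|_{U_2}\le\beta_1(\|x_1(t_0)\|_{X_1},\tfrac{t-t_0}{2})+\gamma_1(\sup_{t_0\le s\le t}\|u_1(s)\|_{U_1})$, where the $\beta_1$ term now decays in $t$. Substituting back and splitting each $\gamma_i$ (and each $\beta_i(\cdot,t)$) of a sum via $\gamma(a+b)\le\gamma(2a)+\gamma(2b)$, every resulting term that carries $\|x_1(t_0)\|_{X_1}$ or $\|x_2(t_0)\|_{X_2}$ is multiplied by a genuine $\mathscr{KL}$ factor in $\tfrac{t-t_0}{2}$ (either directly or through the outer $\beta_2(\cdot,\tfrac{t-t_0}{2})$ composed with a $\mathscr{K}$ function), while every term carrying $\sup\|u_1\|$ is bounded by a $\mathscr{K}$ function of it (dominating any residual $\mathscr{KL}$ factor by its value at time $0$). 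By the closure properties above the estimate collapses to $\|x_2(t)\|_{X_2}\le\beta(\|x(t_0)\|_X,t-t_0)+\gamma(\sup_{t_0\le s\le t}\|u_1(s)\|_{U_1})$ for suitable $\beta\in\mathscr{KL}$, $\gamma\in\mathscr{K}$, which combined with the $x_1$-bound is exactly the ISS estimate for the cascade.

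For the LISS claim I would rerun the identical computation while tracking the validity radii. If $\Sigma_2$ is LISS with radii $\rho_{x2},\rho_{u2}$ (and $\Sigma_1$ is ISS or LISS), choose $\rho_{x1},\rho_{u1}$ small enough that $\beta_1(\rho_{x1},0)+\gamma_1(\rho_{u1})\le\rho_{u2}$; then for $\|x_1(t_0)\|_{X_1}\le\rho_{x1}$, $\|u_1\|_{U_{c1}}\le\rho_{u1}$ the signal $u_2$ stays inside the admissible-input ball of $\Sigma_2$ for all time, so every invocation of ISS of $\Sigma_2$ above is legitimate, and one additionally imposes $\|x_2(t_0)\|_{X_2}\le\rho_{x2}$. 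Symmetrically, if only $\Sigma_1$ is LISS and $\Sigma_2$ is ISS, one restricts the data and input of $\Sigma_1$ to its own validity region, and no constraint on $u_2$ is needed because $\Sigma_2$ is global. In either case the same bound holds on the restricted set, so the cascade is LISS. I expect the only conceptual obstacle to be the midpoint-splitting step that manufactures the decaying bound from the non-decaying input supremum; the rest is routine $\mathscr{K}/\mathscr{KL}$ bookkeeping, the one point needing care being that the constants absorbing the cross terms are independent of $t$ and $t_0$ — they are, since all such $\beta_i$-evaluations use time argument $0$ — and that the split is applied only for $t>t_0$, the case $t=t_0$ being trivial.
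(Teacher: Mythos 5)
Your proof is correct and follows essentially the same route as the source the paper cites for this result (\cite{khalil2002nonlinear}): the trajectory-based argument with the midpoint split of $[t_0,t]$, followed by the weak triangle inequality for $\mathscr{K}$ functions and the closure properties of $\mathscr{KL}$ compositions, is exactly the classical cascade-ISS proof, and your handling of the validity radii for the LISS case is sound. The only step worth making explicit is that, when $\Sigma_2$ is the local system, the initial radii must be shrunk further so that $\|x_2(t_m)\|_{X_2}$ itself remains below $\rho_{x2}$ before the second invocation of the LISS estimate on $[t_m,t]$, i.e.\ one also imposes $\beta_2(\rho_{x2}',0)+\gamma_2(\rho_{u2})\le\rho_{x2}$.
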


\section{Problem formulation} \label{section:problem formulation} 
This paper studies the problem of estimating the dynamic mean-field density of large-scale stochastic agents. 
Consider $N$ agents within a spatial domain $\Omega\subset\mathbb{R}^n$.
Their dynamics are assumed to be known and satisfy a family of stochastic differential equations:
\begin{equation} \label{eq:Langevin equation}
    dX_t^i=v(X_t^i, t) d t+\sigma(X_t^i, t) d W_t^i, \quad i = 1,\dots,N,
\end{equation}
where $X_t^i\in\Omega$ is the state of the $i$-th agent, $v=(v_{1}, \ldots, v_n)\in\mathbb{R}^n$ is the deterministic dynamics, $\{W_t^i\}_{i=1}^{N}$ is a family of $m$-dimensional standard Wiener processes independent across the agents, and $\sigma=[\sigma_{kl}]\in\mathbb{R}^{n\times m}$ is the stochastic dynamics. 
In this work, the superscription $i$ is reserved to represent the $i$-th agent.
The states $\{X_t^i\}_{i=1}^{N}$ are assumed to be observable. 
The probability density $p(x, t)$ of the states is known to satisfy a mean-field PDE, called the Fokker-Planck equation:
\begin{align}\label{eq:FP equation}
\begin{split}
    &\partial_tp=-\sum_{j=1}^{n} \partial_j(v_jp)+\sum_{j,k=1}^{n} \partial_j\partial_k(\Sigma_{jk}p) \quad\text{in }\Omega_T,\\
    &p=p_0 \quad\text{on }\Omega\times\{0\},\\
    &\mathbf{n}\cdot(g-\boldsymbol{v}p)=0 \quad\text{on }\partial\Omega\times(0,T),
\end{split}
\end{align} 
where $p_0$ is the initial density, $\Sigma=[\Sigma_{jk}]=\frac{1}{2}\sigma\sigma^{T}\in\mathbb{R}^{n\times n}$ is the diffusion tensor, $g=(g_1,\dots,g_n)$ with $g_j=\sum_{k=1}^{n}\partial_k(\Sigma_{jk}p)$, and $\mathbf{n}$ is the outward normal to $\partial\Omega$. 
By definition, $\Sigma$ is symmetric and positive semidefinite.
We always assume that it is uniformly positive definite (or uniformly elliptic), i.e., there exists a constant $\lambda>0$ such that
\begin{equation*}
    \xi^T\Sigma(x,t)\xi\geq\lambda\|\xi\|^2,\quad\forall(x,t)\in\Omega_T,\forall\xi\in\mathbb{R}^n.
\end{equation*}

\begin{remark} \label{remark:nonlinear and time-varying}
Note that \eqref{eq:FP equation} is uniquely determined by \eqref{eq:Langevin equation} which is known. 
This relationship holds even if \eqref{eq:Langevin equation} is nonlinear and time-varying.
Hence, the centralized/distributed density filters we design are applicable to a large family of systems. 
Also note that \eqref{eq:FP equation} is always a linear PDE.
\end{remark}

We recall the following result proved in \cite{zheng2020transporting} for \eqref{eq:FP equation}, which will be useful when we study the solution property of the density filters.
Note that a density function $f$ satisfies $\langle f,\mathbf{1}\rangle_{L^2}=\int_\Omega fdx=1$ where $\mathbf{1}$ is the 1-constant function on $\Omega$.

\begin{theorem}\label{thm:well-posedness}
\cite{zheng2020transporting} Assume that
\begin{equation}\label{eq:regularity condition 1}
    v_j,\sigma,\partial_j\sigma\in L^\infty(\Omega_T)\text{ and } p_0\in L^\infty(\Omega).
\end{equation}
Then
\begin{itemize}
    \item (\textbf{Well-posedness}) Problem \eqref{eq:FP equation} has a unique weak solution $p\in H^1(\Omega_T)$.
    \item (\textbf{Mass conservation}) The solution satisfies $\langle p(t),\mathbf{1}\rangle_{L^2}=1,\forall t$.
    \item (\textbf{Positivity}) If we further assume that
    \begin{equation}\label{eq:regularity condition 2}
        \partial_jv_j,\partial_j^2\sigma\in L^\infty(\Omega_T),
    \end{equation} 
    then $p_0\geq(>)0$ implies $p\geq(>)0$ for all $t\in[0,T]$.
\end{itemize}
\end{theorem}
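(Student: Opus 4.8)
The plan is to recast \eqref{eq:FP equation} as a divergence-form parabolic problem carrying a homogeneous conormal (no-flux) boundary condition, obtain well-posedness by the standard Galerkin/energy method for such equations, and then extract mass conservation from the constant test function and positivity from a maximum-principle argument.

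First I would rewrite the equation in divergence form. Since $g_j=\sum_k\partial_k(\Sigma_{jk}p)=(\Sigma\nabla p)_j+p(\nabla\cdot\Sigma)_j$ with $(\nabla\cdot\Sigma)_j:=\sum_k\partial_k\Sigma_{jk}$, equation \eqref{eq:FP equation} reads $\partial_t p=\nabla\cdot(g-vp)=\nabla\cdot\big(\Sigma\nabla p+p(\nabla\cdot\Sigma-v)\big)$, and $\mathbf{n}\cdot(g-vp)=0$ is precisely the homogeneous conormal boundary condition associated with this operator. Testing against $\phi\in H^1(\Omega)$ and integrating by parts, the boundary integral vanishes, so the weak problem is: find $p$ with $p(0)=p_0$ such that $\langle\partial_t p,\phi\rangle+a(t;p,\phi)=0$ for all $\phi\in H^1(\Omega)$, where $a(t;u,\phi):=\int_\Omega\big(\Sigma\nabla u+u(\nabla\cdot\Sigma-v)\big)\cdot\nabla\phi\,dx$. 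Condition \eqref{eq:regularity condition 1} gives $\Sigma,\nabla\cdot\Sigma,v\in L^\infty(\Omega_T)$ (recall $\Sigma=\tfrac12\sigma\sigma^T$), hence $a(t;\cdot,\cdot)$ is bounded on $H^1\times H^1$, while uniform ellipticity of $\Sigma$ and Young's inequality give the G\aa rding bound $a(t;u,u)\ge\tfrac\lambda2\|\nabla u\|_{L^2}^2-C\|u\|_{L^2}^2$. With $V=H^1(\Omega)$, $H=L^2(\Omega)$, $p_0\in L^\infty(\Omega)\subset H$ and zero source, the standard parabolic existence and uniqueness theory recalled in Section~\ref{section:Kalman filter} yields a unique $p\in W(0,T)\hookrightarrow C([0,T];L^2(\Omega))$; the sharper regularity $p\in H^1(\Omega_T)$ then follows from the energy estimate together with interior and boundary parabolic regularity for divergence-form equations with $L^\infty$ coefficients on the $C^1$ domain $\Omega$.

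Mass conservation is immediate from the weak formulation: since $\Omega$ is bounded, the constant $\mathbf{1}\in H^1(\Omega)$ is admissible, and $\nabla\mathbf{1}=0$ gives $a(t;p,\mathbf{1})=0$, so $\tfrac{d}{dt}\langle p(t),\mathbf{1}\rangle_{L^2}=0$ and $\langle p(t),\mathbf{1}\rangle_{L^2}=\langle p_0,\mathbf{1}\rangle_{L^2}=1$ for all $t$. For positivity I would test the weak equation with the negative part $-p^-:=\min(p,0)$, which lies in $L^2(0,T;H^1)$ whenever $p\in W(0,T)$; ellipticity of $\Sigma$, boundedness of $\nabla\cdot\Sigma-v$, and Young's inequality yield $\tfrac12\tfrac{d}{dt}\|p^-(t)\|_{L^2}^2\le C\|p^-(t)\|_{L^2}^2$, so $p_0\ge0$ (i.e.\ $p^-(0)=0$) forces $p^-\equiv0$ by Gronwall. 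For the strict case $p_0>0\Rightarrow p>0$ I would pass to the non-divergence form $\partial_t p=\sum_{j,k}\Sigma_{jk}\partial_j\partial_k p+b\cdot\nabla p+cp$, with $b$ built from $\nabla\Sigma$ and $v$ (bounded already under \eqref{eq:regularity condition 1}) and $c=\sum_{j,k}\partial_j\partial_k\Sigma_{jk}-\sum_j\partial_j v_j$, which lies in $L^\infty(\Omega_T)$ exactly by the additional hypothesis \eqref{eq:regularity condition 2}, and invoke the parabolic Harnack inequality / strong maximum principle; alternatively, both mass conservation and positivity follow from the interpretation of $p$ as the law of the solution of \eqref{eq:Langevin equation}.

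The routine part is the energy method once the problem is in divergence form. I expect the two genuine obstacles to be: (i) verifying carefully that $\mathbf{n}\cdot(g-vp)=0$ is exactly the conormal boundary condition of the divergence-form operator, so that the boundary integral truly drops out and the weak problem above is the natural one; and (ii) upgrading the soft $W(0,T)$ solution to the claimed $H^1(\Omega_T)$ regularity, which needs parabolic regularity theory rather than only the functional-analytic framework. Both are carried out in \cite{zheng2020transporting}.
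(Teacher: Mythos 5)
This theorem is not proved in the paper at all: it is imported verbatim from \cite{zheng2020transporting}, so there is no in-paper argument to compare against line by line. That said, your reconstruction is the natural one and is fully consistent with the machinery the paper itself sets up: the divergence-form rewriting $\partial_t p=\nabla\cdot\bigl(\Sigma\nabla p+p(\nabla\cdot\Sigma-v)\bigr)$ with the no-flux condition as the conormal condition, the Lions variational framework with $V=H^1(\Omega)$, $H=L^2(\Omega)$ and a G\aa rding inequality is exactly the $\langle\mathcal{A}z,z\rangle_H+\lambda\|z\|_H^2\geq\alpha\|z\|_V^2$ hypothesis recalled in Section~\ref{section:Kalman filter}, and your mass-conservation argument (test with $\mathbf{1}$, note $\nabla\mathbf{1}=0$) is literally the same computation the authors later run for the filters in Theorems~\ref{thm:well-posedness of optimal} and \ref{thm:well-posedness of local}. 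The truncation argument for $p^-$ and the strong-maximum-principle step for strict positivity (where \eqref{eq:regularity condition 2} enters through the zeroth-order coefficient of the non-divergence form) are also the standard and correct route; in fact your $p^-$ argument gives nonnegativity under \eqref{eq:regularity condition 1} alone, slightly more than the statement claims.

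The one place where your proposal is genuinely soft --- and you flag it yourself --- is the upgrade from $p\in W(0,T)$ to $p\in H^1(\Omega_T)$. The variational theory only delivers $\partial_t p\in L^2(0,T;(H^1(\Omega))')$, and since $p_0\in L^\infty(\Omega)$ does not lie in $H^1(\Omega)$, the usual improved-regularity estimate $\|\partial_t p(t)\|_{L^2}\lesssim t^{-1}\|p_0\|_{L^2}$ is not square-integrable near $t=0$; so "interior and boundary parabolic regularity with $L^\infty$ coefficients" does not by itself close this step up to the initial time, and this is precisely the part that must be taken on faith from \cite{zheng2020transporting} (or strengthened hypotheses on $p_0$). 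Everything else in your write-up is routine and correct.
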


We assume the agents can exchange information with neighbors and form a \textit{time-varying} communication topology $G(t)=(V,E(t))$, where $V$ is the set of $N$ agents and $E(t)\subset V\times V$ is the set of communication links.
We assume $G(t)$ is undirected.
Now, we can formally state the problems to be solved as follows:

\begin{problem}[Centralized density estimation] 
\label{problem:centralized density estimation}
Given the system \eqref{eq:FP equation} and agent states $\{X_t^i\}_{i=1}^{N}$, we want to estimate their density $p(x,t)$.
\end{problem}

\begin{problem}[Distributed density estimation]
\label{problem:distributed density estimation}
Given the system \eqref{eq:FP equation}, the state $X_t^i$ of an agent $i$, and the topology $G$, we want to design communication and estimation protocols for each agent to estimate the global density $p(x,t)$.
\end{problem}

\section{Centralized density estimation}

In this section, we study the centralized density estimation problem, assuming all agent states are observable.
This centralized density filter was reported in our previous work \cite{zheng2020pde}.
Our new results include studying its solution properties and using them to improve the stability results (with better regularity).
Hence, we include its development for completeness.

To design a filter, we rewrite \eqref{eq:FP equation} as an evolution equation and use KDE to construct a noisy observation $y(t)$:
\begin{equation} \label{eq:evolution equation of density}
\begin{aligned}
    \dot{p}(t) &= \mathcal{A}(t)p(t),
    \\
    y(t)&= p_{\text{KDE}}(t) = p(t)+w(t),
\end{aligned}
\end{equation}
where $\mathcal{A}(t)p=-\sum_{j=1}^{n} \partial_j(v_jp)+\sum_{j,k=1}^{n} \partial_j\partial_k(\Sigma_{jk}p)$ is a linear operator, $p_{\text{KDE}}(t)$ represents a kernel density estimator (see \eqref{eq:KDE}) using $\{X_t^i\}_{i=1}^{N}$, and $w(t)$ is the observation noise which is asymptotically Gaussian with asymptotically diagonal covariance operator $\mathcal{R}(t)=k\operatorname{diag}(p(t))$ where $\operatorname{diag}(p(t))\in\mathscr{L}(H^1(\Omega),H^1(\Omega))$ is a diagonal operator and $k>0$ is a constant depending on $N$, the kernel $K$, and its bandwidth $h$.
(See Appendices \ref{section:KDE} for these properties of $w(t)$.)
$\mathcal{R}$ is always invertible because of the positivity of $p$ in Theorem \ref{thm:well-posedness}.
The independence of $w(t)$ between different $t$ originates from the property of the Wiener process in \eqref{eq:Langevin equation}.

\label{section:centralized density estimation}
\subsection{Optimal density filter}


By Theorem \ref{thm:infinite-dimensional Kalman filter}, the optimal density filter is given by
\begin{align}
    &\dot{\hat{p}} = \mathcal{A}(t)\hat{p}+\mathcal{K}(t)(y-\hat{p}),\quad \hat{p}(0)=p_{\text{KDE}}(0),\label{eq:optimal density filter}\\
    &\mathcal{K}(t)=\mathcal{\mathcal{P}}(t)\mathcal{R}^{-1}(t),\\
    &\dot{\mathcal{P}}=\mathcal{A}(t)\mathcal{P}+\mathcal{P}\mathcal{A}^*(t)-\mathcal{P}\mathcal{R}^{-1}(t)\mathcal{P},\quad \mathcal{P}(0)=\mathcal{P}_0,\label{eq:optimal Riccati}
\end{align}
where $\mathcal{A}^*(t)$ is the adjoint operator of $\mathcal{A}(t)$, given by
\begin{equation}\label{eq:A adjoint}
    \mathcal{A}^*(t)=\sum_{j=1}^{n}v_j(x,t)\partial_j+\sum_{j,k=1}^{n}\Sigma_{jk}(x,t)\partial_j\partial_k,
\end{equation}
and \eqref{eq:optimal Riccati} should be interpreted as in Theorem \ref{thm:infinite-dimensional Kalman filter}.

First, we study the well-posedness and mass conservation properties of the density filter.
These results are new and will justify the stability proof of the density filter.

From Section \ref{section:Kalman filter}, it is seen that the well-posedness of \eqref{eq:optimal density filter} and \eqref{eq:optimal Riccati} relies on the well-posedness of \eqref{eq:FP equation}, which is already given in Theorem \ref{thm:well-posedness}.
Then, we have the following results for the solutions of \eqref{eq:optimal density filter} and \eqref{eq:optimal Riccati}.

\begin{theorem}\label{thm:well-posedness of optimal}
Assume that \eqref{eq:regularity condition 1} and \eqref{eq:regularity condition 2} hold. 
Then we have
\begin{itemize}
    \item[(i)] (\textbf{Well-posedness}) The operator Riccati equation \eqref{eq:optimal Riccati} has a unique weak solution  $\mathcal{P}(t)\in\mathscr{L}(H^1(\Omega);H^1(\Omega))$ such that $\mathcal{P}\varphi\in H^1(\Omega_T)$ for $\varphi\in H^1(\Omega_T)$ with $\frac{d\varphi}{dt}+\mathcal{A}^*(t)\varphi\in L^2(0,T;H^1(\Omega))$.
    The density filter \eqref{eq:optimal density filter} has a unique weak solution $\hat{p}\in H^1(\Omega_T)$.
    \item[(ii)] (\textbf{Mass conservation}) The solution satisfies $\langle\hat{p}(t),\mathbf{1}\rangle_{L^2}=1,\forall t$.
\end{itemize}
\end{theorem}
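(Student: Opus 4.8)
The plan is to establish the two claims in sequence, deriving well-posedness first and then using the structure that comes out of it to obtain mass conservation. For part (i), I would proceed by a standard fixed-point / a priori estimate argument for the operator Riccati equation \eqref{eq:optimal Riccati}. The key observation is that, by Theorem \ref{thm:well-posedness}, the Fokker--Planck solution $p(t)$ satisfies \eqref{eq:regularity condition 1}--\eqref{eq:regularity condition 2}, so $p \in H^1(\Omega_T)$ and $p > 0$, whence $\mathcal{R}(t) = k\operatorname{diag}(p(t))$ is a bounded, self-adjoint, uniformly positive multiplication operator with $\mathcal{R}^{-1}(t)$ likewise bounded on the relevant spaces; moreover $\mathcal{A}(t)$ satisfies the coercivity (Gårding) inequality required in Theorem \ref{thm:infinite-dimensional Kalman filter} with $V = H^1(\Omega)$, $H = L^2(\Omega)$ (using uniform ellipticity of $\Sigma$ and boundedness of $v$, $\sigma$, $\partial_i\sigma$). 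With these ingredients, Theorem \ref{thm:infinite-dimensional Kalman filter} applies verbatim and yields the unique $\mathcal{P}(t) \in \mathscr{L}(H;H)$ satisfying \eqref{eq:regularity of P} and \eqref{eq:operator Riccati}; the regularity statement $\mathcal{P}\varphi \in H^1(\Omega_T)$ for $\varphi$ with $\frac{d\varphi}{dt}+\mathcal{A}^*\varphi \in L^2(0,T;H^1(\Omega))$ is precisely \eqref{eq:regularity of P} transcribed into the present notation. Well-posedness of $\hat p$ in $H^1(\Omega_T)$ then follows: rewrite \eqref{eq:optimal density filter} as $\dot{\hat p} + (\mathcal{K}(t) - \mathcal{A}(t))\hat p = \mathcal{K}(t) y$, note the forcing term $\mathcal{K}(t)y = \mathcal{P}(t)\mathcal{R}^{-1}(t)p_{\text{KDE}}(t) \in L^2(0,T;H)$ (using boundedness of $\mathcal{P}$, $\mathcal{R}^{-1}$ and $y = p + w \in L^2$), and $\mathcal{K}(t) - \mathcal{A}(t)$ still generates a well-posed problem because $\mathcal{K}(t)$ is a bounded perturbation while $\mathcal{A}(t)$ retains coercivity; invoke the standard existence/uniqueness theorem for linear parabolic evolution equations on $W(0,T)$, and bootstrap to $H^1(\Omega_T)$ via the interior/up-to-boundary parabolic regularity already used for Theorem \ref{thm:well-posedness}.

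For part (ii), I would test the filter equation against the constant function $\mathbf{1}$. Set $m(t) := \langle \hat p(t), \mathbf{1}\rangle_{L^2}$; then $\dot m(t) = \langle \mathcal{A}(t)\hat p, \mathbf{1}\rangle + \langle \mathcal{K}(t)(y - \hat p), \mathbf{1}\rangle$. The first term vanishes: $\langle \mathcal{A}(t)\hat p,\mathbf{1}\rangle = \langle \hat p, \mathcal{A}^*(t)\mathbf{1}\rangle$, and $\mathcal{A}^*(t)\mathbf{1} = \sum_j v_j \partial_j \mathbf{1} + \sum_{j,k}\Sigma_{jk}\partial_j\partial_k\mathbf{1} = 0$ from \eqref{eq:A adjoint} — this is the adjoint form of the conservative (no-flux) structure of \eqref{eq:FP equation}, and one must be a little careful to justify the integration by parts at the boundary using the no-flux boundary condition, exactly as in the proof of mass conservation in Theorem \ref{thm:well-posedness}. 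For the correction term I would write $\langle \mathcal{P}(t)\mathcal{R}^{-1}(t)(y-\hat p),\mathbf{1}\rangle = \langle \mathcal{R}^{-1}(t)(y - \hat p), \mathcal{P}^*(t)\mathbf{1}\rangle$, so it suffices to show $\mathcal{P}^*(t)\mathbf{1} = 0$, i.e. $\mathcal{P}(t)$ annihilates constants in the adjoint sense. I would derive this from the Riccati equation \eqref{eq:operator Riccati} itself: pairing \eqref{eq:operator Riccati} appropriately and using $\mathcal{A}^*\mathbf{1} = 0$ together with $\mathcal{P}(0) = \mathcal{P}_0$ satisfying $\mathcal{P}_0^*\mathbf{1} = 0$ (a compatibility assumption on the initial covariance, which holds since $\hat p(0) = p_{\text{KDE}}(0)$ is itself a probability density and $\mathcal{P}_0$ should be taken consistent with the constraint $\langle \cdot,\mathbf 1\rangle = 1$), one shows $t \mapsto \mathcal{P}^*(t)\mathbf{1}$ solves a homogeneous linear evolution equation with zero initial data, hence is identically zero. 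Granting this, $\dot m(t) = 0$, and since $m(0) = \langle p_{\text{KDE}}(0),\mathbf{1}\rangle = 1$ (a KDE built from kernels each integrating to one), we get $m(t) \equiv 1$.

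The main obstacle I anticipate is the rigorous justification that $\mathcal{P}^*(t)\mathbf{1} = 0$ for all $t$: the constant function $\mathbf{1}$ is not in $H$ if $\Omega$ were unbounded, but here $\Omega$ is bounded so $\mathbf{1} \in H^1(\Omega)$ and this is fine; the real subtlety is that \eqref{eq:operator Riccati} is only defined weakly, acting on test functions $\varphi$ of the restricted class in \eqref{eq:regularity of P}, so one cannot simply "plug in $\mathbf{1}$" — instead I would need to check that $\mathbf{1}$ (viewed as a time-independent function) lies in that admissible class, verify $-\frac{d\mathbf 1}{dt} + \mathcal{A}^*(t)\mathbf 1 = 0 \in L^2(0,T;H^1(\Omega))$, and then carefully manipulate the weak form of the Riccati equation to extract a closed evolution equation for $\langle \mathcal{P}(t)\varphi,\mathbf 1\rangle$ (or for $\mathcal{P}^*(t)\mathbf 1$) with zero source. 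A secondary technical point is the compatibility condition on $\mathcal{P}_0$; if the paper's standing assumptions do not already force $\mathcal{P}_0^*\mathbf 1 = 0$, I would state it explicitly as a (natural) hypothesis, since without it mass conservation of $\hat p$ genuinely fails. Everything else — the parabolic well-posedness, the regularity bootstrap, the vanishing of $\langle \mathcal{A}\hat p,\mathbf 1\rangle$ — is routine given Theorems \ref{thm:infinite-dimensional Kalman filter} and \ref{thm:well-posedness}.
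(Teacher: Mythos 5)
Your part (ii) is essentially the paper's own argument: you identify the same three ingredients, namely $\mathcal{A}^*(t)\mathbf{1}=\mathbf{0}$ from \eqref{eq:A adjoint}, the fact that $t\mapsto\mathcal{P}(t)\mathbf{1}$ (equivalently $\mathcal{P}^*(t)\mathbf{1}$, since $\mathcal{P}$ is self-adjoint) obeys a homogeneous linear evolution equation $\dot{\mathcal{P}}\mathbf{1}=\mathcal{A}\mathcal{P}\mathbf{1}-\mathcal{P}\mathcal{R}^{-1}\mathcal{P}\mathbf{1}$ and hence vanishes identically, and the initial compatibility $\mathcal{P}_0\mathbf{1}=\mathbf{0}$. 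The last point is not an extra hypothesis you need to add: the paper obtains it for free from $\mathcal{P}_0=\operatorname{E}[(\hat p(0)-p(0))\circ(\hat p(0)-p(0))]$, since $\langle\hat p(0)-p(0),\mathbf{1}\rangle_{L^2}=1-1=0$ because both $p_{\text{KDE}}(0)$ and $p_0$ are probability densities. So the "compatibility condition" you flag as a possible additional assumption is already forced by the structure of the initial covariance.

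The genuine divergence is in part (i), and it matters for the conclusion. You apply Theorem \ref{thm:infinite-dimensional Kalman filter} with the standard Gelfand triple $V=H^1(\Omega)$, $H=L^2(\Omega)$. That choice only yields $\mathcal{P}(t)\in\mathscr{L}(L^2(\Omega);L^2(\Omega))$ and $\hat p\in W(0,T)$ with $d\hat p/dt\in L^2(0,T;H^{-1}(\Omega))$ --- not the claimed $\mathcal{P}(t)\in\mathscr{L}(H^1(\Omega);H^1(\Omega))$ and $\hat p\in H^1(\Omega_T)$. The paper instead sets $V=H=H^1(\Omega)$, precisely so that Theorem \ref{thm:infinite-dimensional Kalman filter} returns the covariance operator as a bounded operator on $H^1(\Omega)$ and the filter solution in $H^1(\Omega_T)$; the accompanying remark stresses that pinning the state space to $H^1(\Omega)$ is the crux of the theorem, since it is what later permits ISS estimates in the $H^1$ norm and hence convergence of the density \emph{gradient}. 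Your proposed repair --- a parabolic regularity bootstrap --- is only sketched for $\hat p$ and says nothing about how to lift $\mathcal{P}$ from $\mathscr{L}(L^2;L^2)$ to $\mathscr{L}(H^1;H^1)$, which is the part of the statement actually used downstream (e.g., in the Lyapunov functions $\langle\bar{\mathcal{P}}^\dagger\tilde p,\tilde p\rangle_{H^1}$ of Theorem \ref{thm:stability of centralized density filter}). To close the gap along the paper's lines you would need to justify that the hypotheses of Theorem \ref{thm:infinite-dimensional Kalman filter} hold with $V=H=H^1(\Omega)$, using that Theorem \ref{thm:well-posedness} already places the Fokker--Planck dynamics in $H^1(\Omega_T)$ under \eqref{eq:regularity condition 1}--\eqref{eq:regularity condition 2}; with $H=L^2$ you are proving a weaker theorem than the one stated.
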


\begin{proof}
(i) The well-posedness follows from a direct application of Theorems \ref{thm:infinite-dimensional Kalman filter} and \ref{thm:well-posedness}. 
Note that $p\in H^1(\Omega_T)$ implies $p(t)\in H^1(\Omega)$ and $\partial_tp(t)\in H^1(\Omega)^*$ for almost all $t$.
This means we can choose $V=H=H^1(\Omega)$ and $W(0,T)=\{f\mid f\in L^2(0,T;H^1(\Omega)),\frac{df}{dt}\in L^2\left(0,T;H^1(\Omega)^*\right)\}$ for \eqref{eq:optimal density filter}.
According to Theorem \ref{thm:infinite-dimensional Kalman filter}, we obtain $\hat{p}\in H^1(\Omega_T)$ and $\mathcal{P}\in\mathscr{L}(H^1(\Omega);H^1(\Omega))$.

(ii) To prove the mass conservation property, first note that
\begin{equation*}
    \mathcal{P}_0\mathbf{1}=\operatorname{E}[(\hat{p}(0)-p(0))\circ(\hat{p}(0)-p(0))]\mathbf{1}=\mathbf{0},
\end{equation*}
where $\mathbf{0}$ is the zero constant function on $\Omega$.
Using \eqref{eq:A adjoint}, we have $\mathcal{A}^*(t)\mathbf{1}=\mathbf{0},\forall t$, and
\begin{equation*}
    \dot{\mathcal{P}}\mathbf{1}=\mathcal{A}\mathcal{P}\mathbf{1}+\mathcal{P}\mathcal{A}^{*}\mathbf{1}-\mathcal{P}\mathcal{R}^{-1}\mathcal{P}\mathbf{1}=\mathcal{A}\mathcal{P}\mathbf{1}-\mathcal{P}\mathcal{R}^{-1}\mathcal{P}\mathbf{1},
\end{equation*}
i.e., $\mathcal{P}(t)\mathbf{1}$ satisfies a homogeneous equation with zero initial condition $\mathcal{P}_0\mathbf{1}=\mathbf{0}$. Hence, $\mathcal{P}(t)\mathbf{1}=\mathbf{0}$ for all $t\geq 0$.
Now,
\begin{align*}
    \langle\dot{\hat{p}},\mathbf{1}\rangle_{L^2} &=\langle \mathcal{A}\hat{p}+\mathcal{P}\mathcal{R}^{-1}(y-\hat{p}),\mathbf{1}\rangle_{L^2} \\
    &=\langle\hat{p},\mathcal{A}^*\mathbf{1}\rangle_{L^2}+\langle \mathcal{R}^{-1}(y-\hat{p}),\mathcal{P}\mathbf{1}\rangle_{L^2} = \mathbf{0}.
\end{align*}
Hence, we have 
\begin{equation*}
    \langle\hat{p}(t),\mathbf{1}\rangle_{L^2}=\langle\hat{p}(0),\mathbf{1}\rangle_{L^2}=1\text{ for all } t\geq 0,
\end{equation*}
which means \eqref{eq:optimal density filter} conserves mass.
\end{proof}

\begin{remark}
Again, $\hat{p}\in H^1(\Omega_T)$ implies $\hat{p}(t)\in H^1(\Omega)$ for almost all $t$.
Determining that the state space is $H^1(\Omega)$ is crucial because it will enable us to prove in subsequent sections that the (distributed) density filters are convergent in $H^1$ norm, which means that not only the density estimate itself but also its gradient is convergent.
Finally, the mass conservation property of \eqref{eq:optimal Riccati} actually implies that $0$ is a simple eigenvalue of $\mathcal{P}(t)$ with $\mathbf{1}$ being an associated eigenfunction.
\end{remark}

\subsection{Suboptimal density filter}

We note that $\mathcal{R}(t)$ depends on the unknown density $p(t)$.
Hence, we approximate $\mathcal{R}(t)$ using $\bar{\mathcal{R}}(t)=\bar{k}\operatorname{diag}(p_{\text{KDE}}(t))$, where $\bar{k}$ is given by   
\begin{equation*}
    \bar{k}=\frac{h^2\int K(u)^2du+\sum_{j=1}^n\int(\partial_jK(u))^2du}{Nh^{n+2}}
\end{equation*}
(according to \eqref{eq:asymptotic normality} and \eqref{eq:derivative asymptotic normality} in Appendices).
We always assume that $\bar{\mathcal{R}}$ is invertible, which is easy to satisfy since we construct $p_{\text{KDE}}$.
Correspondingly, we obtain a ``suboptimal'' density filter:
\begin{align} 
    &\dot{\hat{p}} = \mathcal{A}(t)\hat{p}+\bar{\mathcal{K}}(t)(y(t)-\hat{p}),\quad \hat{p}(0)=p_{\text{KDE}}(0), \label{eq:suboptimal density filter}\\
    &\bar{\mathcal{K}}(t)=\bar{\mathcal{P}}(t)\bar{\mathcal{R}}^{-1}(t),\\
    &\dot{\bar{\mathcal{P}}}=\mathcal{A}(t)\bar{\mathcal{P}}+\bar{\mathcal{P}}\mathcal{A}^*(t)-\bar{\mathcal{P}}\bar{\mathcal{R}}^{-1}(t)\bar{\mathcal{P}}, \quad\bar{\mathcal{P}}(0)=\bar{\mathcal{P}}_0. \label{eq:suboptimal Riccati}
\end{align}

We have the following results for \eqref{eq:suboptimal density filter} and \eqref{eq:suboptimal Riccati}, whose proof is similar with Theorem \ref{thm:well-posedness of optimal} and thus omitted.

\begin{theorem}\label{thm:well-posedness of suboptimal}
Under assumptions \eqref{eq:regularity condition 1} and \eqref{eq:regularity condition 2}, we have
\begin{itemize}
    \item[(i)] (\textbf{Well-posedness}) The operator Riccati equation \eqref{eq:suboptimal Riccati} has a unique weak solution $\bar{\mathcal{P}}(t)\in\mathscr{L}(H^1(\Omega);H^1(\Omega))$ such that $\bar{\mathcal{P}}\varphi\in H^1(\Omega_T)$ for $\varphi\in H^1(\Omega_T)$ with $\frac{d\varphi}{dt}+\mathcal{A}^*(t)\varphi\in L^2(0,T;H^1(\Omega))$.
    The density filter \eqref{eq:suboptimal density filter} has a unique weak solution $\hat{p}\in H^1(\Omega_T)$.
    \item[(ii)] (\textbf{Mass conservation}) The solution satisfies $\langle\hat{p}(t),\mathbf{1}\rangle_{L^2}=1,\forall t$.
\end{itemize}
\end{theorem}

To study the stability of the suboptimal filter, define $\tilde{p} = \hat{p}-p$. 
Then along $\bar{{\mathcal{P}}}(t)$ we have
\begin{equation} \label{eq:estimation error equation}
    \dot{\tilde{p}} = (\mathcal{A}(t)-\bar{\mathcal{P}}(t)\bar{\mathcal{R}}^{-1}(t))\tilde{p}+\bar{\mathcal{P}}(t)\bar{\mathcal{R}}^{-1}(t)w(t).
\end{equation}
We define $\|\bar{\mathcal{R}}^{-1}(t)-\mathcal{R}^{-1}(t)\|_{H^1}$ as the approximation error of $\bar{\mathcal{R}}(t)$, since it is zero if and only if $\bar{\mathcal{R}}(t)=\mathcal{R}(t)$.
We also need the notion of generalized inverse of self-adjoint operators because of the zero eigenvalues of $\mathcal{P}$ and $\mathcal{\bar{P}}$, which is defined in Appendices \ref{section:generalized inverse}.
    
We will show that: (i) \eqref{eq:estimation error equation} is ISS to $w(t)$; (ii) the solution of \eqref{eq:suboptimal Riccati} remains close to the solution of \eqref{eq:optimal Riccati}; and (iii) the suboptimal gain $\bar{\mathcal{K}}$ remains close to the optimal gain $\mathcal{K}$, which justifies the ``suboptimality'' of \eqref{eq:suboptimal density filter}.
In \cite{zheng2020pde}, we proved weaker results assuming that all density filters and operator Riccati equations have unique solutions in certain Hilbert spaces.
In this work, we use the newly obtained solution properties and the notion of generalized inverse operators to prove stability results in a better functional space (with improved regularity).
The stability and optimality results are stated in the following.

\begin{theorem} \label{thm:stability of centralized density filter}
Assume that \eqref{eq:regularity condition 1} and \eqref{eq:regularity condition 2} hold, and that $\|\mathcal{P}(t)\|_{H^1}$ and $\|\bar{\mathcal{P}}(t)\|_{H^1}$ are uniformly bounded. 
Also assume that there exist constants $c_1,c_2>0$ such that for all $t\geq0$,
\begin{equation}\label{eq:uniform positivity centralized}
    0<c_1\mathcal{I}\leq \mathcal{R}^{-1}(t),\bar{\mathcal{R}}^{-1}(t),\mathcal{P}^{\dagger}(t),\bar{\mathcal{P}}^{\dagger}(t)\leq c_2\mathcal{I},
\end{equation}
where $\mathcal{P}^{\dagger}(t)$ ($\bar{\mathcal{P}}^{\dagger}(t)$) is the generalized inverse of $\mathcal{P}(t)$ ($\mathcal{\bar{P}}(t)$) and $\mathcal{I}$ is the identity operator on $H^1(\Omega)$.
Then we have
\begin{itemize}
    \item[(i)] (\textbf{Stability}) $\|\hat{p}(t)-p(t)\|_{H^1}$ is ISS to $\|w(t)\|_{H^1}$.
    \item[(ii)] $\|\bar{\mathcal{P}}(t)-\mathcal{P}(t)\|_{H^1}$ is LISS to $\|\bar{\mathcal{R}}^{-1}(t)-\mathcal{R}^{-1}(t)\|_{H^1}$.
    \item[(iii)] (\textbf{Optimality}) $\|\bar{\mathcal{K}}(t)-\mathcal{K}(t)\|_{H^1}$ is LISS to $\|\bar{\mathcal{R}}^{-1}(t)-\mathcal{R}^{-1}(t)\|_{H^1}$.
\end{itemize}
\end{theorem}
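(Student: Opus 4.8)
The plan is to prove the three claims in order, working throughout in the state space $H^1(\Omega)$ and exploiting mass conservation. By Theorems~\ref{thm:well-posedness} and \ref{thm:well-posedness of suboptimal}, $\langle p(t),\mathbf 1\rangle_{L^2}=\langle\hat p(t),\mathbf 1\rangle_{L^2}=1$ for all $t$, so the estimation error $\tilde p(t)=\hat p(t)-p(t)$ always lies in $\mathbf 1^{\perp}$, the orthogonal complement of the constant function $\mathbf 1$ in $H^1(\Omega)$. Since the mass-conservation identity for \eqref{eq:optimal Riccati} (resp.\ \eqref{eq:suboptimal Riccati}) shows that $\mathbf 1$ spans the kernel of $\mathcal P(t)$ (resp.\ $\bar{\mathcal P}(t)$), the generalized inverses $\mathcal P^{\dagger}(t),\bar{\mathcal P}^{\dagger}(t)$ act as genuine inverses on $\mathbf 1^{\perp}$, are self-adjoint there, and by \eqref{eq:uniform positivity centralized} satisfy $c_1\|z\|_{H^1}^2\le\langle z,\mathcal P^{\dagger}(t)z\rangle_{H^1},\langle z,\bar{\mathcal P}^{\dagger}(t)z\rangle_{H^1}\le c_2\|z\|_{H^1}^2$ for $z\in\mathbf 1^{\perp}$. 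All Lyapunov computations below take place on this subspace.

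For (i), I would take the time-varying candidate $V(t,\tilde p)=\langle\tilde p,\bar{\mathcal P}^{\dagger}(t)\tilde p\rangle_{H^1}$, which satisfies condition~(i) of Definition~\ref{dfn:(L)ISS-Lyapunov function} with $\psi_1(r)=c_1r^2$ and $\psi_2(r)=c_2r^2$, both in $\mathscr K_{\infty}$. Differentiating along \eqref{eq:estimation error equation}, using $\frac{d}{dt}\bar{\mathcal P}^{\dagger}=-\bar{\mathcal P}^{\dagger}\dot{\bar{\mathcal P}}\bar{\mathcal P}^{\dagger}$ on $\mathbf 1^{\perp}$, the Riccati equation \eqref{eq:suboptimal Riccati}, the self-adjointness of $\bar{\mathcal P},\bar{\mathcal P}^{\dagger},\bar{\mathcal R}^{-1}$, and the identities $\bar{\mathcal P}\bar{\mathcal P}^{\dagger}\tilde p=\tilde p$ and $\mathcal A^*\mathbf 1=\mathbf 0$, every term containing $\mathcal A$ cancels and one is left with
\[
\dot V=-\langle\bar{\mathcal R}^{-1}\tilde p,\tilde p\rangle_{H^1}+2\langle\bar{\mathcal R}^{-1}w,\tilde p\rangle_{H^1}\le-c_1\|\tilde p\|_{H^1}^2+2c_2\|w\|_{H^1}\|\tilde p\|_{H^1}.
\]
Hence $\|\tilde p\|_{H^1}\ge\chi(\|w\|_{H^1})$ with $\chi(r):=(4c_2/c_1)r\in\mathscr K$ implies $\dot V\le-\tfrac{c_1}{2}\|\tilde p\|_{H^1}^2=:-W(\|\tilde p\|_{H^1})$, so $V$ is an ISS-Lyapunov function with $\rho_x=\rho_u=\infty$, and Theorem~\ref{thm:(L)ISS-Lyapunov function} yields that $\|\hat p(t)-p(t)\|_{H^1}$ is ISS to $\|w(t)\|_{H^1}$.

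For (ii), set $\Delta(t)=\bar{\mathcal P}(t)-\mathcal P(t)$ and subtract \eqref{eq:optimal Riccati} from \eqref{eq:suboptimal Riccati}; writing $\delta\mathcal R^{-1}=\bar{\mathcal R}^{-1}-\mathcal R^{-1}$ and expanding the quadratic terms around $\mathcal P,\mathcal R^{-1}$ gives $\dot\Delta=\mathcal F\Delta+\Delta\mathcal F^*-\mathcal G$, where $\mathcal F(t)=\mathcal A(t)-\mathcal P(t)\mathcal R^{-1}(t)$ is the optimal closed-loop operator and $\mathcal G=\Delta\mathcal R^{-1}\Delta+\mathcal P\,\delta\mathcal R^{-1}\,\mathcal P+\mathcal P\,\delta\mathcal R^{-1}\,\Delta+\Delta\,\delta\mathcal R^{-1}\,\mathcal P+\Delta\,\delta\mathcal R^{-1}\,\Delta$. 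Repeating the computation of part (i) with $V(z)=\langle z,\mathcal P^{\dagger}(t)z\rangle_{H^1}$ along $\dot z=\mathcal Fz$ gives $\dot V\le-\langle\mathcal R^{-1}z,z\rangle_{H^1}\le-(c_1/c_2)V$, so the (non-autonomous) evolution family $U(t,s)$ generated by $\mathcal F$ on $\mathbf 1^{\perp}$ obeys $\|U(t,s)\|_{H^1}\le\sqrt{c_2/c_1}\,e^{-\omega(t-s)}$ with $\omega=c_1/(2c_2)$. The variation-of-constants representation $\Delta(t)=U(t,0)\Delta(0)U^*(t,0)-\int_0^tU(t,s)\mathcal G(s)U^*(t,s)\,ds$ then gives
\[
\|\Delta(t)\|_{H^1}\le\tfrac{c_2}{c_1}e^{-2\omega t}\|\Delta(0)\|_{H^1}+\tfrac{c_2}{c_1}\int_0^te^{-2\omega(t-s)}\|\mathcal G(s)\|_{H^1}\,ds,
\]
and the uniform bounds on $\|\mathcal P\|_{H^1},\|\mathcal R^{-1}\|_{H^1},\|\bar{\mathcal R}^{-1}\|_{H^1}$ yield $\|\mathcal G(s)\|_{H^1}\le a_1\|\delta\mathcal R^{-1}(s)\|_{H^1}+\big(a_2\|\Delta(s)\|_{H^1}+a_3\|\delta\mathcal R^{-1}(s)\|_{H^1}\big)\|\Delta(s)\|_{H^1}$. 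On the region $\|\Delta\|_{H^1}\le\rho_x$, $\|\delta\mathcal R^{-1}\|_{H^1}\le\rho_u$ with $\rho_x,\rho_u$ small the parenthetical factor is dominated by some $\epsilon<\omega$, and a Gronwall estimate combined with a bootstrap argument (the smallness of $\rho_x,\rho_u$ keeps the trajectory inside the region) yields $\|\Delta(t)\|_{H^1}\le\beta(\|\Delta(0)\|_{H^1},t)+\gamma\big(\sup_{0\le s\le t}\|\delta\mathcal R^{-1}(s)\|_{H^1}\big)$ for some $\beta\in\mathscr{KL}$, $\gamma\in\mathscr K$, i.e.\ LISS. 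Part (iii) then follows from the algebraic identity $\bar{\mathcal K}-\mathcal K=\bar{\mathcal P}(\bar{\mathcal R}^{-1}-\mathcal R^{-1})+(\bar{\mathcal P}-\mathcal P)\mathcal R^{-1}$, which gives $\|\bar{\mathcal K}(t)-\mathcal K(t)\|_{H^1}\le\sup_t\|\bar{\mathcal P}(t)\|_{H^1}\,\|\delta\mathcal R^{-1}(t)\|_{H^1}+c_2\|\Delta(t)\|_{H^1}$; substituting the LISS bound for $\|\Delta(t)\|_{H^1}$ from (ii) (equivalently, applying Theorem~\ref{thm:(L)ISS cascade} to the cascade of the $\Delta$-subsystem with this memoryless map) shows $\|\bar{\mathcal K}(t)-\mathcal K(t)\|_{H^1}$ is LISS to $\|\bar{\mathcal R}^{-1}(t)-\mathcal R^{-1}(t)\|_{H^1}$.

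The main obstacle is part (ii). One must make rigorous, on the subspace $\mathbf 1^{\perp}$, the differentiation of $t\mapsto\langle z,\mathcal P^{\dagger}(t)z\rangle_{H^1}$ and the identity $\frac{d}{dt}\mathcal P^{\dagger}=-\mathcal P^{\dagger}\dot{\mathcal P}\mathcal P^{\dagger}$, using the regularity $\mathcal P\varphi\in H^1(\Omega_T)$ from Theorem~\ref{thm:well-posedness of optimal} and the properties of the generalized inverse from the appendix; one must extract the exponential decay of the non-autonomous evolution family $U(t,s)$ in the $\mathscr L(H^1,H^1)$-norm from the pointwise Lyapunov estimate; and one must handle the fact that the ``disturbance'' $\mathcal G$ contains the state-dependent quadratic term $\Delta\mathcal R^{-1}\Delta$, which is precisely what downgrades the conclusion from ISS to LISS and necessitates the bootstrap confinement of the trajectory. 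Parts (i) and (iii) are comparatively routine once the $\mathbf 1^{\perp}$ structure and the uniform bounds \eqref{eq:uniform positivity centralized} are in hand.
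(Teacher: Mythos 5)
Your proposal is correct, and parts (i) and (iii) follow essentially the paper's own argument: the same Lyapunov function $\langle\bar{\mathcal P}^{\dagger}\tilde p,\tilde p\rangle_{H^1}$ built on the generalized inverse and the mass-conservation constraint $\langle\tilde p,\mathbf 1\rangle_{L^2}=0$ (the paper treats the unforced system and then appeals to linearity rather than carrying $w$ through the ISS-Lyapunov definition, but that is cosmetic), and the same add-and-subtract identity plus the cascade theorem for the gain error. The genuine divergence is in part (ii). You expand the Riccati difference $\Delta=\bar{\mathcal P}-\mathcal P$ entirely around the optimal pair $(\mathcal P,\mathcal R^{-1})$, which produces the symmetric closed-loop generator $\mathcal F=\mathcal A-\mathcal P\mathcal R^{-1}$ on both sides but leaves a forcing $\mathcal G$ containing the quadratic self-term $\Delta\mathcal R^{-1}\Delta$; this is what forces you into the local bootstrap you correctly identify as the main obstacle. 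The paper instead uses the mixed factorization $\dot\Gamma=(\mathcal A-\bar{\mathcal P}\bar{\mathcal R}^{-1})\Gamma+\Gamma(\mathcal A^{*}-\mathcal R^{-1}\mathcal P)-\bar{\mathcal P}(\bar{\mathcal R}^{-1}-\mathcal R^{-1})\mathcal P$, i.e.\ the suboptimal closed loop on the left and the optimal one on the right, which is an exact identity that is \emph{linear} in $\Gamma$ with a forcing linear in $\delta\mathcal R^{-1}$; since both evolution families are exponentially stable (by the $V_1$ and $V_2$ computations), a single variation-of-constants estimate on $\|\Gamma(t)q\|_{H^1}$ finishes the proof with no smallness assumption on $\Gamma$ itself. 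The paper's route is therefore shorter and avoids the Gronwall/bootstrap machinery entirely (its conclusion is stated as LISS only out of caution, since the left generator's stability constants rest on the assumed uniform bounds on $\bar{\mathcal P},\bar{\mathcal R}^{-1}$); your route costs more work and genuinely only yields a local result, but it has the merit of isolating the perturbation structure around the optimal filter and making explicit exactly where the quadratic nonlinearity degrades ISS to LISS. Both are valid proofs of the stated LISS claim.
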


\begin{proof}
(i) We study the unforced part of \eqref{eq:estimation error equation}, i.e., assume $w=0$.
Consider a Lyapunov function $V_1=\langle\bar{\mathcal{P}}^{\dagger}(t)\tilde{p}(t),\tilde{p}(t)\rangle_{H^1}$.
By the definition of $\bar{\mathcal{P}}^\dagger(t)$, we see that $0$ is a simple eigenvalue of $\bar{\mathcal{P}}^\dagger(t)$ for all $t$, with $\mathbf{1}$ being an associated eigenfunction.
Then $V_1=0$ if and only if $\tilde{p}=\text{constant}$, which would necessarily be 0 because $\langle\tilde{p},\mathbf{1}\rangle_{L^2}=\langle\hat{p}-p,\mathbf{1}\rangle_{L^2}=0$ (the mass conservation property).
By appropriate approximation arguments, we have
\begin{align*}
    \dot{V}_1 &= \big\langle\bar{\mathcal{P}}^\dagger\dot{\tilde{p}},\tilde{p}\big\rangle_{H^1} + \big\langle\bar{\mathcal{P}}^\dagger\tilde{p},\dot{\tilde{p}}\big\rangle_{H^1} - \big\langle\bar{\mathcal{P}}^\dagger\dot{\bar{\mathcal{P}}}\bar{\mathcal{P}}^\dagger\tilde{p},\tilde{p}\big\rangle_{H^1}\\
    &= \big\langle\bar{\mathcal{P}}^\dagger(A-\bar{\mathcal{P}}\bar{R}^{-1})\tilde{p},\tilde{p}\big\rangle_{H^1} + \big\langle(\mathcal{A}^*-\bar{\mathcal{R}}^{-1}\bar{\mathcal{P}})\bar{\mathcal{P}}^\dagger\tilde{p},\tilde{p}\big\rangle_{H^1}\\
    & \quad -\big\langle \bar{\mathcal{P}}^\dagger(\mathcal{A}\bar{\mathcal{P}}+\bar{\mathcal{P}}\mathcal{A}^{*}-\bar{\mathcal{P}}\bar{\mathcal{R}}^{-1}\bar{\mathcal{P}})\bar{\mathcal{P}}^\dagger\tilde{p},\tilde{p} \big\rangle_{H^1}\\
    &= -\langle \bar{\mathcal{R}}^{-1}\tilde{p},\tilde{p} \rangle_{H^1}.
\end{align*}
Hence, the unforced part of \eqref{eq:estimation error equation} is uniformly exponentially stable. 
Since \eqref{eq:estimation error equation} is a linear control system, with the uniform boundedness of $\bar{\mathcal{P}}$ and $\bar{\mathcal{R}}^{-1}$, we obtain that $\|\hat{p}(t)-p(t)\|_{H^1}$ is ISS to $\|w(t)\|_{H^1}$.





(ii) Define $\Gamma=\bar{{\mathcal{P}}}-{\mathcal{P}}$. 
Using \eqref{eq:optimal Riccati} and \eqref{eq:suboptimal Riccati} we have
\begin{equation} \label{eq:Riccati error equation1}
\begin{aligned}
    &\dot{\Gamma} = \mathcal{A}(t)\Gamma+\Gamma \mathcal{A}^*(t) - \bar{{\mathcal{P}}}\bar{\mathcal{R}}^{-1}(t)\bar{{\mathcal{P}}} + {\mathcal{P}} \mathcal{R}^{-1}(t){\mathcal{P}}.
\end{aligned}
\end{equation}
Similarly, by considering a Lyapunov function defined by $V_2=\langle\mathcal{P}^\dagger(t)\tilde{p}(t),\tilde{p}(t)\rangle_{H^1}$, one can show that the following system along $\mathcal{P}(t)$ is also uniformly exponentially stable:
\begin{equation} \label{eq:unforced estimation error equation2}
    \dot{\tilde{p}} = (\mathcal{A}-\mathcal{P}\mathcal{R}^{-1})\tilde{p}.
\end{equation}
Note that the inner products in $V_1$ and $V_2$ are equivalent because of assumption \eqref{eq:uniform positivity centralized}.
We rewrite \eqref{eq:Riccati error equation1} as
\begin{align*}
    \dot{\Gamma}
    &=\mathcal{A}\Gamma+\Gamma \mathcal{A}^{*} - \bar{\mathcal{P}}\bar{\mathcal{R}}^{-1}\Gamma - \bar{\mathcal{P}} \bar{\mathcal{R}}^{-1}\mathcal{P} - \Gamma \mathcal{R}^{-1}\mathcal{P} + \bar{\mathcal{P}}\mathcal{R}^{-1}\mathcal{P}\\
    &=(\mathcal{A}-\bar{\mathcal{P}}\bar{\mathcal{R}}^{-1})\Gamma + \Gamma(\mathcal{A}^{*}-\mathcal{R}^{-1}\mathcal{P}) - \bar{\mathcal{P}}(\bar{\mathcal{R}}^{-1}-\mathcal{R}^{-1})\mathcal{P}
\end{align*}
which is essentially a linear equation. 
Now fix $q\in H^1(\Omega)$ with $\|q\|_{H^1}=1$. 
Since the unforced part of \eqref{eq:estimation error equation} and \eqref{eq:unforced estimation error equation2} are uniformly exponentially stable, and $\|\bar{\mathcal{P}}(t)\|_{H^1}$ and $\|\mathcal{P}(t)\|_{H^1}$ are uniformly bounded, $\exists$ constants $\lambda,c>0$ such that
\begin{align*}
    \|\Gamma(t)q\|_{H^1}
    &\leq e^{-\lambda t}\|\Gamma(0)q\|_{H^1}\\
    &\quad+\frac{c}{\lambda}\sup_{0\leq\tau\leq t}\|\bar{R}^{-1}(\tau)-R^{-1}(\tau)\|_{H^1},
\end{align*}
i.e., $\|\bar{\mathcal{P}}(t)-\mathcal{P}(t)\|_{H^1}$ is LISS to $\|\Bar{\mathcal{R}}^{-1}(t)-\mathcal{R}^{-1}(t)\|_{H^1}$.

(ii) To prove the third statement, observe that
\begin{align*}
    &\quad\|\Bar{\mathcal{K}}(t)-\mathcal{K}(t)\|_{H^1}\\ &=\|(\Bar{\mathcal{P}}\Bar{\mathcal{R}}^{-1}-\mathcal{P}\mathcal{R}^{-1})(t)\|_{H^1} \\
    &\leq \|(\Bar{\mathcal{P}}\Bar{\mathcal{R}}^{-1}-\mathcal{P}\Bar{\mathcal{R}}^{-1} + \mathcal{P}\Bar{\mathcal{R}}^{-1}-\mathcal{P}\mathcal{R}^{-1})(t)\|_{H^1} \\
    &\leq \|\Bar{\mathcal{R}}^{-1}(t)\|_{H^1}\|\Bar{\mathcal{P}}(t)-\mathcal{P}(t)\|_{H^1}\\
    &\quad+\|\mathcal{P}(t)\|_{H^1}\|\Bar{\mathcal{R}}^{-1}(t)-\mathcal{R}^{-1}(t)\|_{H^1}.
\end{align*}
By Theorem \ref{thm:(L)ISS cascade}, $\|\Bar{\mathcal{K}}(t)-\mathcal{K}(t)\|_{H^1}$ is LISS to $\|\Bar{\mathcal{R}}^{-1}(t)-\mathcal{R}^{-1}(t)\|_{H^1}$.
\end{proof}

\begin{remark}
A few comments are in order.
By taking advantage of the dynamics, the density filter essentially combines all past outputs (constructed using KDE) to produce better and convergent estimates, which largely circumvent the problem of bandwidth selection of KDE \cite{silverman1986density}.
The fact that $\hat{p}(t)$ is convergent in $H^1(\Omega)$ implies that the $L^2$ norm of its gradient is also convergent.
This implication is significant as it is observed that the existing mean-field feedback laws are more or less based on \textit{gradient flows} \cite{villani2008optimal}, which use the gradient of density estimates as feedback \cite{eren2017velocity, krishnan2018distributed, elamvazhuthi2018pde, zheng2020transporting}.
Since the gradient operator is an unbounded operator, any algorithm that produces accurate density estimates may have arbitrarily large estimation error for the gradient.
However, by formulating a filtering problem and studying its solution property, we are able to obtain density estimates whose gradient is also convergent.
\end{remark}





\section{Distributed density estimation}
\label{section:distributed density estimation}
In this section, we study the distributed density estimation problem.
We present a distributed density filter by integrating communication protocols into \eqref{eq:suboptimal density filter} and \eqref{eq:suboptimal Riccati}, and study its convergence and optimality.
We reformulate \eqref{eq:evolution equation of density} in the following distributed form:
\begin{equation} \label{eq:distributed evolution equation of density}
\begin{aligned}
    &\Dot{p}(t) = \mathcal{A}(t)p(t),
    \\
    &z_i(t) =K_i(t), \quad i=1,\dots,N,
\end{aligned}
\end{equation}
where $K_i(t)=\frac{1}{h^n}K\big(\frac{1}{h}(x-X_t^i)\big)$ is a kernel density function centered at position $X_t^i$.
We view $z_i(t)$ as the local observation made by the $i$-th agent.
The challenge of distributed density estimation lies in that each agent alone does not have any meaningful observation of the unknown density $p(t)$, because its local observation $z_i(t)$ is simply a kernel centered at position $X_t^i$, which conveys no information about $p(t)$.

\subsection{Local density filter}

Inspired by the distributed Kalman filters in \cite{olfati2007distributed}, we design the local density filter for the $i$-th agent ($i=1\dots,N$) as
\begin{align} 
    &\Dot{\Hat{p}}_i = \mathcal{A}(t)\Hat{p}_i+\mathcal{L}_i(t)(y_i(t)-\Hat{p}_i)+\theta\mathcal{P}_i(t)\sum_{j\in\mathcal{N}_i}(\hat{p}_j-\hat{p}_i), \label{eq:local density filter}\\
    &\mathcal{K}_i(t) = \mathcal{P}_i(t)\mathcal{R}_i(t), \\
    &\dot{\mathcal{P}}_i=\mathcal{A}(t)\mathcal{P}_i+\mathcal{P}_i\mathcal{A}^*(t)-\mathcal{P}_i\mathcal{R}_i^{-1}(t)\mathcal{P}_i, \label{eq:local Riccati}\\
    &\Hat{p}_i(0)=y_i(0),\quad \mathcal{P}_i(0)=\mathcal{P}_{i0},
\end{align}
where $\theta\geq0$, $y_i(t)$ is to be constructed later, $\mathcal{R}_i(t)=\Bar{k}\operatorname{diag}(y_i(t))$, and $\mathcal{N}_i$ represents the neighbors of agent $i$.
The algorithm we reported in \cite{zheng2021distributed} can be seen as a special case ($\theta=0$) of the above algorithm.

An important observation is that the quantities in \eqref{eq:suboptimal density filter} satisfy
\begin{align*}
    y(t)=\frac{1}{N}\sum_{i=1}^Nz_i(t),\quad\Bar{\mathcal{R}}(t)=\operatorname{diag}(\frac{1}{N}\sum_{i=1}^Nz_i(t)).
\end{align*}
These relations suggest that if we can design algorithms such that $y_i(t)\to\frac{1}{N}\sum_i^Nz_i(t)$ (which then implies $\mathcal{R}_i(t)\to\Bar{\mathcal{R}}(t)$), then the local filter \eqref{eq:local density filter} and \eqref{eq:local Riccati} should converge to the centralized filter \eqref{eq:suboptimal density filter} and \eqref{eq:suboptimal Riccati}.
The associated problem of tracking the average of $N$ time-varying signals is called dynamic average consensus.
We use a variant of the proportional-integral (PI) consensus estimator given in \cite{freeman2006stability}, which is a low-pass filter and therefore suitable for our problem.
Some useful properties are summarized in Appendices \ref{section:dynamic average consensus}.
According to \eqref{eq:PI consensus estimator}, we construct $y_i$ in the following way:
\begin{align}\label{eq:PI consensus estimator for kernels}
\begin{split}
    &\partial_t\psi_i =-\alpha(\psi_{i}-z_{i})\\
    &\qquad\quad-\sum_{j\in\mathcal{N}_i}a_{ij}(\psi_{i}-\psi_{j})+\sum_{j\in\mathcal{N}_i} b_{ji}(\phi_{i}-\phi_{j}),\\
    &\partial_t\phi_i =-\sum_{j\in\mathcal{N}_i} b_{ij}(\psi_{i}-\psi_{j}),\\
    &\psi_i(0)=z_i(0), \quad\phi_i(0)=\phi_{i0},\\
    &y_i = \frac{\psi_i+d_i}{\|\psi_i+d_i\|_{L^1}},
\end{split}
\end{align}
where $\alpha,a_{ij},b_{ij}>0$ are constants, $\phi_{i0}$ needs to be a density function, and $d_i$ is defined by
\begin{equation*}
    d_i =
    \begin{cases}
    |c-\inf_x\psi_i(x)| & \text{if }\inf_x\psi_i(x)<c,\\
    0 & \text{if }\psi_i(x)\geq c,
    \end{cases}
\end{equation*}
where $c>0$ is a pre-specified small constant.
The first two equations of \eqref{eq:PI consensus estimator for kernels} are based on \eqref{eq:PI consensus estimator}, i.e., the consensus algorithm is performed in a pointwise manner.
The last equation is added to redistribute the negative mass of $\psi_i$ to other spatial areas such that the output becomes positive and conserves mass. 
Note that $\langle y_i,\mathbf{1}\rangle_{L^2}=1$, and $y_i=\psi_i$ if $\psi_i\geq c,\forall x$.
Hence, as long as there exists $t_c>0$ such that $\psi_i(t)\geq c$ for $t\geq t_c$ (which should be expected because $\psi_i$ is meant to track the strictly positive density function $p_{\text{KDE}}(t)$), we can assume $y_i=\psi_i$ in the convergence analysis.

First, we show that the internal states of \eqref{eq:PI consensus estimator for kernels} conserve mass.

\begin{theorem}
The solution of \eqref{eq:PI consensus estimator for kernels} satisfies $\langle \psi_i(t),\mathbf{1}\rangle_{L^2}=\langle \phi_i(t),\mathbf{1}\rangle_{L^2}=1,\forall i,\forall t\geq 0$.

\end{theorem}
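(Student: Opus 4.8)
The plan is to integrate \eqref{eq:PI consensus estimator for kernels} over $\Omega$ and reduce the claim to uniqueness for a finite-dimensional linear ODE. The key structural observation is that \eqref{eq:PI consensus estimator for kernels} involves no spatial derivatives — the consensus update is carried out pointwise in $x$ — so for each fixed $x\in\Omega$ it is an ODE system in the $2N$ scalars $\psi_i(x,\cdot),\phi_i(x,\cdot)$ driven by the smooth kernels $z_i(x,\cdot)$. Using the boundedness and smoothness of $z_i$ together with well-posedness of this driven linear system, I would differentiate under the integral sign to conclude that the mass functionals $m_i^\psi(t):=\langle\psi_i(t),\mathbf{1}\rangle_{L^2}$ and $m_i^\phi(t):=\langle\phi_i(t),\mathbf{1}\rangle_{L^2}$ obey the system obtained by applying $\langle\cdot,\mathbf{1}\rangle_{L^2}$ term by term to \eqref{eq:PI consensus estimator for kernels}:
\begin{align*}
\dot m_i^\psi &= -\alpha\big(m_i^\psi-\langle z_i,\mathbf{1}\rangle_{L^2}\big)-\sum_{j\in\mathcal{N}_i}a_{ij}(m_i^\psi-m_j^\psi)+\sum_{j\in\mathcal{N}_i}b_{ji}(m_i^\phi-m_j^\phi),\\
\dot m_i^\phi &= -\sum_{j\in\mathcal{N}_i}b_{ij}(m_i^\psi-m_j^\psi).
\end{align*}

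Next I would pin down the forcing and the initial data. By the kernel normalization — the same property that makes $p_{\text{KDE}}=\frac1N\sum_i z_i$ a density — we have $\langle z_i(t),\mathbf{1}\rangle_{L^2}=\int_\Omega\frac{1}{h^n}K\big(\tfrac1h(x-X_t^i)\big)\,dx=1$ for every $t$; in particular $m_i^\psi(0)=\langle z_i(0),\mathbf{1}\rangle_{L^2}=1$. Likewise $m_i^\phi(0)=\langle\phi_{i0},\mathbf{1}\rangle_{L^2}=1$, since $\phi_{i0}$ is required to be a density function.

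With these facts inserted, the argument closes immediately: substituting the constant trajectory $m_i^\psi\equiv 1$, $m_i^\phi\equiv 1$ into the displayed ODE system annihilates every right-hand side — the term $-\alpha(1-\langle z_i,\mathbf{1}\rangle_{L^2})=-\alpha(1-1)$ vanishes, and so does each graph difference $1-1$ — so the all-ones trajectory solves the system and matches the initial data just recorded. Since the system is linear with (piecewise-continuous) time-varying coefficients determined by $G(t)$, its solution through that initial datum is unique, giving $m_i^\psi(t)=m_i^\phi(t)=1$ for all $i$ and all $t\ge0$. Equivalently, one may pass to the deviations $e_i^\psi=m_i^\psi-1$ and $e_i^\phi=m_i^\phi-1$, which then satisfy a \emph{homogeneous} linear system with zero initial condition and therefore vanish identically. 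I expect the only points requiring more than routine care to be (a) justifying the differentiation under the integral sign, which rests on the regularity of the kernels and the driven linear dynamics, and (b) the kernel-mass identity $\langle z_i(t),\mathbf{1}\rangle_{L^2}=1$, which tacitly assumes the bandwidth is small enough (or $K$ compactly supported with the agents bounded away from $\partial\Omega$) that the boundary does not truncate the kernel mass — the same standing convention already used in defining $p_{\text{KDE}}$.
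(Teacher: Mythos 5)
Your proposal is correct and matches the paper's argument: the paper also uses $\langle z_i(t),\mathbf{1}\rangle_{L^2}=1$, passes to the deviations $\bar\psi_i=\psi_i-\mathbf{1}$, $\bar\phi_i=\phi_i-\mathbf{1}$, and observes that their masses satisfy a homogeneous linear ODE system with zero initial data, hence vanish identically — exactly your ``equivalently'' formulation. Your added remarks on differentiating under the integral and on boundary truncation of the kernel mass are reasonable points of care that the paper leaves implicit.
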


\begin{proof}
First, since $z_i(t)=K_i(t)$, we have $\langle z_{i}(t),\mathbf{1}\rangle_{L^2}=1,\forall t$.
Now consider change of variables: $\bar{\psi}_i=\psi_i-\mathbf{1}$ and $\bar{\phi}_i=\phi_i-\mathbf{1}$.
Then $\langle\bar{\psi}_i(0),\mathbf{1}\rangle_{L^2}=\langle\bar{\phi}_i(0),\mathbf{1}\rangle_{L^2}=0$, and
\begin{align*}
\begin{split}
    \langle\partial_t\bar{\psi}_i,\mathbf{1}\rangle_{L^2} &= -\alpha\langle \bar{\psi}_i,\mathbf{1}\rangle_{L^2}-\sum_{j=1}^{N} a_{i j}\langle \bar{\psi}_i-\bar{\psi}_j,\mathbf{1}\rangle_{L^2}\\
    &\quad+\sum_{j=1}^{N} b_{j i}\langle \bar{\phi}_{i}-\bar{\phi}_{j},\mathbf{1}\rangle_{L^2}, \\
    \langle\partial_t\bar{\phi}_i,\mathbf{1}\rangle_{L^2} &=-\sum_{j=1}^{N} b_{i j}\langle \bar{\psi}_i-\bar{\psi}_j,\mathbf{1}\rangle_{L^2},
\end{split}
\end{align*}
i.e., $(\langle\bar{\psi}_i(t),\mathbf{1}\rangle_{L^2},\langle\bar{\phi}_i(t),\mathbf{1}\rangle_{L^2})$ satisfies a family of homogeneous equations with zero initial conditions.
Hence, $\langle\bar{\psi}_i(t),\mathbf{1}\rangle_{L^2}=\langle\bar{\phi}_i(t),\mathbf{1}\rangle_{L^2}=0$ and $\langle\psi_i(t),\mathbf{1}\rangle_{L^2}=\langle\phi_i(t),\mathbf{1}\rangle_{L^2}=1$ for all $t\geq0$.
\end{proof}

Now we show that the local quantities indeed track the central quantities.
Define $\mathbf{z}(x,t)=(z_1(x,t),\dots,z_N(x,t))$ and $\mathbf{\Pi}=\mathbf{I}_N-\frac{1}{N}\mathbf{1}_N\mathbf{1}_N^\top$ where $\mathbf{I}_N\in\mathbb{R}^{N\times N}$ is the identity matrix and $\mathbf{1}_N=(1,\dots,1)\in\mathbb{R}^N$.
We have the following theorem.

\begin{theorem}\label{thm:ISS of PI consensus estimator for kernels}
Assume the communication topology $G(t)$ is always connected.
If $\exists t_c>0$ such that $\psi_i(t)\geq c$ for $t\geq t_c$, then $\|y_i(t)-y(t)\|_{H^1}$ and $\|\mathcal{R}_i(t)-\Bar{\mathcal{R}}(t)\|_{H^1}$ are ISS to $\|\mathbf{\Pi}\dot{\mathbf{z}}(t)\|_{H^1}$.
Moreover, if $\|\mathcal{R}_i^{-1}(t)\|_{H^1}$ and $\|\Bar{\mathcal{R}}^{-1}(t)\|_{H^1}$ are uniformly bounded, then $\|\mathcal{R}_i^{-1}(t)-\Bar{\mathcal{R}}^{-1}(t)\|_{H^1}$ is also ISS to $\|\mathbf{\Pi}\dot{\mathbf{z}}(t)\|_{H^1}$.
\end{theorem}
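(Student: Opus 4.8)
The plan is to reduce the three assertions to a single core estimate — input-to-state stability of the dynamic average consensus error $\psi_i-\bar{z}$, where $\bar{z}:=\tfrac{1}{N}\sum_{j=1}^{N}z_j=\tfrac{1}{N}\sum_{j=1}^{N}K_j$ — and then propagate that estimate through two elementary bounded maps using the cascade principle (Theorem~\ref{thm:(L)ISS cascade}). The first step is to remove the normalization nonlinearity in \eqref{eq:PI consensus estimator for kernels}. By the mass-conservation property of \eqref{eq:PI consensus estimator for kernels} established above, $\langle\psi_i(t),\mathbf{1}\rangle_{L^2}=1$ for all $t$; hence on $[t_c,\infty)$, where $\psi_i(t)\ge c>0$, we have $\|\psi_i(t)\|_{L^1}=\int_\Omega\psi_i(t)\,dx=1$ and $d_i(t)\equiv 0$, so that $y_i(t)=\psi_i(t)$. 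Combining this with the identities $y=p_{\text{KDE}}=\bar{z}$ and $\bar{\mathcal{R}}=\bar{k}\operatorname{diag}(\bar{z})$ noted before \eqref{eq:PI consensus estimator for kernels}, and with $\mathcal{R}_i=\bar{k}\operatorname{diag}(y_i)$, we obtain for $t\ge t_c$ that $y_i-y=\psi_i-\bar{z}$ and $\mathcal{R}_i-\bar{\mathcal{R}}=\bar{k}\operatorname{diag}(\psi_i-\bar{z})$. The behavior on the finite interval $[0,t_c]$ enters only through the initial data of an ISS estimate, so it suffices to bound $\|\psi_i(t)-\bar{z}(t)\|_{H^1}$ for $t\ge t_c$.

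For the core estimate, I would stack $(\psi_1,\dots,\psi_N)$, $(\phi_1,\dots,\phi_N)$ and $\mathbf{z}=(z_1,\dots,z_N)$ into $(H^1(\Omega))^N$-valued functions and observe that, evaluated at each fixed $x$, \eqref{eq:PI consensus estimator for kernels} is exactly the finite-dimensional PI dynamic average consensus estimator of \eqref{eq:PI consensus estimator} driven by the reference $\mathbf{z}(x,\cdot)$, whose coefficients $\alpha,a_{ij},b_{ij}$ do not depend on $x$. Invoking the ISS property of that estimator from Appendix~\ref{section:dynamic average consensus} — which, under the standing assumption that $G(t)$ is connected, rests on the integral-action change of variables that renders the disagreement dynamics uniformly exponentially stable with forcing $\mathbf{\Pi}\dot{\mathbf{z}}$, the common mode being tracked by the internal low-pass filter — gives for each $x$ a \emph{linear} ISS estimate for $\psi_i(x,\cdot)-\bar{z}(x,\cdot)$ with decay rate $\lambda>0$ and gain $M$ independent of $x$ and with input $\mathbf{\Pi}\dot{\mathbf{z}}(x,\cdot)$. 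Differentiating the consensus ODEs in $x_k$ shows that $\partial_{x_k}(\psi_i-\bar{z})$ obeys the same uniformly exponentially stable dynamics with forcing $\partial_{x_k}(\mathbf{\Pi}\dot{\mathbf{z}})$. Writing the solution in variation-of-constants form, squaring, integrating over $\Omega$, and applying Minkowski's integral inequality then upgrades these pointwise bounds to $\|\psi_i(t)-\bar{z}(t)\|_{H^1}\le Me^{-\lambda(t-t_c)}\|\psi_i(t_c)-\bar{z}(t_c)\|_{H^1}+\tfrac{M}{\lambda}\sup_{t_c\le\tau\le t}\|\mathbf{\Pi}\dot{\mathbf{z}}(\tau)\|_{H^1}$; that is, $\|y_i(t)-y(t)\|_{H^1}$ is ISS to $\|\mathbf{\Pi}\dot{\mathbf{z}}(t)\|_{H^1}$.

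The two remaining assertions then follow by composing this estimate with bounded maps. Multiplication by a sufficiently regular function is a bounded operator on $H^1(\Omega)$, which holds here thanks to the smoothness of the kernel $K$; hence $\|\mathcal{R}_i-\bar{\mathcal{R}}\|_{H^1}=\bar{k}\|\operatorname{diag}(\psi_i-\bar{z})\|_{H^1}\le C\|\psi_i-\bar{z}\|_{H^1}$, and so $\|\mathcal{R}_i-\bar{\mathcal{R}}\|_{H^1}$ is ISS to $\|\mathbf{\Pi}\dot{\mathbf{z}}\|_{H^1}$. For the inverses, the second resolvent identity yields $\mathcal{R}_i^{-1}-\bar{\mathcal{R}}^{-1}=\mathcal{R}_i^{-1}(\bar{\mathcal{R}}-\mathcal{R}_i)\bar{\mathcal{R}}^{-1}$, hence $\|\mathcal{R}_i^{-1}-\bar{\mathcal{R}}^{-1}\|_{H^1}\le\|\mathcal{R}_i^{-1}\|_{H^1}\|\bar{\mathcal{R}}^{-1}\|_{H^1}\|\mathcal{R}_i-\bar{\mathcal{R}}\|_{H^1}$, which under the assumed uniform bounds on $\|\mathcal{R}_i^{-1}\|_{H^1}$ and $\|\bar{\mathcal{R}}^{-1}\|_{H^1}$ is once more a composition with a linear gain, so $\|\mathcal{R}_i^{-1}-\bar{\mathcal{R}}^{-1}\|_{H^1}$ is ISS to $\|\mathbf{\Pi}\dot{\mathbf{z}}\|_{H^1}$ as well. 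Each of these three compositions with a bounded gain is a degenerate case of Theorem~\ref{thm:(L)ISS cascade}.

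The step I expect to be the main obstacle is the core estimate: lifting the scalar, pointwise-in-$x$ consensus ISS bound to a genuine $H^1(\Omega)$ bound (differentiating the dynamics in $x$, controlling the resulting forcings and initial data, and verifying that the exponential decay rate is uniform in $x$), and, within that, obtaining this decay rate \emph{uniformly} over the switching, always-connected topology $G(t)$ — the integral-action change of variables that makes the forcing $\mathbf{\Pi}\dot{\mathbf{z}}$ rather than $\dot{\mathbf{z}}$ is time-varying under topology switches, and its time derivative must be absorbed without destroying exponential stability. By comparison, the mass-conservation reduction of the first step and the two bounded-map compositions of the last are routine once the consensus estimate is in hand.
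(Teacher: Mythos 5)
Your proposal is correct and follows essentially the same route as the paper's proof: invoke Lemma~\ref{lmm:ISS of PI estimator} for the $L^2$ consensus error, differentiate the consensus equations in $x$ to obtain the same ISS estimate for the spatial derivatives and hence the $H^1$ bound, deduce the $\mathcal{R}_i-\bar{\mathcal{R}}$ estimate from the diagonal structure, and close with the identity $\mathcal{R}_i^{-1}-\bar{\mathcal{R}}^{-1}=\mathcal{R}_i^{-1}(\bar{\mathcal{R}}-\mathcal{R}_i)\bar{\mathcal{R}}^{-1}$ under the uniform boundedness assumptions. Your write-up is in fact more explicit than the paper's at the two points it glosses over --- the reduction $y_i=\psi_i$ on $[t_c,\infty)$ via mass conservation, and the lifting of the finite-dimensional ISS bound to $H^1(\Omega)$ (which the paper dispatches with the phrase ``with proper generalizations to infinite-dimensional ODEs'') --- but these are elaborations of, not departures from, the published argument.
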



\begin{proof}
According to Lemma \ref{lmm:ISS of PI estimator} in Appendices \ref{section:dynamic average consensus} (with proper generalizations to infinite-dimensional ODEs), $\|y_i(t)-y(t)\|_{L^2}$ is ISS to $\|\mathbf{\Pi}\dot{\mathbf{z}}(t)\|_{L^2}$.
By taking the partial derivatives in $x$ on both sides of \eqref{eq:PI consensus estimator for kernels}, we see that their spatial derivatives satisfy the same equations and hence $\|\partial_jy_i(t)-\partial_jy(t)\|_{L^2}$ is ISS to $\|\mathbf{\Pi}\partial_j\dot{\mathbf{z}}(t)\|_{L^2},\forall j=1,\dots,n$.
As a result, $\|y_i(t)-y(t)\|_{H^1}$ is ISS to $\|\mathbf{\Pi}\dot{\mathbf{z}}(t)\|_{H^1}$, which also implies that $\|\mathcal{R}_i(t)-\Bar{\mathcal{R}}(t)\|_{H^1}$ is ISS to $\|\mathbf{\Pi}\dot{\mathbf{z}}(t)\|_{H^1}$.
Finally, notice that
\begin{align*}
    &\|\mathcal{R}_i^{-1}(t)-\Bar{\mathcal{R}}^{-1}(t)\|_{H^1}\\
    \leq&\|\mathcal{R}_i^{-1}(t)\|_{H^1}\|\Bar{\mathcal{R}}^{-1}(t)\|_{H^1}\|\mathcal{R}_i(t)-\Bar{\mathcal{R}}(t)\|_{H^1}
\end{align*}
which implies $\|\mathcal{R}_i^{-1}(t)-\Bar{\mathcal{R}}^{-1}(t)\|_{H^1}$ is also ISS.
\end{proof}

\begin{remark}
The above ISS property holds even if the network is switching, as long as it remains connected.
In practice, the network may not be always connected since the agents are mobile.
Nevertheless, the transient error caused by permanent dropout of agents will be slowly forgotten according to \eqref{eq:robust to initialization}.
The mass conservation property is also useful in numerical implementation because it helps to avoid error accumulation caused by numerical errors or agent dropout.
\end{remark}

\subsection{Stability and optimality}

Similar to the centralized filters, we have the following results for the local density filter \eqref{eq:local density filter} and the local operator Riccati equation \eqref{eq:local Riccati}.

\begin{theorem}\label{thm:well-posedness of local}
Assume that \eqref{eq:regularity condition 1} and \eqref{eq:regularity condition 2} hold, and that $\exists t_c>0$ such that $\psi_i(t)\geq c$ for $t\geq t_c$. 
We have
\begin{itemize}
    \item[(i)] (\textbf{Well-posedness}) The local operator Riccati equation \eqref{eq:local Riccati} has a unique weak solution $\mathcal{P}_i(t)\in\mathscr{L}(H^1(\Omega);H^1(\Omega))$ such that $\mathcal{P}_i\varphi\in H^1(\Omega_T)$ for $\varphi\in H^1(\Omega_T)$ with $\frac{d\varphi}{dt}+\mathcal{A}^*(t)\varphi\in L^2(0,T;H^1(\Omega))$.
    The local density filter \eqref{eq:local density filter} has a unique weak solution $\hat{p}_i\in H^1(\Omega_T)$. 
    \item[(ii)] (\textbf{Mass conservation}) The solution satisfies $\langle\hat{p}_i(t),\mathbf{1}\rangle_{L^2}=1,\forall t$.
\end{itemize}
\end{theorem}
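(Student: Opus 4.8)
The plan is to follow the proof of Theorem~\ref{thm:well-posedness of optimal} essentially verbatim; only two features distinguish the local filter from the centralized one, and neither is substantive. The observation covariance $\mathcal{R}_i(t)=\bar{k}\operatorname{diag}(y_i(t))$ is now built from the consensus estimator \eqref{eq:PI consensus estimator for kernels} rather than from the unknown density, so its admissibility must be checked, and the $N$ filters \eqref{eq:local density filter} are coupled through $\theta\mathcal{P}_i\sum_{j\in\mathcal{N}_i}(\hat{p}_j-\hat{p}_i)$, so \eqref{eq:local density filter} is a coupled system rather than a single Kalman filter. For the first point, note that $z_i=K_i$ is a fixed smooth kernel and \eqref{eq:PI consensus estimator for kernels} contains no spatial derivatives, so for each fixed $x$ the pair $(\psi_i(x,\cdot),\phi_i(x,\cdot))$ solves a finite-dimensional linear ODE driven by $\{z_j(x,\cdot)\}_j$; hence $\psi_i(\cdot,t)$ inherits the spatial smoothness of $K$ and lies in $W^{1,\infty}(\Omega)\subset H^1(\Omega)$ on the bounded domain $\Omega$. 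By construction $\psi_i+d_i\ge c>0$ pointwise, so $y_i(t)=(\psi_i(t)+d_i)/\|\psi_i(t)+d_i\|_{L^1}$ is a strictly positive $W^{1,\infty}(\Omega)$ density with $\langle y_i(t),\mathbf{1}\rangle_{L^2}=1$; under the hypothesis $\psi_i(t)\ge c$ for $t\ge t_c$ one has $y_i=\psi_i$ on $[t_c,T]$, while continuity on the compact interval $[0,t_c]$ yields a uniform positive lower bound for $y_i$ there as well. Consequently $\mathcal{R}_i(\cdot)$ and $\mathcal{R}_i^{-1}(\cdot)$ belong to $L^\infty(0,T;\mathscr{L}(H^1(\Omega);H^1(\Omega)))$, are self-adjoint, and are positive definite --- precisely what Theorem~\ref{thm:infinite-dimensional Kalman filter} demands of the observation covariance.

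For (i), the $N$ operator Riccati equations \eqref{eq:local Riccati} are mutually decoupled, each of the form covered by Theorem~\ref{thm:infinite-dimensional Kalman filter}; taking $V=H=H^1(\Omega)$, $W(0,T)=H^1(\Omega_T)$ and invoking Theorem~\ref{thm:well-posedness} for the underlying system \eqref{eq:FP equation} exactly as in the proof of Theorem~\ref{thm:well-posedness of optimal}, each has a unique weak solution $\mathcal{P}_i(t)\in\mathscr{L}(H^1(\Omega);H^1(\Omega))$ with the stated regularity. Once the $\mathcal{P}_i$ --- hence the gains $\mathcal{K}_i=\mathcal{P}_i\mathcal{R}_i^{-1}$ --- are fixed, \eqref{eq:local density filter} is a \emph{linear} evolution equation for the stacked state $\hat{\mathbf{p}}=(\hat{p}_1,\dots,\hat{p}_N)$ on $(H^1(\Omega))^N$: its generator is the block-diagonal operator $\operatorname{diag}(\mathcal{A}(t))$ plus the bounded operators $-\operatorname{diag}(\mathcal{K}_i(t))$ and the consensus coupling (bounded on $(H^1(\Omega))^N$ over $[0,T]$ because each $\mathcal{P}_i(t)$ is bounded there and the graph Laplacian is a fixed finite matrix), while the forcing $(\mathcal{K}_i(t)y_i(t))_i$ lies in $L^2(0,T;(H^1(\Omega))^N)$. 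These bounded perturbations and the additive forcing leave the existence/uniqueness argument underlying Theorem~\ref{thm:infinite-dimensional Kalman filter} intact, so the coupled system has a unique weak solution $\hat{\mathbf{p}}\in H^1(\Omega_T)^N$; in particular $\hat{p}_i\in H^1(\Omega_T)$ for each $i$.

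For (ii), I would repeat the argument of Theorem~\ref{thm:well-posedness of optimal}(ii). First, $\mathcal{P}_{i0}\mathbf{1}=\operatorname{E}[(\hat{p}_i(0)-p(0))\circ(\hat{p}_i(0)-p(0))]\mathbf{1}=\mathbf{0}$ because $\langle\hat{p}_i(0)-p(0),\mathbf{1}\rangle_{L^2}=1-1=0$; since $\mathcal{A}^*(t)\mathbf{1}=\mathbf{0}$, the map $t\mapsto\mathcal{P}_i(t)\mathbf{1}$ solves a homogeneous linear equation with zero initial data, so $\mathcal{P}_i(t)\mathbf{1}=\mathbf{0}$ for all $t$. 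Using this and the self-adjointness of $\mathcal{P}_i$,
\begin{equation*}
\begin{aligned}
\langle\dot{\hat{p}}_i,\mathbf{1}\rangle_{L^2}
&=\langle\hat{p}_i,\mathcal{A}^*\mathbf{1}\rangle_{L^2}
+\langle\mathcal{R}_i^{-1}(y_i-\hat{p}_i),\mathcal{P}_i\mathbf{1}\rangle_{L^2}\\
&\quad+\theta\big\langle\textstyle\sum_{j\in\mathcal{N}_i}(\hat{p}_j-\hat{p}_i),\mathcal{P}_i\mathbf{1}\big\rangle_{L^2}=0,
\end{aligned}
\end{equation*}
whence $\langle\hat{p}_i(t),\mathbf{1}\rangle_{L^2}=\langle\hat{p}_i(0),\mathbf{1}\rangle_{L^2}=\langle y_i(0),\mathbf{1}\rangle_{L^2}=1$ for all $t$.

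The main obstacle is not any single estimate but the bookkeeping around $y_i$: one must certify that the consensus output --- with its nonsmooth-in-$t$ correction $d_i$ and $L^1$ renormalization --- is a genuinely admissible, uniformly positive, $H^1(\Omega)$-valued observation, so that Theorem~\ref{thm:infinite-dimensional Kalman filter} can be applied to \eqref{eq:local density filter}--\eqref{eq:local Riccati} verbatim; this is exactly where the standing hypothesis ``$\psi_i(t)\ge c$ for $t\ge t_c$'' is used. The coupling of the $\hat{p}_i$ through the graph Laplacian, by contrast, is harmless, as it is merely a bounded perturbation of a block-diagonal, already well-posed system.
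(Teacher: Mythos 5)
Your proposal is correct and follows essentially the same route as the paper: the Riccati equations are handled exactly as in the centralized case, the coupled filter is treated as a block-diagonal well-posed system perturbed by bounded low-order coupling terms (the paper phrases this as "the coupling only occurs on the low-order terms, hence it lies in the framework of \eqref{eq:PDE system}"), and the mass-conservation computation via $\mathcal{P}_i(t)\mathbf{1}=\mathbf{0}$ and $\mathcal{A}^*\mathbf{1}=\mathbf{0}$ is identical. Your additional care in certifying that $y_i$ is a uniformly positive, $H^1$-valued admissible observation is detail the paper leaves implicit, but it does not change the argument.
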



\begin{proof}
(i) The well-posedness of \eqref{eq:local Riccati} for each $i$ follows from Theorems \ref{thm:infinite-dimensional Kalman filter} and \ref{thm:well-posedness}.
The local density filter \eqref{eq:local density filter} is a family of $N$ PDEs where the coupling only occurs on the low-order terms.
Hence, it lies in the framework of \eqref{eq:PDE system} and Theorem \ref{thm:infinite-dimensional Kalman filter}, and has a unique weak solution $\hat{p}_i\in H^1(\Omega_T)$.

(ii) By similar arguments as in Theorem \ref{thm:well-posedness of optimal}, we have $\mathcal{P}_i(t)\mathbf{1}=\mathbf{0},\forall t\geq0$.
Then,
\begin{align*}
    &\quad\langle\dot{\hat{p}}_i,\mathbf{1}\rangle_{L^2}\\ &=\langle \mathcal{A}\hat{p}_i+\mathcal{P}_i\mathcal{R}_i^{-1}(y_i-\hat{p}_i)+\theta\mathcal{P}_i\sum_{j\in\mathcal{N}_i}(\hat{p}_j-\hat{p}_i),\mathbf{1}\rangle_{L^2} \\
    &=\langle\hat{p}_i,\mathcal{A}^*\mathbf{1}\rangle_{L^2}+\langle \mathcal{R}_i^{-1}(y_i-\hat{p}_i),\mathcal{P}_i\mathbf{1}\rangle_{L^2}\\
    &\quad+\theta\langle\sum_{j\in\mathcal{N}_i}(\hat{p}_j-\hat{p}_i),\mathcal{P}_i\mathbf{1}\rangle_{L^2} = \mathbf{0}.
\end{align*}
Hence, $\langle\hat{p}_i(t),\mathbf{1}\rangle_{L^2}=\langle\hat{p}_i(0),\mathbf{1}\rangle_{L^2}=1$ for all $t\geq 0$.
\end{proof}

The remaining task is to show that \eqref{eq:local density filter} and \eqref{eq:local Riccati} indeed remain close to \eqref{eq:suboptimal density filter} and \eqref{eq:suboptimal Riccati}, respectively.
Towards this end, define $\tilde{p}_i = \hat{p}_i-p$.
Note that $\hat{p}_{j}-\hat{p}_{i}=\tilde{p}_j-\tilde{p}_{i}$.
Then along $\mathcal{P}_i(t)$ we have
\begin{equation*}
    \dot{\tilde{p}}_i=(\mathcal{A}-\mathcal{P}_i\mathcal{R}_i^{-1})\tilde{p}_i+\mathcal{P}_i\mathcal{R}_i^{-1}(y_i-y+w) +\theta\mathcal{P}_i\sum_{j\in\mathcal{N}_i}(\tilde{p}_{j}-\tilde{p}_i).
\end{equation*}

\begin{theorem} \label{thm:stability of local density filter}
Let the communication topology $G(t)$ be connected.
Assume that \eqref{eq:regularity condition 1} and \eqref{eq:regularity condition 2} hold, and that $\exists t_c>0$ such that $\psi_i(t)\geq c$ for $t\geq t_c$.
Also assume that $\|\mathcal{P}_i(t)\|_{H^1}$ and $\|\bar{\mathcal{P}}(t)\|_{H^1}$ are uniformly bounded and that there exist constants $c_1,c_2>0$ such that for all $t\geq 0$,
\begin{equation*}
    0<c_1\mathcal{I}\leq \mathcal{R}_i^{-1}(t),\bar{\mathcal{R}}^{-1}(t),\mathcal{P}_i^{\dagger}(t),\bar{\mathcal{P}}^{\dagger}(t)\leq c_2\mathcal{I},
\end{equation*}
where $\mathcal{P}_i^{\dagger}(t)$ ($\bar{\mathcal{P}}^{\dagger}(t)$) is the generalized inverse of $\mathcal{P}_i(t)$ ($\mathcal{\bar{P}}(t)$).
Then we have:
\begin{itemize}
    \item[(i)] (\textbf{Stability}) $\sum_{i=1}^N\|\hat{p}_i(t)-p(t)\|_{H^1}$ is ISS to $\sum_{i=1}^N\|y_i(t)-y(t)\|_{H^1}$ and $\|w(t)\|_{H^1}$.
    \item[(ii)] $\|\mathcal{P}_i(t)-\bar{\mathcal{P}}(t)\|_{H^1}$ is LISS to $\|\mathcal{R}_i^{-1}(t)-\bar{\mathcal{R}}^{-1}(t)\|_{H^1}$.
    \item[(iii)] (\textbf{Optimality}) $\|\mathcal{K}_i(t)-\bar{\mathcal{K}}(t)\|_{H^1}$ is LISS to $\|\mathcal{R}_i^{-1}(t)-\bar{\mathcal{R}}^{-1}(t)\|_{H^1}$. 
\end{itemize}
\end{theorem}

\begin{proof}
(i) Consider a Lyapunov function $V=\sum_{i=1}^{N}\langle\mathcal{P}_i^{\dagger}(t)\tilde{p}_i(t),\tilde{p}_i(t)\rangle_{H^1}$.
Then $V=0$ if and only if $\tilde{p}_i=0,\forall i$.
We have
\begin{align*}
    \dot{V}&=\sum_{i=1}^N\Big(\langle\mathcal{P}_i^{\dagger}\dot{\tilde{p}}_i,\tilde{p}_i\rangle_{H^1} + \langle\mathcal{P}_i^{\dagger}\tilde{p}_i,\dot{\tilde{p}}_i\rangle_{H^1} - \langle\mathcal{P}_i^{\dagger}\dot{\mathcal{P}}_i\mathcal{P}_i^{\dagger}\tilde{p}_i,\tilde{p}_i\rangle_{H^1}\Big)\\
    &=\sum_{i=1}^N\Big(\langle\mathcal{P}_i^{\dagger}(\mathcal{A}-\mathcal{P}_i\mathcal{R}_i^{-1})\tilde{p}_i,\tilde{p}_i\rangle_{H^1}\\
    &\quad+ \langle\mathcal{R}_i^{-1}(y_i-y+w),\tilde{p}_i\rangle_{H^1}\\
    &\quad+\langle\mathcal{P}_i^{\dagger}\tilde{p}_i,(\mathcal{A}-\mathcal{P}_i\mathcal{R}_i^{-1})\tilde{p}_i\rangle_{H^1}\\
    &\quad+ \langle\mathcal{P}_i^{\dagger}\tilde{p}_i,\mathcal{P}_i\mathcal{R}_i^{-1}(y_i-y+w)\rangle_{H^1}\\
    &\quad-\langle\mathcal{P}_i^{\dagger}(\mathcal{A}\mathcal{P}_i+\mathcal{P}_i\mathcal{A}^*-\mathcal{P}_i\mathcal{R}_i^{-1}\mathcal{P}_i)\mathcal{P}_i^{\dagger}\tilde{p}_i,\tilde{p}_i\rangle_{H^1}\\
    &\quad+\langle\theta\sum_{j\in\mathcal{N}_i}(\tilde{p}_{j}-\tilde{p}_i),\tilde{p}_i\rangle_{H^1}\\
    &\quad+ \langle\mathcal{P}_i^{\dagger}\tilde{p}_i,\theta\mathcal{P}_i\sum_{j\in\mathcal{N}_i}(\tilde{p}_{j}-\tilde{p}_i)\rangle_{H^1}\Big)\\
    &=\sum_{i=1}^N\Big(-\langle\mathcal{R}_i^{-1}\tilde{p}_i,\tilde{p}_i\rangle_{H^1} + 2\langle\mathcal{R}_i^{-1}(y_i-y+w),\tilde{p}_i\rangle_{H^1}\Big)\\
    &\quad+\theta\sum_{i=1}^N\sum_{j\in\mathcal{N}_i}\langle(\tilde{p}_{j}-\tilde{p}_i),\tilde{p}_i\rangle_{H^1},
\end{align*}
where the last term is negative according to consensus theory \cite{olfati2007distributed}.
Let $\gamma\in(0,1)$.
Then,
\begin{align*}
    \dot{V}&\leq\sum_{i=1}^N\Big(-(1-\gamma)\langle\mathcal{R}_i^{-1}\tilde{p}_i,\tilde{p}_i\rangle_{H^1} - \gamma\langle\mathcal{R}_i^{-1}\tilde{p}_i,\tilde{p}_i\rangle_{H^1}\\
    &\quad+2\langle\mathcal{R}_i^{-1}(y_i-y+w),\tilde{p}_i\rangle_{H^1}\Big)\\
    &\leq-(1-\gamma)\sum_{i=1}^N\langle\mathcal{R}_i^{-1}\tilde{p}_i,\tilde{p}_i\rangle_{H^1},
\end{align*}
if
\begin{equation*}
    \sum_{i=1}^N\|\tilde{p}_i\|_{H^1}\geq\frac{2c_2}{\gamma c_1}\sum_{i=1}^N\|y_i-y+w\|_{H^1}.
\end{equation*}
Since $\|y_i-y+w\|_{H^1}\leq\|y_i-y\|_{H^1}+\|w\|_{H^1}$, we obtain that $\sum_{i=1}^N\|\hat{p}_i(t)-p(t)\|_{H^1}$ is ISS to $\sum_{i=1}^N\|y_i(t)-y(t)\|_{H^1}$ and $\|w(t)\|_{H^1}$.

(ii) Define $\Gamma_i=\bar{\mathcal{P}}-\mathcal{P}_i$. 
We have
\begin{equation*}
    \dot{\Gamma}_i = \mathcal{A}\Gamma_i+\Gamma_i\mathcal{A}^{*} - \mathcal{P}\bar{\mathcal{R}}^{-1}\bar{\mathcal{P}} + \mathcal{P}_i\mathcal{R}_i^{-1}\mathcal{P}_i,
\end{equation*}
which is formally equivalent to \eqref{eq:Riccati error equation1}.
Hence we omit the proof.

(iii) The proof is also similar to Theorem \ref{thm:stability of centralized density filter} (iii).
\end{proof}

Finally, by Theorem \ref{thm:(L)ISS cascade}, we have the following corollary.

\begin{corollary}
Under the assumptions in Theorems \ref{thm:ISS of PI consensus estimator for kernels} and \ref{thm:stability of local density filter}, we have:
\begin{itemize}
    \item[(i)] (\textbf{Stability}) $\sum_{i=1}^N\|\hat{p}_i(t)-p(t)\|_{H^1}$ is ISS to $\|\mathbf{\Pi}\dot{\mathbf{z}}(t)\|_{H^1}$ and $\|w(t)\|_{H^1}$.
    \item[(ii)] $\|\mathcal{P}_i(t)-\bar{\mathcal{P}}(t)\|_{H^1}$ is LISS to $\|\mathbf{\Pi}\dot{\mathbf{z}}(t)\|_{H^1}$.
    \item[(iii)] (\textbf{Optimality}) $\|\mathcal{K}_i(t)-\bar{\mathcal{K}}(t)\|_{H^1}$ is LISS to $\|\mathbf{\Pi}\dot{\mathbf{z}}(t)\|_{H^1}$. 
\end{itemize}
\end{corollary}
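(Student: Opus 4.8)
The plan is to read the corollary as three cascade interconnections and to obtain each claim from Theorem~\ref{thm:(L)ISS cascade}. The key structural point I would emphasize first is that the dynamic average consensus estimator \eqref{eq:PI consensus estimator for kernels} is driven only by the kernels $z_i$ (equivalently by $\mathbf{\Pi}\dot{\mathbf z}$) and never feeds back on the filter states $\hat p_i$ or $\mathcal{P}_i$; hence ``consensus estimator'' $\to$ ``local density filter / local operator Riccati equation'' is a genuine cascade, not a feedback loop, so the cascade-ISS results apply without any small-gain argument.

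For part (i) I would let $\Sigma_1$ be the consensus subsystem, whose output $\sum_{i=1}^N\|y_i(t)-y(t)\|_{H^1}$ is ISS to the input $\|\mathbf{\Pi}\dot{\mathbf z}(t)\|_{H^1}$ by Theorem~\ref{thm:ISS of PI consensus estimator for kernels} (summing the per-agent estimates over $i=1,\dots,N$ preserves ISS), and $\Sigma_2$ the local estimation-error system \eqref{eq:local estimation error equation 2}, whose output $\sum_i\|\hat p_i-p\|_{H^1}$ is ISS to the two external signals $\sum_i\|y_i-y\|_{H^1}$ and $\|w\|_{H^1}$ by Theorem~\ref{thm:stability of local density filter}(i). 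Connecting the output of $\Sigma_1$ to the first input channel of $\Sigma_2$ and keeping $\|w\|_{H^1}$ as a free external input, Theorem~\ref{thm:(L)ISS cascade} yields that $\sum_{i=1}^N\|\hat p_i(t)-p(t)\|_{H^1}$ is ISS to $\|\mathbf{\Pi}\dot{\mathbf z}(t)\|_{H^1}$ and $\|w(t)\|_{H^1}$.

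For parts (ii) and (iii) I would first invoke the last claim of Theorem~\ref{thm:ISS of PI consensus estimator for kernels} to get that $\|\mathcal{R}_i^{-1}(t)-\bar{\mathcal{R}}^{-1}(t)\|_{H^1}$ is ISS to $\|\mathbf{\Pi}\dot{\mathbf z}(t)\|_{H^1}$, noting that the uniform boundedness of $\|\mathcal{R}_i^{-1}\|_{H^1}$ and $\|\bar{\mathcal{R}}^{-1}\|_{H^1}$ required there is supplied by \eqref{eq:uniform positivity local}. Then I would use Theorem~\ref{thm:stability of local density filter}(ii)--(iii) to get that $\|\mathcal{P}_i(t)-\bar{\mathcal{P}}(t)\|_{H^1}$ and $\|\mathcal{K}_i(t)-\bar{\mathcal{K}}(t)\|_{H^1}$ are LISS to $\|\mathcal{R}_i^{-1}(t)-\bar{\mathcal{R}}^{-1}(t)\|_{H^1}$, and finally compose: the cascade of an upstream ISS system and a downstream LISS system is LISS by Theorem~\ref{thm:(L)ISS cascade}, so both quantities are LISS to $\|\mathbf{\Pi}\dot{\mathbf z}(t)\|_{H^1}$.

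The hard part will not be any new analytic estimate, since none is needed, but the bookkeeping: I would have to check that all standing hypotheses of the two source theorems are simultaneously in force under the corollary's assumptions (connectivity of $G(t)$, existence of $t_c$ with $\psi_i(t)\ge c$, uniform boundedness of $\mathcal{P}_i$ and $\bar{\mathcal{P}}$, and the two-sided bounds \eqref{eq:uniform positivity local}), and I would have to apply Theorem~\ref{thm:(L)ISS cascade} in its multi-input form in part (i), bundling $\|\mathbf{\Pi}\dot{\mathbf z}\|_{H^1}$ and $\|w\|_{H^1}$ into a single comparison argument via an inequality of the type $\gamma_a(r_a)+\gamma_b(r_b)\le\gamma(\max\{r_a,r_b\})$ for a suitable $\gamma\in\mathscr{K}$. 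I would also record, as the authors do elsewhere, that it is precisely the genuinely cascade (rather than feedback-interconnected) structure that makes these compositions immediate.
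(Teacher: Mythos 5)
Your proposal is correct and is essentially the paper's own argument: the corollary is obtained exactly by composing Theorem~\ref{thm:ISS of PI consensus estimator for kernels} with Theorem~\ref{thm:stability of local density filter} through the cascade result of Theorem~\ref{thm:(L)ISS cascade}, with $\|w\|_{H^1}$ kept as a free external input in part (i) and the ISS--LISS cascade yielding LISS in parts (ii)--(iii). The bookkeeping you flag (hypotheses of both source theorems holding simultaneously, the cascade rather than feedback structure) is precisely what the paper leaves implicit.
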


\begin{remark}
The distributed density filter allows each agent to estimate the global density of all agents using only its own position and local information exchange.
Like the centralized density filter, the density estimates and their gradient are both convergent.
Notice that we allow $\theta=0$ in \eqref{eq:local density filter}, in which case each agent only communicates $z_i=K_i$ with neighbors.
Thus, it only needs to exchange its real-time position $X_t^i$ with neighbors since $K_i$ is uniquely determined by $X_t^i$, which is very efficient from a communication perspective but may result in slow convergence.
When $\theta>0$, each agent additionally exchange its density estimate $\hat{p}_i$ which results in faster convergence by using extra synchronization but also increases communication cost.
The accelerated convergence can be observed from the proof of Theorem \ref{thm:stability of local density filter} where the $\theta$ term provides additional negative mass to $\dot{V}$.
\end{remark}

\section{Numerical implementation}
\label{section:numerical implementation}
In this section, we discuss the numerical implementation of the proposed algorithms.
The simplest implementation is probably the one based on finite difference approximations of derivatives \cite{strikwerda2004finite}, although more delicate numerical methods are also possible.
Assume the spatial domain $\Omega$ is discretized as a grid with $M$ cells.
Then the density functions like $\hat{p}_i$ are approximated by $M\times1$ vectors.
The operators like $\mathcal{A}$, $\mathcal{P}_i$ and $\mathcal{R}_i$ are approximated by $M\times M$ matrices.
The time evolution is based on Euler discretization.
Hence, \eqref{eq:local density filter} and \eqref{eq:local Riccati} can be approximated by:
\begin{align*}
    \hat{p}_i(k+1)&=(I+\delta_TA(k))\hat{p}(k)\\
    &\quad+\delta_TP_i(k)R_i(k)(y_i(k)-p_i(k))\\
    &\quad+\delta_T\theta P_i(k)\sum_{j\in\mathcal{N}_i}(\hat{p}_j(k)-\hat{p}_i(k)),\\
    P_i(k+1)&=(I+\delta_TA)P_i(k)+\delta_TP_i(k)A^\top(k)\\
    &\quad-\delta_TP_i(k)R_i^{-1}(k)P_i(k),
\end{align*}
where $\hat{p}_i,y_i\in\mathbb{R}^M$, $A,P_i,R_i\in\mathbb{R}^{M\times M}$, and $\delta_T$ is the step size.
The consensus algorithm \eqref{eq:PI consensus estimator for kernels} can be approximated in a similar way, with a step size $\delta_t$.
The selection of $\delta_t$ has been studied in many papers \cite{kia2019tutorial}.
We may let $\delta_T=l\delta_t$, i.e., perform multiple steps of information exchange between two successive updates of the filters.
With this type of asynchronous updates, the local filters can potentially achieve better tracking of the centralized filter.
This setup of asynchronous updates is left as future work.
In terms of computational cost, the computational bottleneck of Kalman filters is on the inverse of the covariance operators/matrices.
We also notice that $M$ is usually large in the discretization scheme.
However, $R_i$ is diagonal and $A$ is highly sparse.
Hence, the numerical iteration of the filter is very efficient and can be carried out online.
In terms of communication cost, when $\theta=0$, each agent only sends its own position to neighbors during each iteration, which is very efficient.
If $\theta>0$, each agent additionally sends an $M\times M$ matrix.
In this case, each agent may decrease the resolution/dimension of $\hat{p}_i$ by downsampling before sending it to neighbors, and increase the resolution of a received density estimate by interpolation.

\section{Simulation studies}
\label{section:simulation}

\begin{figure*}[t]
    \centering
    \begin{subfigure}[b]{0.24\textwidth}
        \centering
        \includegraphics[width=1\textwidth]{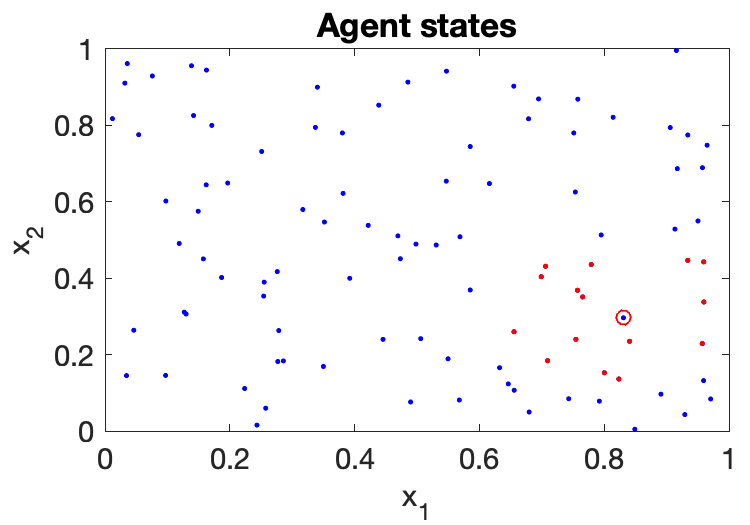}
    \end{subfigure}
    \begin{subfigure}[b]{0.24\textwidth}
        \centering
        \includegraphics[width=1\textwidth]{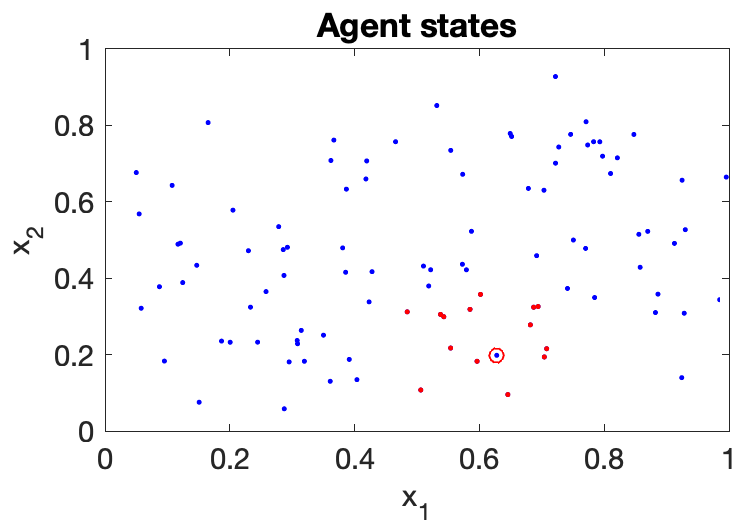}
    \end{subfigure}
    \begin{subfigure}[b]{0.24\textwidth}
        \centering
        \includegraphics[width=1\textwidth]{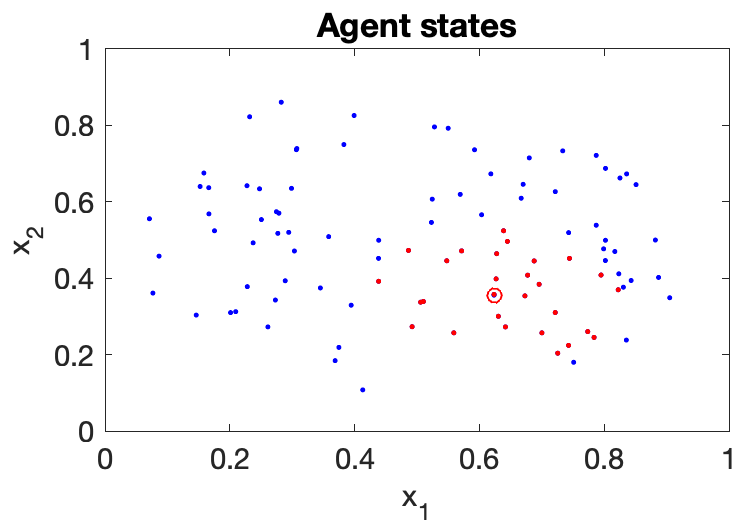}
    \end{subfigure}
    \begin{subfigure}[b]{0.24\textwidth}
        \centering
        \includegraphics[width=1\textwidth]{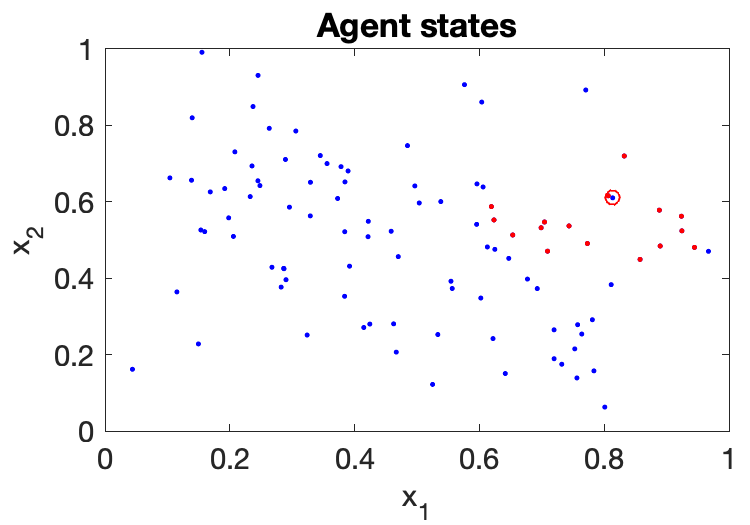}
    \end{subfigure}
    \vspace{3mm}
    
    \begin{subfigure}[b]{0.24\textwidth}
        \centering
        \includegraphics[width=1\textwidth]{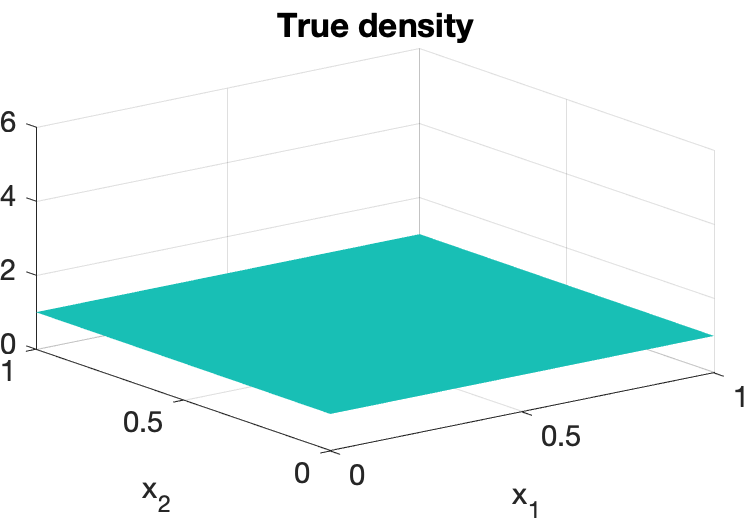}
    \end{subfigure}
    \begin{subfigure}[b]{0.24\textwidth}
        \centering
        \includegraphics[width=1\textwidth]{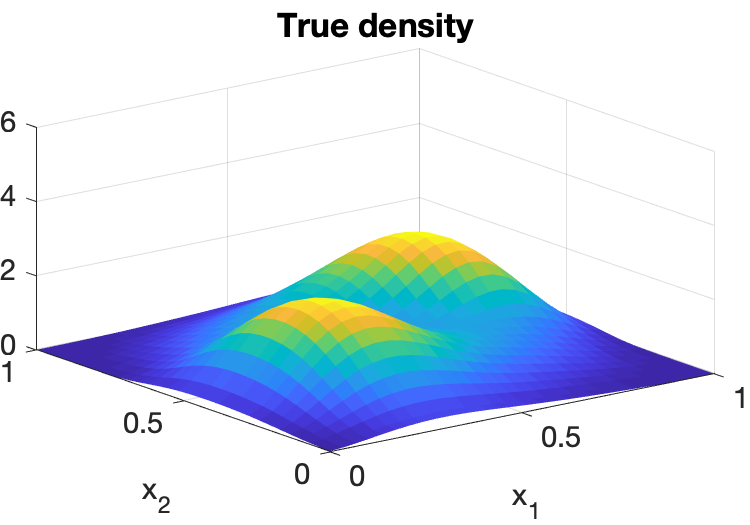}
    \end{subfigure}
    \begin{subfigure}[b]{0.24\textwidth}
        \centering
        \includegraphics[width=1\textwidth]{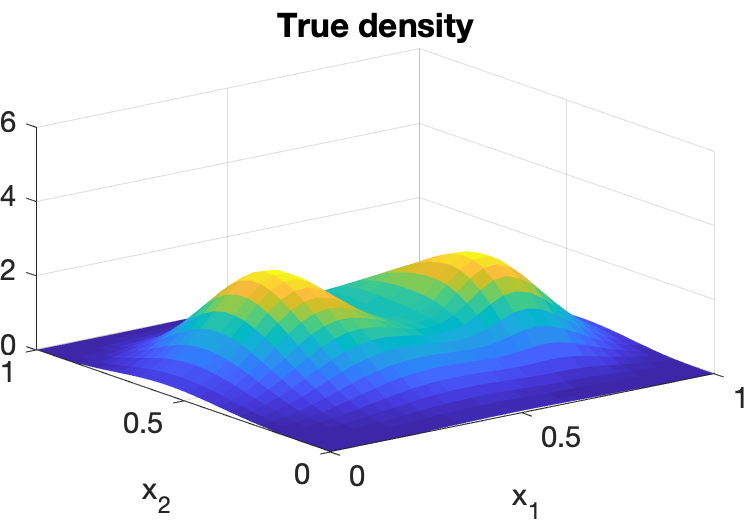}
    \end{subfigure}
    \begin{subfigure}[b]{0.24\textwidth}
        \centering
        \includegraphics[width=1\textwidth]{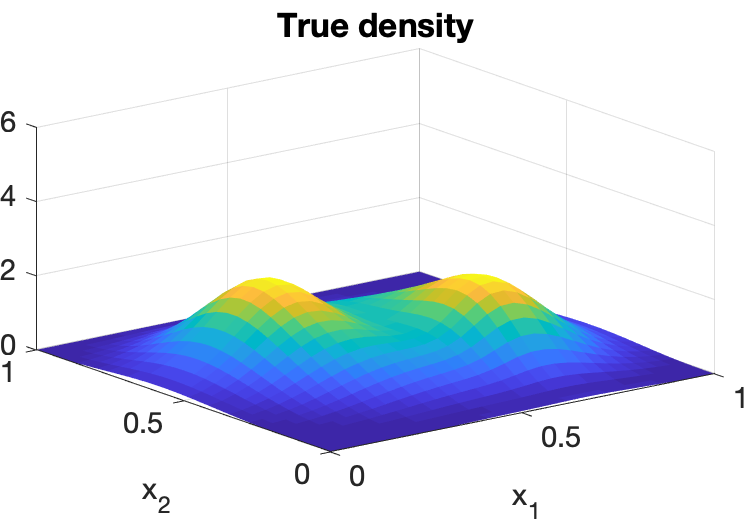}
    \end{subfigure}
    \vspace{3mm}

    \begin{subfigure}[b]{0.24\textwidth}
        \centering
        \includegraphics[width=1\textwidth]{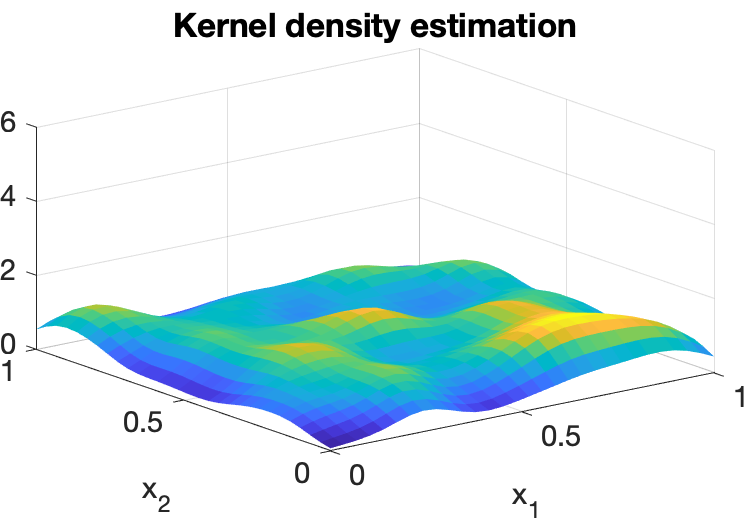}
    \end{subfigure}
    \begin{subfigure}[b]{0.24\textwidth}
        \centering
        \includegraphics[width=1\textwidth]{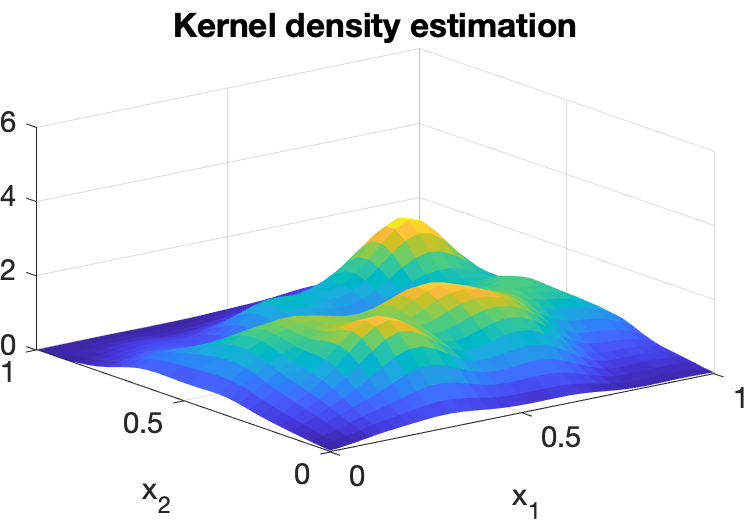}
    \end{subfigure}
    \begin{subfigure}[b]{0.24\textwidth}
        \centering
        \includegraphics[width=1\textwidth]{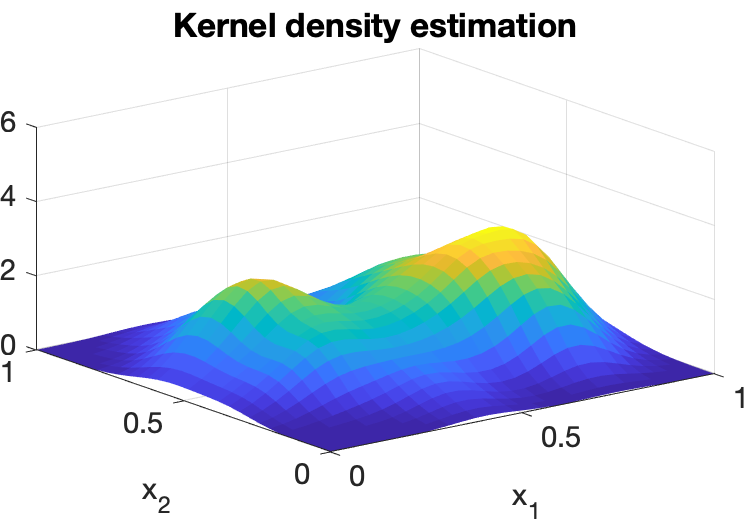}
    \end{subfigure}
    \begin{subfigure}[b]{0.24\textwidth}
        \centering
        \includegraphics[width=1\textwidth]{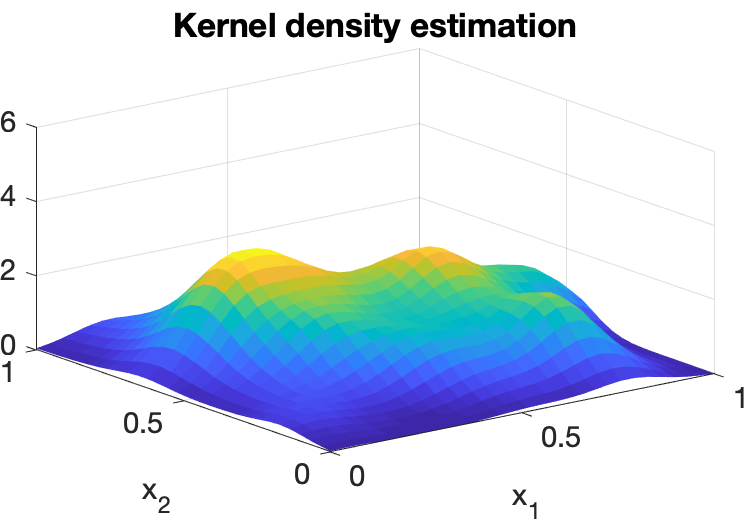}
    \end{subfigure}
    \vspace{3mm}

    \begin{subfigure}[b]{0.24\textwidth}
        \centering
        \includegraphics[width=1\textwidth]{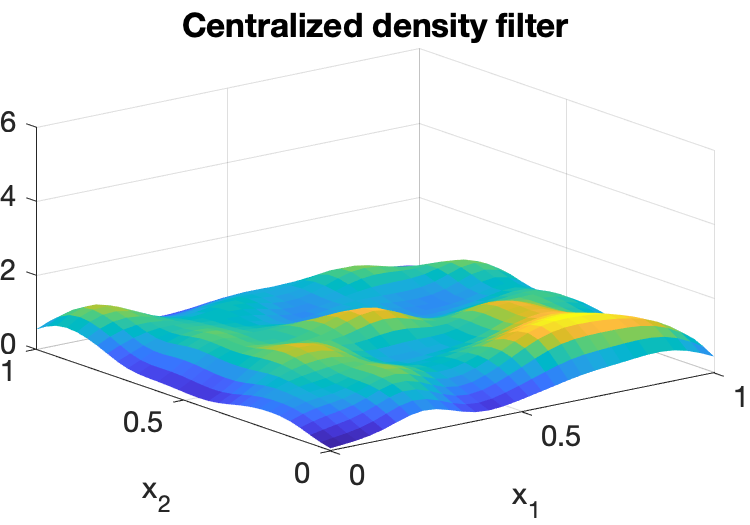}
    \end{subfigure}
    \begin{subfigure}[b]{0.24\textwidth}
        \centering
        \includegraphics[width=1\textwidth]{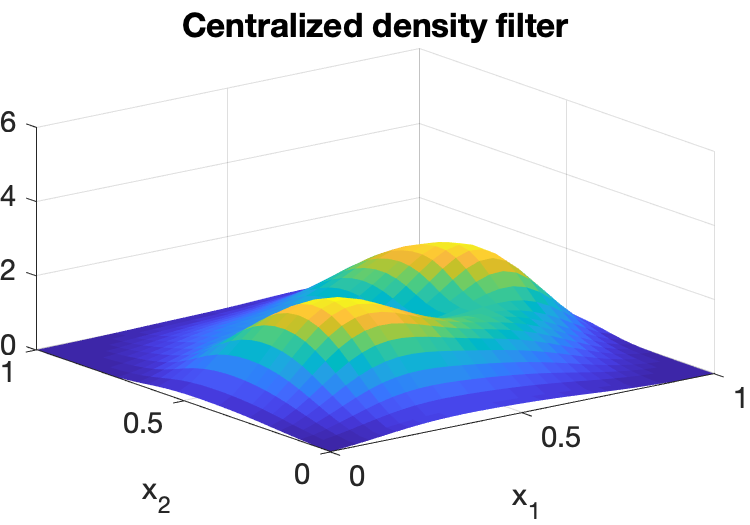}
    \end{subfigure}
    \begin{subfigure}[b]{0.24\textwidth}
        \centering
        \includegraphics[width=1\textwidth]{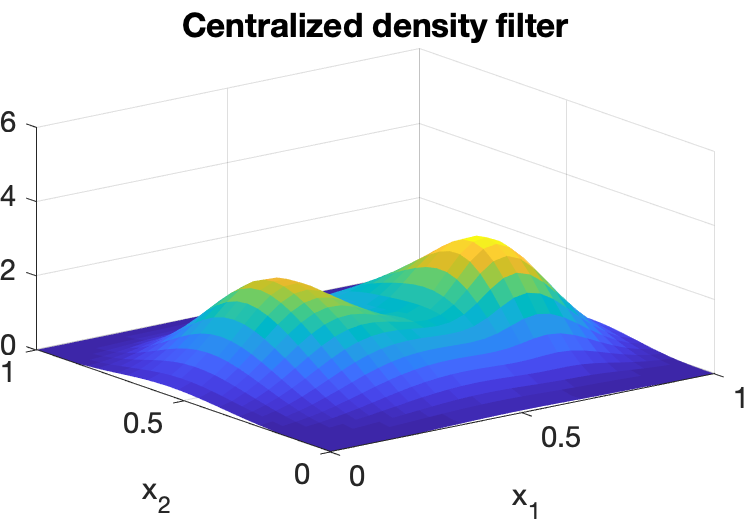}
    \end{subfigure}
    \begin{subfigure}[b]{0.24\textwidth}
        \centering
        \includegraphics[width=1\textwidth]{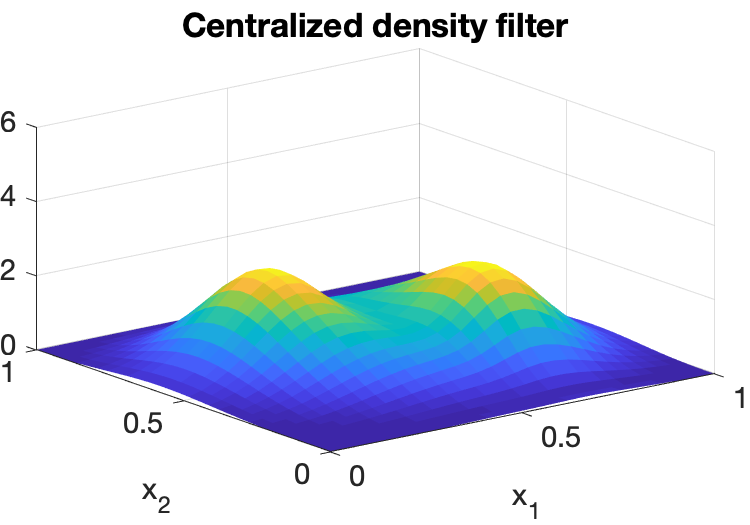}
    \end{subfigure}
    \vspace{3mm}
    
    \begin{subfigure}[b]{0.24\textwidth}
        \centering
        \includegraphics[width=1\textwidth]{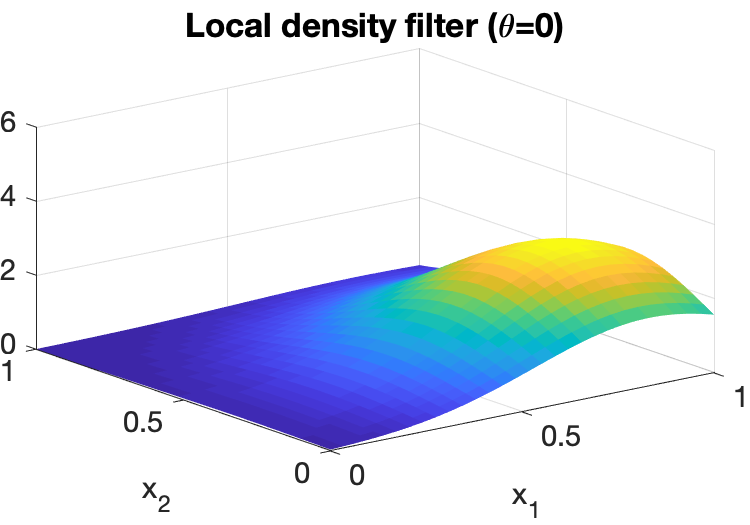}
    \end{subfigure}
    \begin{subfigure}[b]{0.24\textwidth}
        \centering
        \includegraphics[width=1\textwidth]{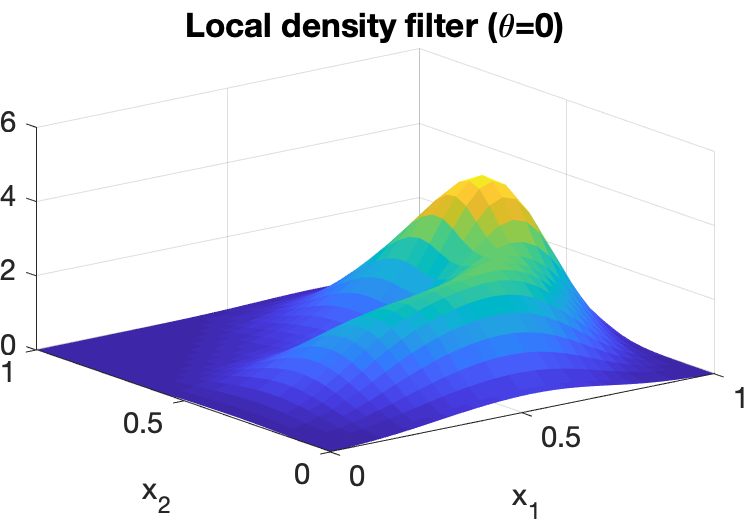}
    \end{subfigure}
    \begin{subfigure}[b]{0.24\textwidth}
        \centering
        \includegraphics[width=1\textwidth]{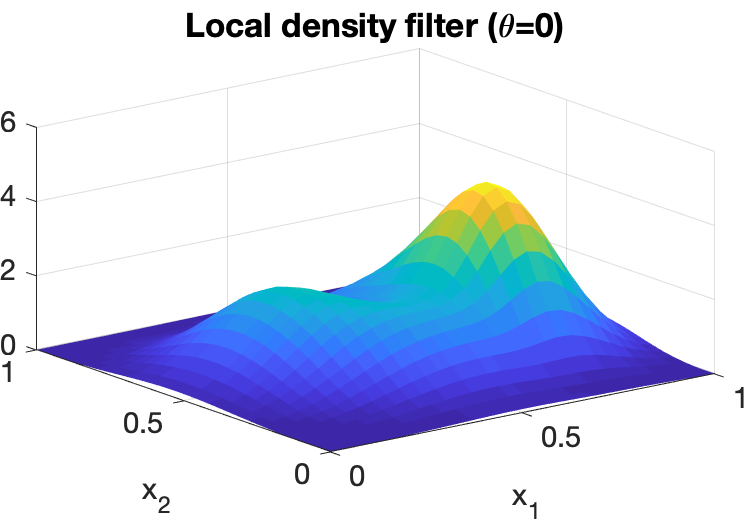}
    \end{subfigure}
    \begin{subfigure}[b]{0.24\textwidth}
        \centering
        \includegraphics[width=1\textwidth]{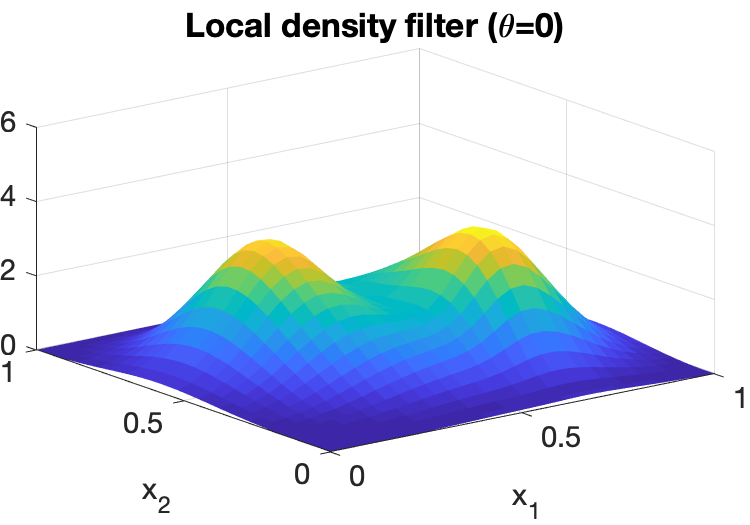}
    \end{subfigure}
    \vspace{3mm}
    
    \begin{subfigure}[b]{0.24\textwidth}
        \centering
        \includegraphics[width=1\textwidth]{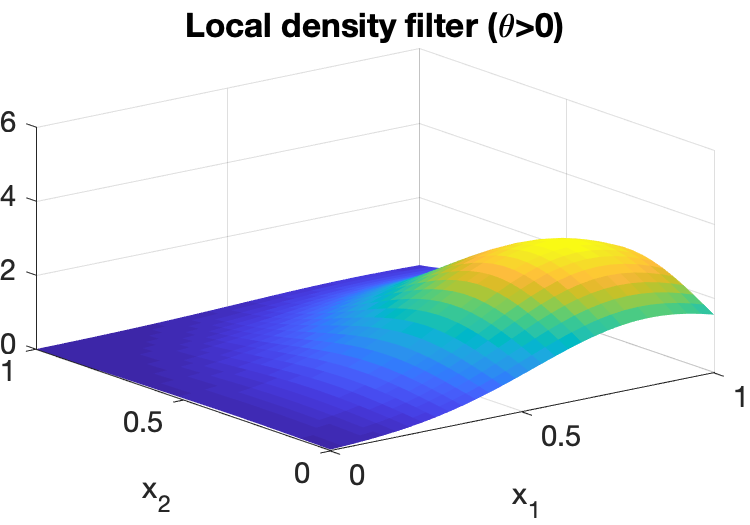}
    \end{subfigure}
    \begin{subfigure}[b]{0.24\textwidth}
        \centering
        \includegraphics[width=1\textwidth]{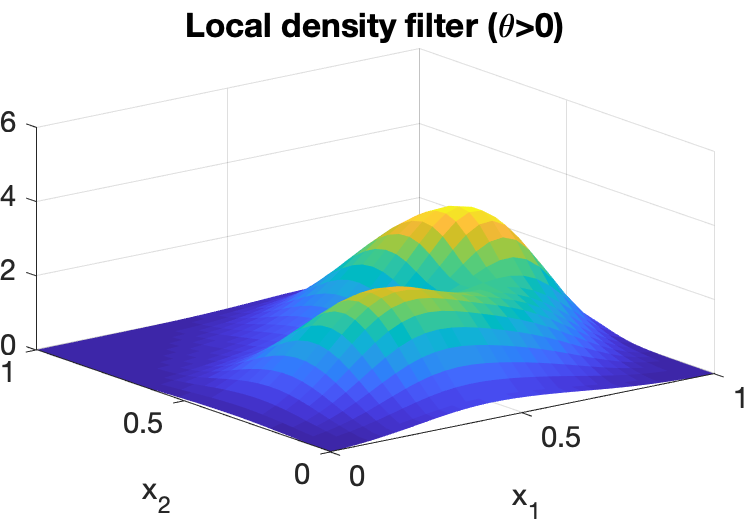}
    \end{subfigure}
    \begin{subfigure}[b]{0.24\textwidth}
        \centering
        \includegraphics[width=1\textwidth]{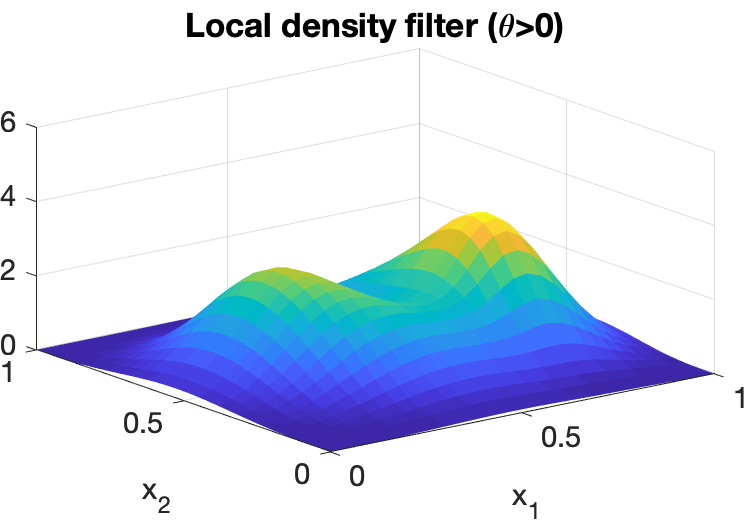}
    \end{subfigure}
    \begin{subfigure}[b]{0.24\textwidth}
        \centering
        \includegraphics[width=1\textwidth]{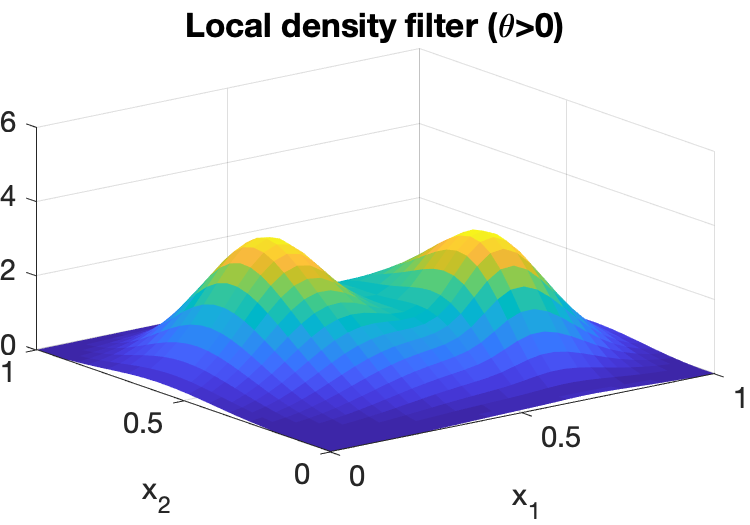}
    \end{subfigure}
    \caption{\textbf{1st row:} the collection of states $\{X_t^i\}_{i=1}^{100}$ of the systems \eqref{eq:example of Langevin equation}, where the red circle is the randomly selected representative agent and the red dots are its neighbors;
    \textbf{2nd row:} the evolution of the ground truth density $p(x,t)$ computed using \eqref{eq:example of FP equation};
    \textbf{3rd row:} density estimates $p_{\text{KDE}}(x,t)$ using KDE;
    \textbf{4th row:} outputs $\hat{p}(x,t)$ of the centralized filter \eqref{eq:suboptimal density filter};
    \textbf{5th row:} outputs $\hat{p}_i(x,t)$ of the local filter \eqref{eq:local density filter} when $\theta=0$;
    \textbf{6th row:} outputs $\hat{p}_i(x,t)$ of the local filter \eqref{eq:local density filter} when $\theta>0$.}
    \label{fig:density filter}
\end{figure*}

\begin{figure}[hbt!]
    \centering
    \begin{subfigure}[b]{1\columnwidth}
        \centering
        \includegraphics[width=0.9\columnwidth]{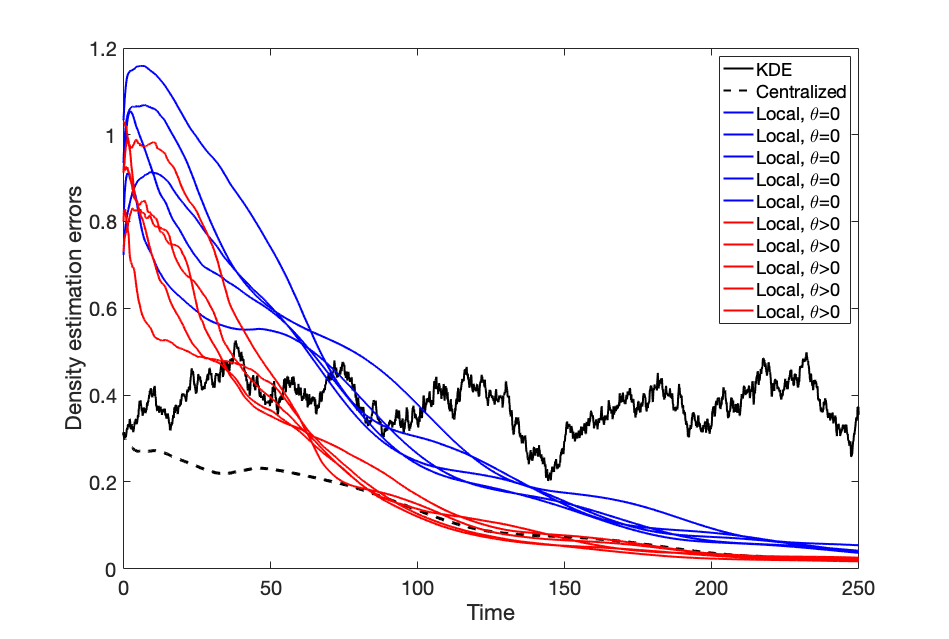}
        \caption{Density estimation errors.}
    \end{subfigure}
    
    \begin{subfigure}[b]{1\columnwidth}
        \centering
        \includegraphics[width=0.9\columnwidth]{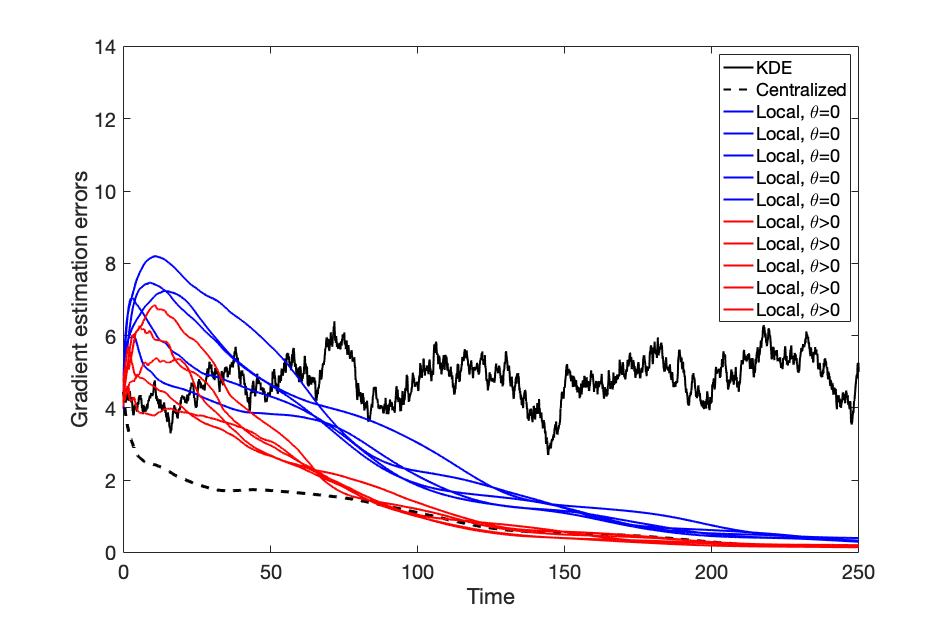}
        \caption{Gradient estimation errors.}
    \end{subfigure}
    \caption{\textbf{Black solid line:} density/gradient estimation errors of KDE; \textbf{black dashed line:} density/gradient estimation errors of the centralized density filter; \textbf{blue line:} density/gradient estimation errors of local filters when $\theta=0$; \textbf{red line:} density/gradient estimation errors of local filters when $\theta>0$.}
    \label{fig:density error}
\end{figure}

To verify the performance of the proposed centralized/distributed filters, we perform a simulation on Matlab using $100$ agents.
The agents' positions are generated by the following family of equations:
\begin{equation}\label{eq:example of Langevin equation}
    dX_t^i=\nabla\cdot\frac{D\nabla f(x,t)}{f(x,t)} dt+DdW_t^i, \quad i = 1,\dots,100,
\end{equation}
where $D=0.03$ and $f(x,t)$ is a time-varying density function to be specified. 
We set $\Omega=(0,1)^2$ with a reflecting boundary.
The initial positions $\{X_0^i\}_{i=1}^{100}$ are drawn from the uniform distribution. 
Therefore, the ground truth density satisfies
\begin{align} \label{eq:example of FP equation}
\begin{split}
     &\partial_t p =-\nabla\cdot\frac{Dp\nabla f}{f} + \frac{1}{2}D^2\Delta p, \\
     &p(\cdot,0)=\mathbf{1},
\end{split}
\end{align}
which will be used for comparison.
We design $f(x,t)$ to be a Gaussian mixture of two components with the common covariance matrix $\text{diag}(0.02,0.02)$ but different time-varying means $[0.5+0.3\cos(0.04t),0.5+0.3\sin(0.04t)]^T$ and $[0.5+0.3\cos(0.04t+\pi),0.5+0.3\sin(0.04t+\pi)]^T$. 
Under this design, the agents are nonlinear and time-varying and their states will converge to two ``spinning'' Gaussian components.
 
The numerical implementation is based on the procedures outlined in Section \ref{section:numerical implementation}.
Specifically, $\Omega$ is discretized as a $30\times30$ grid.
All density functions are approximated by $900\times1$ vectors and all operators are approximated by $900\times900$ matrices.
The density updates and information exchange are performed in a synchronous manner with a step size $\delta_T=0.1s$.
The bandwidth of KDE is $h=0.08$. 
For the PI consensus algorithm \eqref{eq:PI consensus estimator for kernels}, we set $\alpha=0.2$, $a_{ij}=0.4$, and $b_{ij}=0.04$.

A graphical illustration of the simulation results is presented in Fig. \ref{fig:density filter}, where each column represents the same time step and each row represents the evolution of an investigated quantity. 
As already observed in \cite{zheng2020pde}, the centralized filter, by taking advantage of the dynamics, quickly "recognizes" the two underlying Gaussian components and gradually catches up with the evolution of the ground truth density, outperforming the outputs of KDE.
Then, we randomly select an agent and investigate the performance of its local density filter when $\theta=0$ and $\theta=0.4$.
We observe that the two local filters also gradually catch up with the ground truth density, however in a slower rate compared with the centralized filter due to the delay effect of information exchange.
In Fig. \ref{fig:density error}, we compare the $L^2$ norms of density and gradient estimation errors of KDE, the centralized filter, and 5 randomly selected local filters.
We observe that all the filters outperform KDE as the filtering process proceeds.
Both the density estimates and the gradient estimates are convergent.
The local filters have larger initial density/gradient estimation errors because of limited observation on the initial step.
Through information exchange, they gradually produce comparable performance with the centralized filter.
In particular, the one with $\theta>0$ converges faster than the one with $\theta=0$ because of the additional exchange of density estimates with neighbors.

\section{Conclusion}\label{section:conclusion}
We presented centralized/distributed density filters for estimating the mean-field density of large-scale agent systems with known dynamics by a novel integration of mean-field models, KDE, infinite-dimensional Kalman filters, and communication protocols. 
With the distributed filter, each agent was able to estimate the global density using only the mean-field dynamics, its own state/position, and local information exchange with its neighbors.
It was scalable to the number of agents, convergent in not only the density estimates but also their gradient, and efficient in terms of numerical implementation and communication.
These algorithms can be used for many density-based distributed optimization and control problems of large-scale agent systems when density feedback information is required.
Our future work is to integrate the density filters into mean-field feedback control for large-scale agent systems and study the closed-loop stability.

\section*{Appendices}

\subsection{Kernel density estimation}
\label{section:KDE}
Kernel density estimation is a non-parametric way to estimate an unknown probability density function.
Let $X_1,\dots,X_N\in\mathbb{R}^n$ be independent identically distributed random variables having a common probability density $f$.
The kernel density estimator for $f$ is given by \cite{parzen1962estimation, silverman1986density}
\begin{equation}\label{eq:KDE}
    f_N(x) = \frac{1}{Nh^{n}}\sum_{i=1}^{N}K\Big(\frac{1}{h}(x-X_{i})\Big),
\end{equation}
where $K(x)$ is a kernel function \cite{silverman1986density}
and $h$ is the bandwidth, usually chosen as a function of $N$ such that $\lim_{N\to\infty}h=0$ and $\lim_{N\to\infty}Nh^n=\infty$.
The Gaussian kernel is frequently used, given by
$$
K(x)=\frac{1}{(2\pi)^{n/2}}\exp\Big(-\frac{1}{2}x^\top x\Big).
$$
It is known that the $f_N(x)$ is asymptotically normal and that $f_N(x_i)$ and $f_N(x_j)$ are asymptotically uncorrelated for $x_i\neq x_j$, as $N\to\infty$, which is summarized as follows.

\begin{lemma}
[Asymptotic normality \cite{cacoullos1966estimation}]
\label{lmm:asymptotic normality}
If $\lim_{N\to\infty}h=0$, then at any continuity point $x$,
\begin{equation} \label{eq:asymptotic unbisedness}
    \lim_{N\to\infty}\operatorname{E}[f_N(x)]-f(x)=0.
\end{equation}
If in addition $\lim_{N\to\infty}Nh^n=\infty$, then as $N\to\infty$,
\begin{equation} \label{eq:asymptotic normality}
    \sqrt{Nh^{n}}(f_N(x)-\operatorname{E}[f_N(x)])\to\mathcal{N}\Big(0,f(x)\int K(u)^{2} d u\Big)
\end{equation}
in distribution, where $\mathcal{N}$ represents the Gaussian distribution.
\end{lemma}

\begin{lemma}[Asymptotic uncorrelatedness \cite{cacoullos1966estimation}]
\label{lmm:asymptotic uncorrelatedness}
Let $x_i,x_j$ be two distinct continuity points.
If $\lim_{N\to\infty}h=0$, then the covariance of $f_N(x_i)$ and $f_N(x_j)$ satisfies
$$
Nh^{n}\operatorname{Cov}[f_N(x_i), f_N(x_j)] \to 0,\text{ as } N\to\infty.
$$
\end{lemma}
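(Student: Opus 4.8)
The plan is to use independence of the samples to collapse the covariance of the two kernel estimators into the covariance of a single summand, and then to show that this single-term covariance is $o(h^n)$. Writing $f_N(x)=\frac{1}{Nh^n}\sum_{k=1}^{N}K_k(x)$ with $K_k(x):=K\big(\tfrac{1}{h}(x-X_k)\big)$, I would first expand $\operatorname{Cov}[f_N(x_i),f_N(x_j)]=\frac{1}{N^2h^{2n}}\sum_{k,l=1}^{N}\operatorname{Cov}[K_k(x_i),K_l(x_j)]$ and observe that, since $\{X_k\}_{k=1}^N$ are i.i.d., every term with $k\neq l$ vanishes. This leaves $\operatorname{Cov}[f_N(x_i),f_N(x_j)]=\frac{1}{Nh^{2n}}\operatorname{Cov}[K_1(x_i),K_1(x_j)]$, so it suffices to prove $\operatorname{Cov}[K_1(x_i),K_1(x_j)]=o(h^n)$ as $h\to0$; multiplying by $Nh^n$ then gives $Nh^n\operatorname{Cov}[f_N(x_i),f_N(x_j)]=h^{-n}\,o(h^n)=o(1)\to0$.

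Next I would split $\operatorname{Cov}[K_1(x_i),K_1(x_j)]=\operatorname{E}[K_1(x_i)K_1(x_j)]-\operatorname{E}[K_1(x_i)]\operatorname{E}[K_1(x_j)]$ and estimate the two pieces by the change of variables $x=x_i-hu$ (Jacobian $h^n$). For the marginals this yields $\operatorname{E}[K_1(x_i)]=h^n\int K(u)f(x_i-hu)\,du$, which converges to $h^nf(x_i)$ using $\int K=1$ and continuity of $f$ at $x_i$; hence $\operatorname{E}[K_1(x_i)]\operatorname{E}[K_1(x_j)]=O(h^{2n})=o(h^n)$. For the product term the same substitution gives $\operatorname{E}[K_1(x_i)K_1(x_j)]=h^n\int K(u)\,K\big(u+\tfrac{x_j-x_i}{h}\big)f(x_i-hu)\,du$; since $x_i\neq x_j$ the shift $\tfrac{x_j-x_i}{h}$ diverges as $h\to0$, so the integrand tends to $0$ pointwise, and under mild regularity of $K$ (compact support, or $K$ bounded with $K(u)\to0$ as $|u|\to\infty$, which holds for the Gaussian kernel) together with boundedness of $f$ the integral tends to $0$ by dominated convergence, so this term is also $o(h^n)$. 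Combining the two estimates gives $\operatorname{Cov}[K_1(x_i),K_1(x_j)]=o(h^n)$, which completes the argument.

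I expect the only delicate point to be the dominated-convergence step for the product term when $f$ is merely a probability density rather than bounded. I would circumvent this by invoking the standing boundedness hypothesis on the density (in the application $p(t)\in L^\infty$, which holds for the Fokker--Planck density by Theorem \ref{thm:well-posedness}) together with the standard kernel conditions, under which $|K(u)K(u+c)f(x_i-hu)|\le\|f\|_\infty\|K\|_\infty|K(u)|\in L^1$ provides the needed domination; and if $K$ has bounded support, the product term is in fact identically zero once $h$ is small enough that the two translated supports become disjoint, so no limiting argument is needed at all. Everything else is routine bookkeeping with the change of variables.
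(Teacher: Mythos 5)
The paper gives no proof of this lemma; it is quoted verbatim from the cited reference \cite{cacoullos1966estimation}, and your argument is precisely the standard one used there (and in Parzen's original analysis): reduce the covariance to a single-summand term by independence, change variables $x=x_i-hu$, and observe that the product $K(u)K\bigl(u+\tfrac{x_j-x_i}{h}\bigr)$ collapses as the shift diverges. Your proof is correct under the standing kernel and boundedness assumptions you invoke, and you rightly flag the one delicate point (domination for the product term); note only that the exponent $p$ in the lemma's display is a typo for the dimension $n$, which you have implicitly corrected.
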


These two lemmas together implies that as $N\to\infty$, the estimation error $f_N-f$ is asymptotically Gaussian with zero mean and diagonal covariance.
Similar asymptotic normality and uncorrelatedness properties hold for the derivatives of $f_N-f$ \cite{singh1981speed}.
In particular, as $N\to\infty$,
\begin{align}\label{eq:derivative asymptotic normality}
\begin{split}
    &\sqrt{Nh^{n+2}}(\partial_jf_N(x)-\operatorname{E}[\partial_jf_N(x)])\\
    &\qquad\qquad\to\mathcal{N}\Big(0,f(x)\int\big(\partial_jK(u)\big)^{2} du\Big)
\end{split}
\end{align}
in distribution.

\subsection{Generalized inverse operators}
\label{section:generalized inverse}

We introduce the notion of generalized inverse operators of self-adjoint operators defined in \cite{mou2006two}, which is based on an orthogonal decomposition.
Let $\mathcal{H}$ be a Hilbert space and $\Theta:\mathcal{D}(\Theta)\subseteq\mathcal{H}\to\mathcal{H}$ be a self-adjoint operator, where $\mathcal{D}(\Theta)$ represents the domain of $\Theta$.
Let $\mathcal{R}(\Theta)$ and $\mathcal{N}(\Theta)$ to be the range and kernel of $\Theta$, respectively.
Since $\Theta$ is self-adjoint, $\mathcal{N}(\Theta)^{\perp}=\overline{\mathcal{R}(\Theta)}$ (and we always have $\Theta(\mathcal{D}(\Theta)\cap\overline{\mathcal{R}(\Theta)})\subseteq\mathcal{R}(\Theta)$).
Thus, under the decomposition $\mathcal{H}=\mathcal{N}(\Theta)\oplus\overline{\mathcal{R}(\Theta)}$, we have the following representation for $\Theta$:
\begin{equation*}
\Theta=\begin{pmatrix}
0 & 0 \\
0 & \widehat{\Theta}
\end{pmatrix}
\end{equation*}
where $\widehat{\Theta}:\mathcal{D}(\Theta)\cap\overline{\mathcal{R}(\Theta)}\subseteq\overline{\mathcal{R}(\Theta)}\to\overline{\mathcal{R}(\Theta)}$ is self-adjoint.
Now, the generalized inverse $\Theta^{\dagger}$ is defined by:
\begin{equation*}
\Theta^{\dagger}=\begin{pmatrix}
0 & 0 \\
0 & \widehat{\Theta}^{-1}
\end{pmatrix}
\end{equation*}
with domain
\begin{align*}
\mathcal{D}(\Theta^{\dagger})&=\mathcal{N}(\Theta)+\mathcal{R}(\Theta)\\
&\equiv\{u^0+u^1\mid u^0\in\mathcal{N}(\Theta), u^1\in\mathcal{R}(\Theta)\}\supseteq\mathcal{R}(\Theta).
\end{align*}
We have the following facts:

(i) $\Theta^{\dagger}$ is self-adjoint.

(ii) One has that
\begin{equation*}
\Theta \Theta^{\dagger} \Theta=\Theta, \quad \Theta^{\dagger} \Theta \Theta^{\dagger}=\Theta^{\dagger}, \quad(\Theta^{\dagger})^{\dagger}=\Theta
\end{equation*}

(iii) 
The operator $\Theta\Theta^{\dagger}:\mathcal{D}(\Theta^{\dagger})\to\mathcal{H}$ is an orthogonal projection onto $\mathcal{R}(\Theta)$.
We may naturally extend it, still denoted it by itself, to $\overline{\mathcal{D}(\Theta^{\dagger})}=\mathcal{H}$.
Hence, $\Theta\Theta^{\dagger}:\mathcal{H}\to\overline{\mathcal{R}(\Theta)} \subseteq \mathcal{H}$ is the orthogonal projection onto $\overline{\mathcal{R}(\Theta)}$. Similarly, we can extend $\Theta^{\dagger} \Theta$ to be an orthogonal projection from $\mathcal{H}$ onto $\overline{\mathcal{R}(\Theta^{\dagger})}=\mathcal{N}(\Theta^{\dagger})^{\perp}=\mathcal{N}(\Theta)^{\perp}=\overline{\mathcal{R}(\Theta)}$. Therefore, we have
\begin{equation*}
\Theta \Theta^{\dagger}=\Theta^{\dagger} \Theta \equiv P_{\overline{\mathcal{R}(\Theta)}} \equiv \text { orthogonal projection onto } \overline{\mathcal{R}(\Theta)}.
\end{equation*}


\subsection{Dynamic average consensus}
\label{section:dynamic average consensus}

We introduce the PI consensus estimator presented in \cite{freeman2006stability}.
Consider a group of $N$ agents where each agent has a local reference signal $u_{i}(t):[0, \infty) \to \mathbb{R}$.
The dynamic average consensus problem consists of designing an algorithm such that each agent tracks the time-varying average  $u_{\text{avg}}(t):=\frac{1}{N}\sum_{i=1}^{N}u_i(t)$.
The PI consensus algorithm is given by \cite{freeman2006stability}:
\begin{align}
    \dot{\nu}_{i} &=-\alpha\left(\nu_{i}-u_{i}\right)-\sum_{j=1}^{N} a_{i j}\left(\nu_{i}-\nu_{j}\right)+\sum_{j=1}^{N} b_{j i}\left(\eta_{i}-\eta_{j}\right)\nonumber \\
    \dot{\eta}_{i} &=-\sum_{j=1}^{N} b_{i j}\left(\nu_{i}-\nu_{j}\right)\label{eq:PI consensus estimator}
\end{align}
where $u_{i}$ is agent $i$'s reference input, $\eta_{i}$ is an internal state, $\nu_{i}$ is agent $i$'s estimate of $u_{\text{avg}}$, $[a_{i j}]_{N\times N}$ and $[b_{i j}]_{N\times N}$ are adjacency matrices of the communication graph, and $\alpha>0$ is a parameter determining how much new information enters the dynamic averaging process. 
The Laplacian matrices associated with $\left[a_{i j}\right]$ and $\left[b_{i j}\right]$ are represented by $\mathbf{L}_\mathbf{P}$ (proportional) and $\mathbf{L}_\mathbf{I}$ (integral) respectively. 
The PI estimator solves the consensus problem under constant (or slowly-varying) inputs, and remains stable for varying inputs in the sense of ISS.
Define the tracking error of agent $i$ by $\epsilon_{i}(t)=\nu_{i}(t)-u_{\text{avg}}(t),i=1,\dots,N$.
Decompose the error into the consensus direction $\mathbf{1}_{N}$ and the disagreement directions orthogonal to $\mathbf{1}_{N}$. 
Define the transformation matrix $\mathbf{T}=[(1/\sqrt{N}) \mathbf{1}_{N} ~\mathbf{R}]$ where $\mathbf{R} \in \mathbb{R}^{N \times(N-1)}$ is such that $\mathbf{T}^{\top} \mathbf{T}=\mathbf{T} \mathbf{T}^{\top}=\mathbf{I}_{N}$. 
Consider the change of variables
$$
\bar{\boldsymbol{\epsilon}}=\left[\begin{array}{c}
\bar{\epsilon}_{1} \\
\bar{\boldsymbol{\epsilon}}_{2: N}
\end{array}\right]=\mathbf{T}^{\top}\boldsymbol{\epsilon}, 
\quad 
\mathbf{w}=\left[\begin{array}{c}
w_{1} \\
\mathbf{w}_{2:N}
\end{array}\right]=\mathbf{T}^{\top} \mathbf{\eta},
$$
\begin{equation*}
    \mathbf{y}=\mathbf{w}_{2:N}-\alpha(\mathbf{R}^{\top}\mathbf{L}_{\mathbf{I}}^{\top}\mathbf{R})^{-1}\mathbf{R}^{\top}\dot{\mathbf{u}}
\end{equation*}
to write \eqref{eq:PI consensus estimator} in the equivalent form
\begin{align*}
    \begin{split}
        \dot{w}_{1}=& 0, \\
        \begin{bmatrix}
            \dot{\mathbf{y}} \\ 
            \dot{\bar{\epsilon}}_{1} \\ 
            \dot{\bar{\boldsymbol{\epsilon}}}_{2: N}
        \end{bmatrix}
        =&\underbrace{\begin{bmatrix}
        \mathbf{0} & \mathbf{0} & -\mathbf{R}^{\top} \mathbf{L}_{\mathbf{I}} \mathbf{R} \\ 
        0 & -\alpha & 0 \\ 
        \mathbf{R}^{\top} \mathbf{L}_\mathbf{I}^{\top} \mathbf{R} & 0 & -\alpha \mathbf{I}-\mathbf{R}^{\top} \mathbf{L}_\mathbf{P} \mathbf{R}
        \end{bmatrix}}_{\mathbf{A}}
        \begin{bmatrix}
        \mathbf{y} \\
        \bar{\epsilon}_{1} \\
        \bar{\boldsymbol{\epsilon}}_{2:N}
        \end{bmatrix}\\ 
        &+\underbrace{\begin{bmatrix}
        -\alpha(\mathbf{R}^{\top}\mathbf{L}_\mathbf{I}^{\top}\mathbf{R})^{-1} \\
        0 \\
        0
        \end{bmatrix}}_{\mathbf{B}}\mathbf{R}^{\top}\dot{\mathbf{u}}.
    \end{split} 
\end{align*}
The stability result is given as follows.

\begin{lemma}\label{lmm:ISS of PI estimator}
\cite{kia2019tutorial} Let $\mathbf{L}_\mathbf{I}$ and $\mathbf{L}_\mathbf{P}$ be Laplacian matrices of strongly connected and weight-balanced digraphs. 
Then
\begin{align*}
\begin{split}
    |\epsilon_{i}(t)|\leq & \kappa e^{-\lambda(t-t_{0})}\left\|\begin{bmatrix}
    \mathbf{w}_{2: N}(t_{0}) \\
    \overline{\boldsymbol{\epsilon}}(t_{0})
    \end{bmatrix}
    \right\| \\
    &+\frac{\kappa\|\mathbf{B}\|}{\lambda}\sup_{t_{0}\leq\tau\leq t}\left\|\left(\mathbf{I}_{N}-\frac{1}{N} \mathbf{1}_{N} \mathbf{1}_{N}^{\top}\right) \dot{\mathbf{u}}(\tau)\right\|
\end{split}
\end{align*}
where $\kappa,\lambda>0$ are constants depending on $\mathbf{A}$.
\end{lemma}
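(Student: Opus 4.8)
The plan is to regard the transformed dynamics as a finite-dimensional LTI system in the aggregated state $\mathbf{x}=(\mathbf{y},\bar{\epsilon}_{1},\bar{\boldsymbol{\epsilon}}_{2:N})$, namely $\dot{\mathbf{x}}=\mathbf{A}\mathbf{x}+\mathbf{B}\,\mathbf{R}^{\top}\dot{\mathbf{u}}$, to show that $\mathbf{A}$ is Hurwitz, and then to read off the claimed estimate from the variation-of-constants formula. Observe first that $w_{1}$ is constant and fully decoupled, so it plays no role, and that $\bar{\epsilon}_{1}$ obeys $\dot{\bar{\epsilon}}_{1}=-\alpha\bar{\epsilon}_{1}$, which is exponentially stable on its own; the whole difficulty is therefore contained in the $(\mathbf{y},\bar{\boldsymbol{\epsilon}}_{2:N})$ block.

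The first and main step is to prove that $\mathbf{A}$ is Hurwitz. I would use the candidate $V(\mathbf{x})=\|\mathbf{y}\|^{2}+\bar{\epsilon}_{1}^{2}+\|\bar{\boldsymbol{\epsilon}}_{2:N}\|^{2}$. Along the homogeneous dynamics the cross blocks $-\mathbf{R}^{\top}\mathbf{L}_{\mathbf{I}}\mathbf{R}$ and $\mathbf{R}^{\top}\mathbf{L}_{\mathbf{I}}^{\top}\mathbf{R}$ enter $\mathbf{A}+\mathbf{A}^{\top}$ with opposite signs and cancel, leaving
\begin{equation*}
\dot{V}=-2\alpha\bar{\epsilon}_{1}^{2}-\bar{\boldsymbol{\epsilon}}_{2:N}^{\top}\big(2\alpha\mathbf{I}+\mathbf{R}^{\top}(\mathbf{L}_{\mathbf{P}}+\mathbf{L}_{\mathbf{P}}^{\top})\mathbf{R}\big)\bar{\boldsymbol{\epsilon}}_{2:N}\le 0,
\end{equation*}
where weight-balancedness makes $\mathbf{L}_{\mathbf{P}}+\mathbf{L}_{\mathbf{P}}^{\top}$ positive semidefinite. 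Since $\dot{V}$ does not see $\mathbf{y}$, I would invoke LaSalle's invariance principle: on $\{\dot{V}=0\}$ one has $\bar{\epsilon}_{1}=0$ and $\bar{\boldsymbol{\epsilon}}_{2:N}=0$, and then invariance forces $\dot{\bar{\boldsymbol{\epsilon}}}_{2:N}=\mathbf{R}^{\top}\mathbf{L}_{\mathbf{I}}^{\top}\mathbf{R}\,\mathbf{y}=0$; because the integral graph is strongly connected and weight-balanced, $\mathbf{R}^{\top}\mathbf{L}_{\mathbf{I}}^{\top}\mathbf{R}$ (the Laplacian restricted to $\mathbf{1}_{N}^{\perp}$) is invertible, hence $\mathbf{y}=0$. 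Thus the origin is globally asymptotically, and being linear, globally exponentially stable, i.e., $\mathbf{A}$ is Hurwitz. This ``detectability'' observation --- that the undamped integral states $\mathbf{y}$ are seen through $\bar{\boldsymbol{\epsilon}}_{2:N}$ --- is the crux of the argument and the step I expect to be the main obstacle.

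Given that $\mathbf{A}$ is Hurwitz, there exist $\kappa,\lambda>0$ with $\|e^{\mathbf{A}t}\|\le\kappa e^{-\lambda t}$ for $t\ge 0$. The variation-of-constants formula gives
\begin{equation*}
\mathbf{x}(t)=e^{\mathbf{A}(t-t_{0})}\mathbf{x}(t_{0})+\int_{t_{0}}^{t}e^{\mathbf{A}(t-\tau)}\mathbf{B}\,\mathbf{R}^{\top}\dot{\mathbf{u}}(\tau)\,d\tau,
\end{equation*}
and bounding the convolution by pulling $\sup_{t_{0}\le\tau\le t}\|\mathbf{R}^{\top}\dot{\mathbf{u}}(\tau)\|$ out of the integral (using $\int_{t_{0}}^{t}e^{-\lambda(t-\tau)}d\tau\le 1/\lambda$) yields
\begin{equation*}
\|\mathbf{x}(t)\|\le\kappa e^{-\lambda(t-t_{0})}\|\mathbf{x}(t_{0})\|+\frac{\kappa\|\mathbf{B}\|}{\lambda}\sup_{t_{0}\le\tau\le t}\|\mathbf{R}^{\top}\dot{\mathbf{u}}(\tau)\|.
\end{equation*}

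Finally I would transfer this back to $\epsilon_{i}$. Since $\mathbf{T}$ is orthogonal, $|\epsilon_{i}(t)|\le\|\boldsymbol{\epsilon}(t)\|=\|\mathbf{T}^{\top}\boldsymbol{\epsilon}(t)\|=\|\bar{\boldsymbol{\epsilon}}(t)\|\le\|\mathbf{x}(t)\|$; and $\mathbf{R}\mathbf{R}^{\top}=\mathbf{I}_{N}-\tfrac{1}{N}\mathbf{1}_{N}\mathbf{1}_{N}^{\top}$ together with $\mathbf{R}^{\top}\mathbf{R}=\mathbf{I}_{N-1}$ gives $\|\mathbf{R}^{\top}\dot{\mathbf{u}}\|=\|(\mathbf{I}_{N}-\tfrac{1}{N}\mathbf{1}_{N}\mathbf{1}_{N}^{\top})\dot{\mathbf{u}}\|$. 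The only bookkeeping point is that $\mathbf{x}(t_{0})$ contains $\mathbf{y}(t_{0})=\mathbf{w}_{2:N}(t_{0})-\alpha(\mathbf{R}^{\top}\mathbf{L}_{\mathbf{I}}^{\top}\mathbf{R})^{-1}\mathbf{R}^{\top}\dot{\mathbf{u}}(t_{0})$, so $\|\mathbf{x}(t_{0})\|$ is dominated by $\|[\mathbf{w}_{2:N}(t_{0});\bar{\boldsymbol{\epsilon}}(t_{0})]\|$ plus a constant multiple of $\|\mathbf{R}^{\top}\dot{\mathbf{u}}(t_{0})\|$, which is absorbed into the supremum term evaluated at $\tau=t_{0}$. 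Collecting the constants into $\kappa$ gives exactly the stated bound.
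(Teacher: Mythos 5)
Your proof is correct. Note that the paper itself does not prove this lemma --- it is quoted verbatim from \cite{kia2019tutorial} --- but the surrounding setup (the change of variables to $(\mathbf{y},\bar{\epsilon}_1,\bar{\boldsymbol{\epsilon}}_{2:N})$ and the explicit matrices $\mathbf{A}$ and $\mathbf{B}$) exists precisely so that the bound reads off from $\|e^{\mathbf{A}t}\|\leq\kappa e^{-\lambda t}$ and variation of constants, which is exactly the route you take. The substantive step you supply, and which the paper delegates to the reference, is that $\mathbf{A}$ is Hurwitz; your Lyapunov/LaSalle argument for this is sound: the skew cross-terms in $\mathbf{L}_{\mathbf{I}}$ cancel in $\dot V$, weight-balancedness gives $\mathbf{L}_{\mathbf{P}}+\mathbf{L}_{\mathbf{P}}^{\top}\succeq 0$, and invertibility of $\mathbf{R}^{\top}\mathbf{L}_{\mathbf{I}}^{\top}\mathbf{R}$ (kernel of the Laplacian of a strongly connected, weight-balanced digraph is exactly $\operatorname{span}\{\mathbf{1}_N\}$, which $\mathbf{R}$ excludes) recovers $\mathbf{y}=0$ on the invariant set. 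Your final bookkeeping point is also the right one to flag: the stated bound has $\mathbf{w}_{2:N}(t_0)$ rather than $\mathbf{y}(t_0)$ in the transient term, so the $\alpha(\mathbf{R}^{\top}\mathbf{L}_{\mathbf{I}}^{\top}\mathbf{R})^{-1}\mathbf{R}^{\top}\dot{\mathbf{u}}(t_0)$ discrepancy must be absorbed into the supremum term; since $\|\mathbf{B}\|=\alpha\|(\mathbf{R}^{\top}\mathbf{L}_{\mathbf{I}}^{\top}\mathbf{R})^{-1}\|$ this only perturbs the constant in front of the gain term, consistent with the lemma's loose specification of $\kappa$ and $\lambda$.
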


For switching networks, each switch introduces a transient to the estimator error, and $\epsilon_i$ is still ISS. 
The PI consensus estimator is also robust to initialization errors and permanent agent dropout in the sense that \cite{kia2019tutorial}:
\begin{equation}\label{eq:robust to initialization}
    \sum_{i=1}^{N} \nu_{i}(t)=\sum_{i=1}^{N} u_{i}(t)+e^{-\alpha\left(t-t_{0}\right)}\left(\sum_{i=1}^{N} \nu_{i}(t_{0})-\sum_{i=1}^{N} u_{i}(t_{0})\right).
\end{equation}

\bibliographystyle{IEEEtran}
\bibliography{References}

\end{document}